\newtheorem{theorem}{Theorem}[section]
\newtheorem{lemma}[theorem]{Lemma}
\newtheorem{observation}[theorem]{Observation}
\newtheorem{conjecture}[theorem]{Conjecture}
\newtheorem{proposition}[theorem]{Proposition}
\newtheorem{problem}[theorem]{Problem}
\newtheorem{remark}[theorem]{Remark}
\newtheorem{construction}[theorem]{Construction}
\theoremstyle{definition}
\newtheorem{definition}[theorem]{Definition}
\newcommand{\blfd}[2]{$(#1,#2)$-BLFD}
\newcommand{\rej}[1]{$(\cF\cF,#1)$-rejector}
\newcommand{\dec}[2]{$(F_{#1}, F_{#2})$-decomposition}
\newcommand{\mdec}[1]{$({#1}, 1)$-BLF decomposition}
\newcommand{\sdec}[1]{$#1$-MBSFD}
\newcommand{\kor}[1]{$#1$-$OR$}
\newcommand{\mfor}[1]{$(\cM,#1)$-forcer}
\newcommand{\ffor}[2]{$(#1 \cF,#2)$-forcer}
\newcommand{\khalf}{\lfloor \frac{k}{2} \rfloor}
\newcommand{\true}{\text{TRUE}}
\newcommand{\false}{\text{FALSE}}
\newcommand{\la}{\text{la}}
\newcommand{\cM}{\mathcal{M}}
\newcommand{\cF}{\mathcal{F}}
\newcommand{\cC}{\mathcal{C}}
\newcommand{\cP}{\mathcal{P}}
\title{The complexity of decomposing a graph into a matching and a bounded linear forest}
\author{Agnijo Banerjee\thanks{\{ab2558, am2758, jp895, jp899\}@cam.ac.uk, Department of Pure Mathematics and Mathematical Statistics (DPMMS), University of Cambridge, Wilberforce Road, Cambridge, CB3 0WA, United Kingdom} , 
Jo\~ao Pedro Marciano\thanks{{joao.marciano@impa.br}, IMPA, Estrada Dona Castorina 110, Jardim Botânico, Rio de Janeiro, RJ, Brazil} , 
Adva Mond$^{*}$, 
Jan Petr$^{*}$,
Julien Portier$^{*}$}
\date{}
\begin{document}

\maketitle

\begin{abstract}
Deciding whether a graph can be edge-decomposed into a matching and a $k$-bounded linear forest was recently shown by Campbell, H{\"o}rsch and Moore to be NP-complete for every $k \ge 9$, and solvable in polynomial time for $k=1,2$.
In the first part of this paper, we close this gap by showing that this problem is in NP-complete for every $k \ge 3$. 
In the second part of the paper, we show that deciding whether a graph can be edge-decomposed into a matching and a $k$-bounded star forest is polynomially solvable for any $k \in \mathbb{N} \cup \{ \infty \}$, answering another question by Campbell, H{\"o}rsch and Moore from the same paper.
\end{abstract}

\section{Introduction}

Decomposing the edge set of a graph into parts which span simpler subgraphs is a central problem in graph theory.
Deciding whether the edge set of a graph can be decomposed into isomorphic copies of a given graph $H$ is known to be NP-complete whenever $H$ contains a connected component with $3$ edges or more~\cite{dor1992graph}.
However, when considering edge-decompositions into copies of subgraphs taken from a larger family, the problem sometimes becomes easier.
For instance, the celebrated Nash-Williams Theorem~\cite{nash1964decomposition} characterises when a graph can be decomposed into $k$ forests, and this can be checked in polynomial time.

Perhaps the simplest family to consider in this context is the family consisting of all matchings, as a decomposition of a graph into matchings is equivalent to a proper colouring of its edges.
An immediate generalisation of the family of matchings would be families of \emph{linear forests}, which are forests consisting of vertex-disjoint paths.
We say that a linear forest is \emph{$k$-bounded} if each of its components is a path consisting of at most $k$ edges.
In~\cite{campbell2023decompositions} Campbell, H{\"o}rsch and Moore were investigated decompositions of graphs into two linear forests of bounded sizes. 
More precisely, they studied the computational complexity of the following problem class:

\paragraph*{$(k,l)$-bounded linear forest decomposition problem (\blfd{k}{l}).}
Given a graph $G$, is there a decomposition of its edge set into a $k$-bounded linear forest and an $l$-bounded linear forest? \\

As it is trivially impossible to decompose a graph with maximum degree at least $4$ into a matching and a linear forest, the inputs can be restricted to subcubic graphs.

Before we discuss the complexity of \blfd{k}{l}, we briefly mention a few related existence results. Akiyama, Exoo and Harary~\cite{akiyama1980covering} studied the linear arboricity $\la(G)$ of a graph $G$, which is the minimum number of parts in an edge-decomposion of $G$ into linear forests, and posed the \emph{Linear Arboricity Conjecture} which states that $\la(G) \le \left\lceil \frac{\Delta(G)+1}{2} \right\rceil$ for every graph $G$.
This conjecture is still open, and the best upper bound known so far is due to Lang and Postle~\cite{lang2020improved} who proved that $\la(G) \le \frac{\Delta}{2} + O(\sqrt{\Delta} \log^4\Delta)$ where $\Delta=\Delta(G)$.
A main result in decomposing subcubic graphs into linear forests is by Thomassen~\cite{thomassen1999two}, stating that the edges of any cubic graph can be decomposed into two $5$-bounded linear forests.
We refer the reader to~\cite{balogh2005covering,gonccalves2009covering,kronenberg2022decomposing,wormaldconjecture} for further related and more recent results.
Furthermore, a detailed account of general results concerning edge decompositions of graphs can be found in a survey by Glock, K\"{u}hn and Osthus~\cite{glock2021extremal}.

 Transcripted with our notation, Péroche~\cite{peroche1984np} showed that
\blfd{\infty}{\infty} is NP-complete.
Bermond, Fouquet, Habib and P\'{e}roche~\cite{bermond1984linear} proved that \blfd{3}{3} is NP-complete and conjectured that \blfd{k}{k} is NP-complete for every $k \ge 2$.
Jiang, Jiang and Jiang~\cite{jiang2021decomposing} verified the case $k=2$, and later on Campbell, H\"{o}rsch and Moore~\cite{campbell2023decompositions} settled the conjecture for any $k\ge 2$.
In fact, they proved a more general result, showing that \blfd{k}{l} is NP-complete if $k,l \ge 2$, or $l=1$ and $k \geq 9$, and polynomial if $l=1$ and $k \le 2$.
This solves the general problem for most cases, leaving a gap of $3 \le k \le 8$ and $l=1$ of last unsolved cases.
In this paper we close this gap.

\begin{theorem}\label{thm:nonplanar}
    \blfd{k}{1} is NP-complete for all $k \geq 3$.
\end{theorem}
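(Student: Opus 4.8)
The plan is to prove NP-completeness via reduction from a suitable NP-complete problem, most likely some variant of 3-SAT or an edge-coloring/matching-type problem on subcubic graphs. Membership in NP is immediate: given a graph $G$, one can verify in polynomial time that a proposed partition of $E(G)$ into two edge sets forms a matching and a $k$-bounded linear forest. So the whole task is hardness.

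First I would set up the reduction framework using \emph{gadgets}. The key observation is that a \blfd{k}{1} decomposition assigns each edge one of two ``colors'': matching ($M$) or linear forest ($F$). The matching color forbids two incident edges from both being $M$, while the $F$-color forbids both a vertex of degree $3$ from having all three edges in $F$ (since that would create a degree-$3$ vertex in the forest) \emph{and} forbids any path of length exceeding $k$ in the forest. These local constraints are what I would encode logical variables and clauses into. Given the paper's own notation introducing \emph{forcers}, \emph{rejectors}, and \emph{decomposition} macros, I expect the intended approach builds reusable gadgets: a \textbf{variable gadget} whose decompositions encode a Boolean truth value (via which of two edges is forced into $M$ versus $F$), a \textbf{forcer gadget} that forces a particular edge into a particular part, and a \textbf{clause gadget} that is decomposable if and only if at least one incident literal-edge carries the satisfying color. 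The $k$-bound on the linear forest is the crucial tool: a long path constructed in a gadget can be made to have length exactly $k$ or $k+1$ depending on a local choice, so that the $k$-bound becomes active precisely when we want to forbid a configuration.

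The key steps, in order, would be: (1) construct the forcer gadget that pins an edge's color, verifying it admits a valid decomposition only with the intended orientation; (2) build the variable gadget from forcers so that in any valid decomposition a designated output edge is consistently $M$ or consistently $F$, representing \true{} or \false{}, and ensure the value propagates to multiple clause occurrences (possibly via a copying/fan-out gadget respecting subcubicity); (3) build the clause gadget wired to three literal-edges so that it is decomposable exactly when the clause is satisfied --- here the path-length bound $k$ enforces that not all three literals can be ``false''; (4) assemble these into a graph $G_\phi$ from a 3-SAT formula $\phi$, argue the construction is polynomial-size and subcubic, and prove the two-directional equivalence: $\phi$ is satisfiable iff $G_\phi$ admits a \blfd{k}{1} decomposition.

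The main obstacle, and where I would concentrate the technical effort, is making a \emph{single uniform construction work for all $k \ge 3$} rather than building separate gadgets per value of $k$. The difficulty is that for small $k$ (especially $k=3$) the linear-forest constraint is very tight, leaving little room to route long paths, while the gadgets must still respect the degree-$3$ bound everywhere. I would design gadgets parameterized by $k$, typically by inserting a path of length roughly $k$ whose ``saturation'' triggers the bound, and then carefully check the boundary cases $k=3,4$ by hand to confirm no unintended decomposition exists. Verifying the \emph{local exhaustiveness} of each gadget (that the only valid colorings are the intended ones, with no parasitic solutions that would break the reduction) is the most error-prone part, and I expect the bulk of the proof to be a careful case analysis establishing exactly which edges each gadget forces.
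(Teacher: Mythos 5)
Your overall strategy matches the paper's in outline (NP membership is trivial; hardness comes from a 3-SAT reduction built on forcer gadgets and variable gadgets), but the proposal stops exactly where the actual proof begins: for this theorem the content \emph{is} the gadget constructions, and none is constructed or verified. Three of your structural choices are also either unnecessary or strictly harder than what is needed. First, you do not need a clause gadget whose decomposability is governed by the path-length bound: in the paper a clause is a single vertex of degree $3$, and since a linear forest has maximum degree $2$, any \mdec{k} must place at least one of its three incident edges in the matching --- that matching edge is precisely ``at least one true literal'' (clauses of size $2$ get a pendant \ffor{1}{k} to raise the degree to $3$). Second, by reducing from generic 3-SAT you commit yourself to a fan-out/copying gadget to propagate a variable's value to arbitrarily many clauses while staying subcubic; the paper avoids this entirely by reducing from Planar $(\le 3,3)$-SAT (each variable occurs exactly twice positively and once negatively), so a variable gadget has exactly three boundary edges and no copying is ever required. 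Third, you ask the variable gadget to force a \emph{consistent} truth value on all its output edges; that is stronger than necessary and harder to build. The paper's gadget only forbids ``clashing'': no \mdec{k} may place a positive input edge and the negative input edge simultaneously in $F_k$. This weaker property suffices for the reduction and is realized by composing a \kor{k} gadget (whose output edge is in $F_k$ iff some positive input edge is) with an \rej{k} (which admits no decomposition with both of its designated edges in $F_k$).

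The remaining gap is the one you yourself flag as the hard part but do not resolve: making the gadgets work for \emph{every} $k \ge 3$ with simple graphs. The paper's \mfor{k}s require genuinely different constructions in the ranges $k=3$, $k=4,5$, $k=6,7$, and $k \ge 8$, and the \ffor{\ell}{k}s are then built by chaining $\ell-1$ copies of the \mfor{k} along a path; the \rej{k} needs $\khalf$ copies of the \ffor{(k-2)}{k} arranged so that any attempt to keep both boundary edges in $F_k$ creates a path of length exceeding $k$. Saying you would ``insert a path of length roughly $k$'' and ``check $k=3,4$ by hand'' names the difficulty without overcoming it: until those gadgets are exhibited and their exhaustive case analyses carried out (including showing each gadget \emph{does} admit the intended decompositions, not only that it excludes the unintended ones --- both directions of the reduction need this), there is no proof. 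Finally, note that the paper proves the stronger statement that Planar \blfd{k}{1} is NP-complete, which requires both the planar SAT variant (\Cref{thm:NPcomplete3bounded3SAT}) and planarity of every gadget; your plan, as written, would yield only the non-planar statement even once completed.
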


Connecting this theorem to previous results, we obtain that \blfd{k}{l} exhibits the following dichotomy.

\begin{theorem}
\label{thm:classification}
    \blfd{k}{l} is NP-complete if $k+l \ge 4$, and can be solved in polynomial time otherwise.
\end{theorem}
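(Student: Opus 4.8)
The plan is to derive the dichotomy of \Cref{thm:classification} by assembling \Cref{thm:nonplanar} with the previously known results, after two preliminary observations. First I would record that \blfd{k}{l} lies in NP for every choice of parameters: a candidate solution is simply a $2$-colouring of $E(G)$, and one verifies in polynomial time that each colour class has maximum degree at most $2$, is acyclic, and has all of its paths of length at most $k$ (respectively $l$). Second, I would note the trivial symmetry that \blfd{k}{l} and \blfd{l}{k} are the same problem, which lets me assume $k \ge l$ throughout the case analysis.

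For the tractable side, suppose $k + l \le 3$. Combined with $k \ge l \ge 1$ this forces $(l,k) \in \{(1,1),(1,2)\}$, i.e.\ $l = 1$ and $k \le 2$. These are exactly the polynomially solvable cases resolved by Campbell, H\"{o}rsch and Moore~\cite{campbell2023decompositions} (the pair $(1,1)$ being nothing more than $2$-edge-colourability), so under $k+l \le 3$ the problem is in P.

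For the hardness side, suppose $k + l \ge 4$, again with $k \ge l$. If $l = 1$, then $k + 1 \ge 4$ gives $k \ge 3$, and NP-completeness is precisely \Cref{thm:nonplanar}. If instead $l \ge 2$, then $k \ge l \ge 2$, and NP-completeness is exactly the range $k,l \ge 2$ established in~\cite{campbell2023decompositions}. Since $k \ge l$ guarantees that every pair with $k + l \ge 4$ falls into one of these two regimes, the case split is exhaustive and the classification follows.

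The point worth emphasising is that the substantive work has already been done in \Cref{thm:nonplanar}; \Cref{thm:classification} is essentially a bookkeeping step that grafts the new bound $k \ge 3$ (for $l = 1$) onto the existing tractability and hardness results, and the only genuine care needed is to confirm the exhaustiveness of the case analysis. The one slightly delicate point, which I would treat as the main obstacle, arises if the parameters are permitted to take the value $\infty$: here I would invoke P\'{e}roche's result~\cite{peroche1984np} that \blfd{\infty}{\infty} is NP-complete and argue that the mixed infinite cases follow from the same constructions as the finite hard instances, since hardness does not transfer monotonically as the path-length bound grows and so cannot simply be inherited from a smaller finite case.
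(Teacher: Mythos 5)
Your main argument is correct and is essentially the paper's own (implicit) derivation: the paper obtains \Cref{thm:classification} by exactly this bookkeeping, namely NP membership, the symmetry between $k$ and $l$, the polynomial cases $l=1$, $k\le 2$ and the hardness range $k,l\ge 2$ from \cite{campbell2023decompositions}, with \Cref{thm:nonplanar} supplying the remaining cases $l=1$, $k\ge 3$. Your case split is exhaustive and the finite-parameter part of your proof is fine.

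The place where you go astray is the final paragraph on infinite parameters. The claim that the mixed infinite cases ``follow from the same constructions as the finite hard instances'' is false for the constructions of this paper: every gadget in Section~\ref{sec:NPComplete} (the \mfor{k}, the \ffor{\ell}{k}, the \kor{k} and the \rej{k}) derives its forcing property from a contradiction of the form ``otherwise $F_k$ would contain a path of length greater than $k$'', and all of these arguments evaporate when $k=\infty$. The paper acknowledges precisely this in the concluding remarks: its constructions require $k$ to be finite, and Planar \blfd{\infty}{1} is instead handled by altering the construction of \cite{campbell2023decompositions}, substituting the simple-graph forcers built here for their multigraph ones. Your justification is also internally inconsistent: the (correct) observation that hardness does not transfer monotonically in the path-length bound is a reason why the infinite case \emph{cannot} be waved through by pointing at finite instances, not a reason why it can. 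If \Cref{thm:classification} is read as ranging over finite $k,l$ --- the reading consistent with the paper, since the gap being closed is $3\le k\le 8$, $l=1$ --- then this paragraph is unnecessary but harmless; if $\infty$ is meant to be in scope, your treatment of \blfd{\infty}{1} is a genuine gap.
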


We derive \Cref{thm:nonplanar} from an even stronger result. We define the problem \emph{Planar \blfd{k}{l}} to be the problem \blfd{k}{l} when the input is restriced to planar graphs $G$, about which we prove the following result.

\begin{theorem}
\label{thm:main}
    Planar \blfd{k}{1} is NP-complete for all $k \ge 3$.
\end{theorem}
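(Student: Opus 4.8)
The plan is to prove NP-hardness of Planar \blfd{k}{1} by a polynomial-time reduction from a known NP-complete restriction of a planar satisfiability or constraint problem, most plausibly Planar 3-SAT or Planar (1-in-3)-SAT, where the incidence graph is planar. Since membership in NP is immediate (a candidate decomposition into a $k$-bounded linear forest and a matching can be verified in polynomial time), the entire content is hardness. Let me sketch the reduction.

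I would build, for each instance $\varphi$ of the chosen planar SAT variant, a subcubic planar graph $G_\varphi$ out of small constant-size \emph{gadgets} glued together along a planar layout that mirrors the planar incidence structure of $\varphi$. The core of the construction is a set of \emph{forcing gadgets} and \emph{wires}. A wire is a path-like subgraph along which, in any valid decomposition, a Boolean value propagates: each edge must go either into the matching part or into the $k$-bounded linear forest part, and the local constraints at each vertex (maximum degree $3$, each vertex covered at most once by the matching, paths in the linear forest of length at most $k$) force a consistent ``true/false'' assignment to be carried from one end to the other. Concretely, I expect to design a variable gadget that admits exactly two valid local decompositions corresponding to the two truth values, and to design clause gadgets whose valid decompositions exist if and only if the clause is satisfied. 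The parameter $k \ge 3$ enters because the length bound must be large enough for the path-type part to behave as an unbounded-length forest locally (so the $1$-bounded matching part does the delicate constraining work), yet the construction must be uniform across all $k \ge 3$; I would likely insert short ``buffer'' paths or degree-$2$ subdivisions so that no path component is ever forced to exceed $k$ edges regardless of the propagated values, which is the reason the bound $k=2$ resists this approach and $k\ge 3$ is the threshold.

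The key steps, in order, are: (i) fix the source problem and recall its NP-completeness and planarity of incidence graph; (ii) construct the variable/wire gadget and prove a \emph{gadget correctness lemma} stating that its valid partial decompositions are exactly the two truth-value states, with the boundary behaviour (which edges are left for the matching at the attachment points) fully characterised; (iii) construct splitter/copy gadgets that let one variable's value fan out to all clauses containing it while preserving planarity and subcubicity; (iv) construct the clause gadget and prove it is \emph{satisfiable-locally} exactly when at least one incoming wire carries the satisfying value; (v) assemble $G_\varphi$ by placing gadgets according to a planar embedding of the incidence graph, checking that $G_\varphi$ is planar, subcubic, and of size polynomial in $|\varphi|$; and finally (vi) prove the two directions of correctness: a satisfying assignment yields a valid \blfd{k}{1} decomposition, and conversely any valid decomposition induces a consistent assignment satisfying $\varphi$ via the gadget correctness lemmas.

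The main obstacle I anticipate is maintaining all three restrictions \emph{simultaneously} — planarity, subcubicity, and the hard length bound $k$ — while still getting genuinely forced (two-state) behaviour along wires. In a non-planar or higher-degree setting one has more freedom to route ``feedback'' edges that pin down truth values, so the crux is designing gadgets that are rigid using only degree-$3$ vertices and that can be embedded in the plane along the given incidence embedding without crossings. A second, more technical difficulty is making the gadgets \emph{uniform in $k$}: I would aim for a construction whose forced behaviour does not depend on the exact value of $k$ (for all $k \ge 3$), most likely by ensuring every linear-forest path component within a gadget has length bounded by a small constant independent of $k$, so the constraint ``at most $k$ edges'' is automatically satisfied on the path side and all the combinatorial tension is concentrated in the matching side and in the degree-$3$ bottlenecks. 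Verifying the clause gadget's exactness — that it blocks precisely the all-false (or non-satisfying) configurations and nothing else — is where I expect the bulk of the casework to lie, and I would discharge it by a careful enumeration of the constant-many local decompositions of each gadget.
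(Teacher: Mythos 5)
Your proposal lays out the right high-level template (reduce from a planar SAT variant, build variable and clause gadgets, prove two-way correctness), but it contains a genuine gap: none of the gadgets is actually constructed, and the one concrete design commitment you do make is precisely the step that would fail --- or at least the step whose feasibility \emph{is} the whole difficulty of the theorem. You propose to make the construction uniform in $k$ by ``ensuring every linear-forest path component within a gadget has length bounded by a small constant independent of $k$'', so that all forcing comes from the degree-$3$/matching constraints. For a fixed gadget graph used for every $k \ge 3$, this amounts to two simultaneous demands: (i) the forcing must hold with \emph{no} length bound at all (if any unbounded decomposition violated it, that decomposition would be $k$-bounded for some large finite $k$, contradicting forcing at that $k$), so it may only use matching, degree and acyclicity constraints; and (ii) the gadget must admit valid decompositions whose paths all have length at most $3$, or it is a no-instance when $k=3$. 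These demands pull against each other. The known acyclicity-based rigid structures --- the $K_4$-minus-an-edge ``diamonds'' used in the paper's \mfor{k} for $k \ge 8$ and in Campbell--H{\"o}rsch--Moore --- do force correctly without any length bound, but one can check that \emph{every} valid decomposition of such a forcer contains a path of length $4$ through each diamond, hence of length $8$ through the whole gadget; this is exactly why such approaches only work for $k$ around $8$ or $9$, and why the range $3 \le k \le 8$ was left open. Your proposal gives no construction escaping this tension and would stall exactly there; it defers the clause-gadget and variable-gadget casework (``I expect to design\dots'', ``where I expect the bulk of the casework to lie''), which is where the mathematics lives.

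The paper resolves the difficulty by doing the opposite of what you propose: it embraces $k$-dependence and uses the length bound itself as the forcing mechanism. It constructs \mfor{k}s with separate constructions for $k=3$, $k=4,5$, $k=6,7$ and $k\ge 8$, and \ffor{\ell}{k}s that pin a vertex to be the endpoint of exactly an $\ell$-path in the forest; from these it assembles a \kor{k} gadget and an \rej{k} whose key impossibility argument runs along a chain of $\khalf$ blocks, each carrying a \ffor{(k-2)}{k}, and derives contradictions by exhibiting forced paths of length exceeding $k$. Thus the gadget sizes and the correctness arguments genuinely depend on $k$, in contrast with your plan to neutralise $k$ entirely. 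A second, smaller divergence: the paper reduces from Planar $(\le 3,3)$-SAT, in which every variable occurs exactly twice positively and once negatively, so each variable gadget has exactly two positive input vertices and one negative input vertex and no splitter/fan-out gadgets are needed; with plain Planar 3-SAT, as you suggest, you would additionally have to construct and verify the copy gadgets you mention, which again is left unspecified.
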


Clearly, \Cref{thm:main} implies \Cref{thm:nonplanar}.

Another question raised in~\cite{campbell2023decompositions} is whether one could prove similar results when considering a matching and a $k$-bounded star forest instead of a $k$-bounded linear forest.

\paragraph*{Matching and $k$-bounded star forest decomposition problem (\sdec{k}).}
Given a graph $G$, is there a decomposition of its edge set into a matching and a $k$-bounded star forest? \\

The case $k=2$ is clearly equivalent to the \blfd{2}{1} problem, which was shown to be polynomial time solvable in \cite{campbell2023decompositions}. 
By a direct generalisation of their proof, we answer Campbell, H{\"o}rsch and Moore's question.

\begin{theorem}
\label{thm:DecompositionMatchingStarForest}
   $k$-MBSFD is solvable in polynomial time for every $k \in \mathbb{N} \cup \{\infty \}$.
\end{theorem}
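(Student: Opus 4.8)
The plan is to prove that $k$-MBSFD is polynomial-time solvable for every $k \in \mathbb{N} \cup \{\infty\}$ by reducing it to a suitable matching or $T$-join style problem, generalising the approach used for \blfd{2}{1} in \cite{campbell2023decompositions}. The key observation is that a $k$-bounded star forest is a forest in which every component is a star with at most $k$ edges, so a valid decomposition asks us to partition $E(G)$ into a set $M$ (the matching) and a set $S$ (the star forest) such that $M$ is a matching and $S$ induces a forest of stars of bounded size. I would first reformulate the condition locally at each vertex: if we orient each star edge towards its centre, then the star-forest condition says precisely that the edges of $S$ can be oriented so that every vertex has out-degree at most $1$ (each edge points to its centre, and a leaf has no outgoing star edges except possibly one) and every centre has in-degree at most $k$. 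Equivalently, $S$ is a $k$-bounded star forest if and only if $S$ admits an orientation where each vertex has out-degree at most $1$ and in-degree at most $k$.

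\medskip

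First I would set up the decision framework: the matching $M$ covers each vertex at most once, so at every vertex $v$ of degree $d(v)$, at least $d(v)-1$ of its incident edges must belong to $S$. The crucial local constraint is therefore that each vertex can be the \emph{leaf endpoint} (the non-centre side) of at most one $S$-edge that it does not ``control''; more precisely, assigning each edge of $S$ a centre endpoint, every vertex is the centre of at most $k$ edges and is a leaf of at most $\dots$ I would make this precise by introducing, for each edge $e=uv$, a variable recording whether $e \in M$, whether $e \in S$ with centre $u$, or $e \in S$ with centre $v$. The constraints become: the $M$-edges form a matching (at most one per vertex); the number of $S$-edges centred at any vertex is at most $k$; and every vertex is a leaf of at most one star (since a leaf lies in exactly one star, and if $v$ is a centre it is a leaf of none). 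This last condition—that no vertex is simultaneously a centre and a leaf, and is a leaf at most once—is exactly what makes $S$ a genuine star forest rather than an arbitrary bounded-degree subgraph.

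\medskip

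Next I would encode these local constraints as a degree-constrained subgraph / general factor problem, or route them through a reduction to maximum matching on an auxiliary graph, exactly as in the $k=2$ argument of \cite{campbell2023decompositions}. The natural device is to build a gadget at each vertex that enforces ``at most one incident edge is an $M$-edge'' together with ``the remaining $S$-edges split into at most $k$ centred here and at most one leaf-edge''. Because all the constraints are upper bounds on degrees with a fixed small pattern, this fits the framework of Cornuéjols' general factor theorem with degree lists that are intervals (all our allowed degree sets are of the form $\{0,1,\dots\}$ up to caps), and such interval general-factor problems are solvable in polynomial time via a reduction to ordinary matching. I would therefore reduce a given instance $G$ to an auxiliary graph $H$ whose perfect matchings (or maximum matchings meeting a threshold) correspond bijectively to valid $k$-MBSFD decompositions of $G$, and invoke the polynomial-time matching algorithm.

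\medskip

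The main obstacle I anticipate is handling the orientation/centre-assignment consistency rather than the degree caps themselves: the condition ``every edge of $S$ has a chosen centre, every vertex is a leaf of at most one edge'' is a global consistency requirement that does not immediately decompose into independent vertex-local interval constraints, since an edge's status couples its two endpoints. The heart of the proof will be designing the vertex gadget so that the ``leaf at most once'' and ``centre at most $k$'' conditions are simultaneously enforceable by a local degree list while still permitting a global matching; I expect this to require a careful gadget with auxiliary vertices that absorb the slack (the edges a vertex neither matches nor centres), together with a verification that the allowed local degree sets remain intervals so that the general-factor-to-matching reduction applies. Once the gadget is shown correct for general $k$ (the $k=\infty$ case being the simplest, as the centre cap disappears), the polynomial running time follows immediately from the complexity of maximum matching on $H$, and I would close by checking the two boundary cases $k=\infty$ and small $k$ directly to confirm the gadget degenerates correctly.
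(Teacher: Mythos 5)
Your proposal identifies the right tool---Cornu\'ejols' general factor theorem, which is exactly what the paper uses---but it stops short of an actual proof: the reduction itself is never constructed. You explicitly defer ``the heart of the proof'' (a vertex gadget enforcing that no vertex is simultaneously a star centre and a leaf, and a leaf at most once), and this is precisely where your plan runs into trouble. Your three-state per-edge encoding ($e \in M$, or $e \in S$ centred at $u$, or $e \in S$ centred at $v$) does not fit the factor framework directly: a general factor solution is a set of edges, each in or out, constrained only by vertex degrees, so the ``centre XOR leaf'' condition is a disjunctive constraint coupling two different kinds of incident edges at a vertex, not a degree constraint given by a single set. Your hope that all degree sets ``remain intervals'' so that a reduction to ordinary matching suffices is also doubtful: the paper's own reduction needs the genuinely gapped set $\{0,2\}$ (for paths of length $3$), which is exactly why it invokes the small-gap general factor theorem rather than interval-constrained factors. (Also, your opening equivalence is false as stated: any path or cycle admits an orientation with out-degree at most $1$ and in-degree at most $k$; you do correct this later by adding the centre/leaf exclusivity condition.)

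The idea your proposal is missing---and what dissolves the consistency problem you worry about---is structural: since at most one edge at each vertex can lie in $M$, every vertex of degree at least $3$ has at least two incident star-forest edges and is therefore forced to be the centre of its star. Hence the centre/leaf ambiguity exists only along paths whose interior vertices have degree $2$ (and along cycles meeting at most one high-degree vertex). The paper's construction contracts each such path $P$ into a single auxiliary vertex $y_P$ of a bipartite graph whose other side is the set of vertices of degree at least $3$; selecting the edge $x y_P$ encodes that the end-edge of $P$ at $x$ is a matching edge, and feasibility of completing the decomposition along $P$ is a purely local condition on how many of its end-edges are matched, recorded by a small gap set $A_{y_P}$ depending on the length of $P$ (with $A_x = \{1\}$ forcing a matching edge at vertices of degree $k+1$, and $A_x = \{0,1\}$ otherwise). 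Without this, or an equally effective idea for eliminating the centre-assignment coupling, your outline remains a plan rather than a proof.
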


The rest of this paper is organised as follows. In Section \ref{sec:NPComplete} we prove \Cref{thm:main}, starting with a gentle outline of the proof. In Section \ref{sec:StarForestinP} we prove \Cref{thm:DecompositionMatchingStarForest} and in Section \ref{sec:ConcludingRemarks} we state some open problems.

\section{NP-completeness of \blfd{k}{1} for $k \ge 3$}\label{sec:NPComplete}

\subsection{Proof outline}

We firstly note that there exists a certificate checking in polynomial time whether, given an integer $k$, a graph $G = (V,E)$ and two subsets $F_k,M\subset E$, the pair $(F_k, M)$ form a partition of $E$ such that $F_k$ is a $k$-bounded linear forest and $M$ is a matching.
Hence, in order to prove that Planar \blfd{k}{1} is NP-complete for any $k\ge 3$, it is sufficient to prove that it is NP-hard.
This then implies \Cref{thm:main}.
For this purpose, we show a reduction of a variant of $3$-SAT to \blfd{k}{1}.

For a variable $x$ we denote by $\bar{x}$ its negative literal.
By a \emph{\mdec{k}} $(F_k,M)$ of a graph $G$ we mean an edge-decomposition into a $k$-bounded linear forest $F_k$ and a matching $M$.
A path of length $\ell$ (in edges) will also be referred to as an \emph{$\ell$-path}.
We consider the following SAT problem.

\subsubsection*{$(\le 3, 3)$–SAT}
\textbf{Input:} 
A set of variables $X$, a set of clauses $\cC$ such that $|C| \in \{2,3\}$ for all $C \in \cC$, and for every $x \in X$ we have that $x$ appears in exactly two clauses and $\bar{x}$ appears in exactly one clause. \\ 
\textbf{Question:} Is there a truth assignment $\phi : X \rightarrow \left\{\text{TRUE}, \text{FALSE} \right\}$ such that every clause contains at least one true literal?

An assignment $\phi$ with the desired properties is said to be \emph{satisfying}.\\

Since we wish to prove the NP-completeness of Planar \blfd{k}{1}, we consider a slightly more restricted variant of $(\le 3, 3)$-SAT.
Given a tuple $(X,\cC)$ where $X$ is a set of variables and $\cC$ is a set of clauses, its \emph{incidence graph} is the bipartite graph with parts $X \cup \cC$, where $xC$ is an edge if and only if either $x$ or $\bar{x}$ appears in $C$.
\emph{Planar $(\le 3,3)$-SAT} is $(\le 3, 3)$-SAT restricted to instances which have planar incidence graphs.

\begin{theorem}[\cite{dahlhaus1994complexity,middendorf1993complexity}]
\label{thm:NPcomplete3bounded3SAT}
    Planar $(\le 3,3)$–SAT is NP-complete.
\end{theorem}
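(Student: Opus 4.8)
The statement has two parts, and membership in NP is the easy one: a satisfying assignment $\phi$ is a certificate of size $|X|$, and checking that every clause contains a true literal takes linear time, so the plan is to devote all the effort to NP-hardness. I would prove hardness by a polynomial-time reduction from Planar $3$-SAT, which is NP-complete by Lichtenstein's theorem. The first, routine step is to normalise clause sizes into $\{2,3\}$: size-$3$ and size-$2$ clauses are kept as they are, while a unit clause $(\ell)$ is replaced by the two $2$-clauses $(\ell \lor z)$ and $(\ell \lor \bar{z})$ for a fresh variable $z$, which is satisfiability-equivalent and leaves the incidence graph planar. The real work is then to rewrite the instance so that, for every variable, the positive literal appears in exactly two clauses and the negative literal in exactly one.

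The engine for controlling occurrences is a variable-duplication gadget. If an original variable $v$ has $d$ occurrences, I would introduce copies $v_1,\dots,v_d$ and force $v_1=\dots=v_d$ by a cyclic chain of implications, where $v_i \Rightarrow v_{i+1}$ is encoded by the $2$-clause $(\bar{v}_i \lor v_{i+1})$ and indices are taken modulo $d$. The key count is that along this cycle each copy $v_i$ acquires exactly one positive occurrence (in $(\bar{v}_{i-1}\lor v_i)$) and exactly one negative occurrence (in $(\bar{v}_i \lor v_{i+1})$). Since the target budget is two positive and one negative occurrence per variable, each copy is left with room for precisely one further \emph{positive} external occurrence and no further negative one. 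Hence a copy carrying a positive occurrence of $v$ can be used directly and immediately attains the pattern $(2,1)$. A negative occurrence of $v$ cannot be placed this way — the copy's single negative slot is already spent inside the chain — so I would realise it as a positive occurrence of a fresh negation-partner variable $w$ tied to a copy by the equivalence $w \Leftrightarrow \bar{v}_i$, and use $w$ positively in the clause. In a planar embedding the occurrences of $v$ appear around it in a fixed cyclic order, so the implication cycle can be drawn as a small disc replacing the vertex $v$ with spokes leaving in that order, and the local negation gadgets attach along individual spokes without introducing crossings.

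The main obstacle is to close the occurrence counts for every variable simultaneously, rather than for the easy cases in isolation. The tension is already visible above: the cyclic chain consumes each copy's unique negative occurrence, which is exactly why every external use must be positive; but the equivalence gadget $w \Leftrightarrow \bar{v}_i$ needed to faithfully substitute a negative literal (both directions of the equivalence are required, or else the substitution either under- or over-reports satisfaction) reintroduces a negative occurrence on the linked copy, pushing it over budget. Resolving this requires a more careful cluster design — for instance, segregating copies into those serving positive external uses and those feeding negation partners, and wiring the linking $2$-clauses so that each internal variable receives exactly one positive and one negative internal occurrence together with its single external positive occurrence — followed by an exhaustive case count verifying that no variable over- or under-shoots. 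I would finish by checking equisatisfiability (the forcing gadgets admit a satisfying extension precisely when all copies take the common value of $v$) and observing that the construction has polynomial size, so that a satisfying assignment of the produced $(\le 3,3)$-SAT instance exists if and only if the original planar $3$-SAT instance is satisfiable; the delicate part throughout is keeping the $(2,1)$ pattern, the clause sizes in $\{2,3\}$, and planarity all intact at once.
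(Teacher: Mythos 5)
First, a point of reference: the paper does not prove this statement at all --- it is imported as a known theorem, cited to Dahlhaus et al.\ and Middendorf --- so your attempt can only be judged on its own merits, not against an argument in the paper. Judged that way, it has a genuine gap, and it sits exactly at the point you yourself flag as the crux. The implication cycle $(\bar{v}_i \lor v_{i+1})$ gives every copy one positive and one negative internal occurrence, so each copy can absorb only a \emph{positive} external occurrence; to serve a negative external occurrence you propose a partner variable $w$ with $w \Leftrightarrow \bar{v}_i$, but in clausal form this is $(\bar{w} \lor \bar{v}_i) \land (w \lor v_i)$, which hands $v_i$ a second negative occurrence and breaks the required $(2,1)$ pattern --- as you concede. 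The proposal then substitutes a promissory note (``a more careful cluster design \dots followed by an exhaustive case count'') for the actual construction. Since handling negative literals is precisely what makes the occurrence pattern $(2,1)$ nontrivial (the symmetric pattern ``at most three occurrences total'' would be routine), the proof is incomplete at its essential step.

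The repair is simpler than a new cluster design and needs no partner variables: give each copy a \emph{polarity}. For the $i$-th occurrence of $v$ introduce $v_i$ and a representative literal $\ell_i$, with $\ell_i = v_i$ if that occurrence is positive and $\ell_i = \bar{v}_i$ if it is negative, and take the cycle of clauses $(\bar{\ell}_i \lor \ell_{i+1})$, $i = 1, \dots, d$ (indices mod $d$), which forces all $\ell_i$ to share one truth value, playing the role of $v$. In the external clause, replace the original literal by the literal over $v_i$ equivalent to it: a positive occurrence becomes $\ell_i = v_i$, and a negative occurrence becomes $\bar{\ell}_i = v_i$ --- in both cases a \emph{positive} occurrence of $v_i$. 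Each copy then occurs once as $\ell_i$ and once as $\bar{\ell}_i$ inside the cycle (one positive, one negative, regardless of polarity) plus once positively outside, i.e.\ exactly the pattern $(2,1)$; clause sizes stay in $\{2,3\}$, and planarity survives because the cycle replaces the variable vertex by a small disc whose spokes leave in the cyclic order of the embedding. Two loose ends remain that your write-up also leaves open but which are easy: variables with $d \le 1$ occurrences (including the fresh variable $z$ from your unit-clause normalisation, before it is itself split) must be brought up to budget by attaching small, always-satisfiable padding blocks, and equisatisfiability is then immediate since the cycles force all copies of a variable to agree. With your negation-partner gadget replaced by this polarity device, the reduction from Lichtenstein's Planar $3$-SAT goes through.
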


Our reduction from a variant of $3$-SAT to Planar \blfd{k}{1} follows the general lines of the one in~\cite{campbell2023decompositions}.
Given an instance $(X,\cC)$ of Planar $(\le 3,3)$-SAT, we construct a planar graph $G_{(X,\cC)}$, and we show that $(X,\cC)$ is a yes-instance of Planar $(\le 3, 3)$-SAT if and only if $G_{(X,\cC)}$ is a yes-instance of Planar \blfd{k}{1}.
Clauses in $\cC$ are represented in our graph as vertices, and variables are represented as subgraphs which we call \emph{variable gadgets}.
For each variable $x \in X$ and a clause $C \in \cC$ such that either $x$ or $\bar{x}$ appears in $C$, we put an edge between the gadget variable representing $x$ and the vertex representing $C$.
We do this such that if $x$ appears in $C$ then the gadget representing $x$ will be connected to the vertex $C$ through a vertex which we call a \emph{positive input vertex}, and through a vertex which we call a \emph{negative input vertex} if $\bar{x}$ appears in $C$.
As $(X,\cC)$ is an instance of $(\le 3, 3)$-SAT, each variable gadget in $G_{(X,\cC)}$ will have two positive input vertices and one negative input vertex (see \Cref{fig:general}).

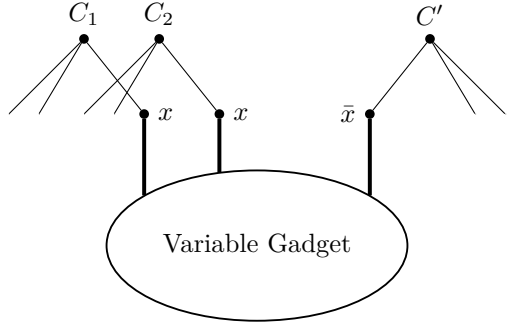
\begin{figure}
        \centering	\begin{tikzpicture}[auto, vertex/.style={circle,draw=black!100,fill=black!100, thick,inner sep=0pt,minimum size=1mm}, novertex/.style={circle,draw=black!100,fill=black!100,inner sep=0pt,minimum size=0mm}]]
		\node (p1) at (-1.5,1.75) [vertex,label=right:$x$] {};
            \node (p2) at (-0.5,1.75) [vertex,label=right:$x$] {};
            \node (j1) at (1.5,1.75) [vertex,label=left:$\bar{x}$] {};
            \node (v1) at (-1.5,0.6614) [novertex] {};
            \node (v2) at (-0.5,0.9682) [novertex] {};
            \node (u1) at (1.5,0.6614) [novertex] {};

            \node (c1) at (-2.3,2.75) [vertex,label=$C_1$] {};
            \node (c2) at (-1.3,2.75) [vertex,label=$C_2$] {};
            \node (c3) at (2.3,2.75) [vertex,label=$C'$] {};

            \node (c11) at (-2.9,1.75) [novertex] {};
            \node (c12) at (-3.3,1.75) [novertex] {};
            \node (c21) at (-1.9,1.75) [novertex] {};
            \node (c22) at (-2.3,1.75) [novertex] {};
            \node (c31) at (2.9,1.75) [novertex] {};
            \node (c32) at (3.3,1.75) [novertex] {};
            
            \draw [line width=0.25mm] (0,0) ellipse (2cm and 1cm);
		\node [label={[shift={(0,-0.4)}]Variable Gadget}] at (0,0) {};

            \draw [-,line width=1.5pt] (p1) --node[inner sep=2pt]{} (v1);
            \draw [-,line width=1.5pt] (p2) --node[inner sep=2pt]{} (v2);
            \draw [-,line width=1.5pt] (j1) --node[inner sep=2pt]{} (u1);

            \draw [-] (p1) --node[inner sep=2pt]{} (c1);
            \draw [-] (p2) --node[inner sep=2pt]{} (c2);
            \draw [-] (j1) --node[inner sep=2pt]{} (c3);

            \draw [-] (c1) --node[inner sep=2pt]{} (c11);
            \draw [-] (c1) --node[inner sep=2pt]{} (c12);
            \draw [-] (c2) --node[inner sep=2pt]{} (c21);
            \draw [-] (c2) --node[inner sep=2pt]{} (c22);
            \draw [-] (c3) --node[inner sep=2pt]{} (c31);
            \draw [-] (c3) --node[inner sep=2pt]{} (c32);
	\end{tikzpicture}
	\caption{An illustration of one variable gadget for a variable $x$, with two positive input vrtices on the left-hand-side and one negative input vertex on the right-hand-side, and of three clause vertices $C_1, C_2$ and $C'$, such that $x$ appears in $C_1, C_2$ and $\bar{x}$ appears in $C'$. Each clause vertex is adjacent to two more edges, each connecting it to a relevant variable gadget.}
    \label{fig:general}
\end{figure}

We then show the following correspondence between satisfying assignments of $(X,\cC)$ and \mdec{k}s of $G_{(X,\cC)}$.
Given a \mdec{k} of $G_{(X,\cC)}$, we will associate an edge between a gadget representing the variable $x$ and a vertex representing the clause $C$ being contained in the matching with $\phi(x) = \true$ if this edge is adjacent to a positive input vertex, and with $\phi(x) = \false$ if the edge is adjacent to a negative input vertex.
This is meant to represent that the assignment of $\phi$ to $x$ makes $C$ true.
For this heuristic to make sense, one must exclude the possibility of a \mdec{k} from which both $\phi(x)=\true$ and $\phi(x)=\false$ are deduced.
This corresponds to what we call \emph{clashing edges}, that is, having a \mdec{k} of our graph such that there exists a variable $x$ with both the negative input vertex and at least one of the positive input vertices being adjacent to a clause by an edge which is contained in the matching.

Given this framework, as long as we construct a variable gadget which does not allow clashing edges in any \mdec{k} of it, then it must be possible to deduce a satisfying assignment $\phi$ for $(X,\cC)$ from any \mdec{k} of $G_{(X,\cC)}$.
Indeed, we then have that each variable will be assigned exactly one truth value, and each clause vertex will have at least one variable which its assignment makes it true.
Conversely, we must be able to deduce a \mdec{k} of $G_{(X,\cC)}$ from a satisfying assignment $\phi$ of $(X,\cC)$.
Note that in our framework an assigment $\phi$ induces a \mdec{k} of all the edges adjacent to clause vertices.
Hence, this should be possible as long as any \mdec{k} of the edges adjacent to clause vertices can be ``extended'' to valid \mdec{k}s of each variable gadget, such that we obtain a valid \mdec{k} of the whole graph.

The challenge is then to construct a variable gadget which satisfies all the above.
Note that the first restriction on the variable gadget implies that having both its negative input vertex and one of its positive input vertices adjacent to a matching edge in a \mdec{k} at the same time must be impossible.
The second restriction means that as long as both its negative input vertex and one of its positive input vertices are not adjacent to a matching edge in a \mdec{k} of the edges adjacent to clause vertices, then one can extend it to a \mdec{k} of the whole variable gadget.

We construct our variable gadget by combining two auxiliary gadgets, which we call \kor{k} and \rej{k} (see \Cref{def:kOR,def:kFrejector}, respectively).
Informally, given a \mdec{k} of the graph $G_{(X,\cC)}$, the \kor{k} has the property of identifying when the variable gadget has a matching edge connecting it to a clause through one of its positive input vertices, and if this is the case, the \rej{k} has the property of rejecting the possibility of the edge connecting the negative input vertex to a clause vertex to also being contained in the matching.

The rest of this section is organised as follows.
In \Cref{sec:AuxGadgets} we construct the auxiliary gadgets, which are later used in \Cref{sec:VarGadget} to construct the variable gadget.
Then, in \Cref{sec:Reduction} we prove the reduction.

\subsection{Auxiliary gadgets}
\label{sec:AuxGadgets}

An important object we use to construct our gadgets are \emph{forcers}.
Since forcers are basic ``building blocks'' in our gadgets, we start by describing their construction.

\subsubsection{Forcer graphs}
$\cM$-forcers and $\cF_k$-forcers are graphs with one designated edge which is forced to be respectively in $M$ or in $F_k$, for any \mdec{k} $(F_k,M)$ of them.
Forcer graphs for $k \ge 4$ have already been constructed in~\cite{campbell2023decompositions}, but using graphs with multiple edges.
Here we construct forcers using simple graphs for every $k \geq 3$ for the sake of generalisation.

\begin{definition}
\label{def:Mforcer}
    Let $k \ge 3$ be an integer.
    Given a graph $G = (V,E)$ and a vertex $v \in V$ of degree $1$, we say that $(G,v)$ is an \emph{\mfor{k}} if the following holds.
    \begin{itemize}
        \item There exists an \mdec{k} of $G$.
        \item For any \mdec{k} $(F_k,M)$ of $G$, the edge $e \in E$ containing $v$ is in $M$.
    \end{itemize}
\end{definition}

\begin{definition}
\label{def:ellFforcer}
    Let $k \ge 3$ and $1 \le \ell \le k$ be integers.
    Given a graph $G = (V,E)$ and a vertex $v \in V$ of degree $1$, we say that $(G,v)$ is an \emph{\ffor{\ell}{k}} if the following holds.
    \begin{itemize}
        \item There exists an \mdec{k} of $G$.
        \item For any \mdec{k} $(F_k,M)$ of $G$, the vertex $v$ is contained in an $\ell$-path in $F_k$, and in no $(\ell+1)$-path in $F_k$. 
    \end{itemize} 
\end{definition}

We show that \mfor{k}s and \ffor{\ell}{k}s exist, by giving explicit constructions.

\begin{proposition}
\label{prop:Mforcer}
    For any $k \ge 3$ there exists an \mfor{k} $(G,v)$.
    Moreover, the statement still holds with the additional restriction of $G$ being a planar graph.
\end{proposition}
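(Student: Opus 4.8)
The plan is to exploit two elementary local constraints that every \mdec{k} $(F_k,M)$ must satisfy. First, at a vertex of degree $3$ exactly one incident edge lies in $M$ and the other two lie in $F_k$: at most one can be in $M$ since $M$ is a matching, at most two can be in $F_k$ since $F_k$ is a linear forest, and $1+2=3$. Second, in any triangle exactly one edge lies in $M$: at least one, since three edges in $F_k$ would form a cycle, and at most one, since $M$ is a matching. I would build the forcer around a single vertex $u$, let $v$ be a pendant attached to $u$ by the edge $e=uv$, and arrange the rest of the gadget so that $u$ is forced to be an endpoint of a path of length exactly $k$ inside $F_k$. Granting this, the forcing is immediate: $u$ then already has $F_k$-degree $1$, so if $e$ also belonged to $F_k$ it would extend that path to length $k+1$, which is forbidden; hence $e\in M$, as required by the second bullet of \Cref{def:Mforcer}.

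It remains to force a length-$k$ path into $F_k$ with $u$ as one endpoint. I would take a base path $u=w_0,w_1,\dots,w_k$ and hang small planar widgets (triangles, and the ``diamond'' $K_4$ minus an edge) off the internal vertices and the far endpoint $w_k$, chosen so that every base edge $w_iw_{i+1}$ is forced into $F_k$ and so that $w_k$ is pinned as a path endpoint. The diamond is useful here because its middle edge is always forced into $F_k$: if that edge lay in $M$ then both triangles of the diamond would have to take it as their matching edge, leaving the remaining four edges in $F_k$ and forming a $4$-cycle, which is impossible. Using the two local constraints above one propagates the membership ``$\in F_k$'' along the base path. With the construction in hand, the two conditions of \Cref{def:Mforcer} are checked as follows: for the first I would exhibit one explicit \mdec{k} of the whole gadget (so that a valid decomposition exists), and for the second I would argue, edge by edge along the base path and case by case at each widget, that every \mdec{k} places all base edges in $F_k$ and makes $u$ an endpoint, whence the contradiction above forces $e\in M$.

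Planarity will be automatic: every widget is planar and they are glued to the base path by identifying single vertices (or a single edge), so a planar embedding of the whole gadget can be read off directly, giving the ``moreover'' part of the proposition. I expect the main obstacle to be the length bookkeeping together with the requirement that $G$ be a \emph{simple} graph (this being the improvement over the multigraph construction of \cite{campbell2023decompositions}): each widget that pins an edge into $F_k$ also contributes some forced edges to the $F_k$-path through $u$, and all of these must fit within the global bound of $k$ on path lengths. This is tightest when $k=3$, where the forced paths must be kept very short, so the widgets and their attachment points have to be chosen with the available slack in mind; verifying that a single construction simultaneously forces the right edges, respects the length bound, stays simple, and still admits at least one valid \mdec{k} is the delicate part of the argument.
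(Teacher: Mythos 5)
Your local observations are all correct (exactly one matching edge at each degree-$3$ vertex, exactly one matching edge per triangle, and the diamond's middle edge always lying in $F_k$), and your forcing principle is sound: \emph{if} $u$ were forced, in every \mdec{k}, to be the endpoint of an $F_k$-path of length exactly $k$, then $e=uv\in M$ would follow. The gap is that the mechanism you propose to realize this premise cannot work. At an internal vertex $w_i$ of your base path, the only way both base edges can lie in $F_k$ is for the pendant widget edge at $w_i$ to lie in $M$ (a degree-$3$ vertex has exactly one matching edge). So a widget that \emph{forces} both base edges into $F_k$ must force its own attaching edge into $M$ --- that is, the widget together with its attaching edge must itself be an \mfor{k}, which is precisely the object you are trying to construct; this is circular. (It is exactly how the paper builds its \ffor{\ell}{k}s, but only \emph{after} obtaining \mfor{k}s by other means.) Neither of your proposed widgets can play this role: a pendant triangle forces nothing (its attaching edge can lie in $M$ or in $F_k$, and both choices extend to valid decompositions), while a pendant diamond forces its attaching edge into $F_k$ --- the \emph{opposite} of what you need --- so hanging it off $w_i$ forces one of the two base edges into $M$, destroying your invariant. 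Nor can you thread the base path through the diamond's middle edge: its endpoints already have degree $3$ inside the diamond, and a vertex of degree at least $4$ admits no \mdec{k} at all.

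What a pendant diamond does give is a forced end-path of length exactly $4$ (the attaching edge plus three diamond edges), so chaining diamonds, possibly finished with a pendant edge and a cherry, only produces forced lengths congruent to $0$, $1$ or $2$ modulo $4$; in particular nothing of length exactly $3$, and for $k=3$ a pendant diamond admits no \mdec{3} whatsoever, so the case you yourself flag as tightest is exactly where this toolkit runs out. The paper avoids the issue entirely: for $k\ge 8$ it does not use path lengths to force at all, but the degree constraint --- two pendant diamonds at the central vertex $x$ force the other two edges at $x$ into $F_k$, whence $e\in M$ --- and for $3\le k\le 7$ it gives small bespoke constructions (a tree with pendant cherries for $k=3$, ``house'' gadgets for $k\in\{4,5\}$ and $k\in\{6,7\}$) in which $e\in F_k$ is refuted by a short case analysis producing a too-long path. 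As written, your proposal's key step --- propagating ``$\in F_k$'' along the base path using only triangles and diamonds --- provably cannot be carried out, so the proof has a genuine gap.
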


\begin{proof}
    We give an explicit construction of an \mfor{k} for any possible value of $k$.
    For $k \ge 8$, we construct $(G,v)$ as in \Cref{fig:Mforcer}.
    For $k \le 7$ we have a different construction, which splits further into cases where $k=3$, $k=4,5$, or $k=6,7$, as given in \Cref{fig:MforcerSmallk}.
    An easy verification shows that each of the given constructions is indeed an \mfor{k}.
    Note that in all constructions the graphs are planar.
\end{proof}

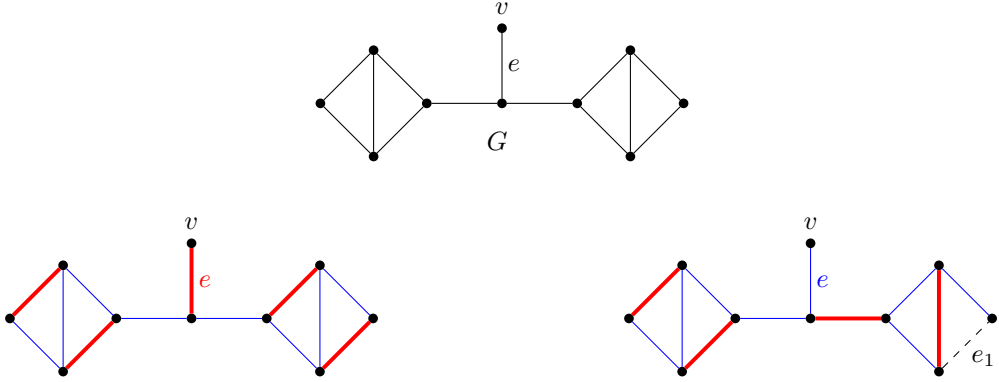
\begin{figure}
	\centering
	\begin{subfigure}[b]{0.5\textwidth}
        \centering
	\begin{tikzpicture}[auto, vertex/.style={circle,draw=black!100,fill=black!100, thick,inner sep=0pt,minimum size=1mm}]
			\node (x) at (0,0) [vertex] {};
			\node (y) at (0,1) [vertex,label=$v$] {};
			\node (v1) at (1,0) [vertex] {};
			\node (v2) at (1.707,0.707) [vertex] {};
			\node (v3) at (1.707,-0.707) [vertex] {};
			\node (v4) at (2.414,0) [vertex] {};
			\node (u1) at (-1,0) [vertex] {};
			\node (u2) at (-1.707,0.707) [vertex] {};
			\node (u3) at (-1.707,-0.707) [vertex] {};
			\node (u4) at (-2.414,0) [vertex] {};
			\node[text width=0.4cm] at (0,-0.5) {$G$};
			
			\draw [-] (x) --node[inner sep=2pt,swap]{$e$} (y);
			\draw [-] (x) --node[inner sep=2pt,swap]{} (u1);
			\draw [-] (x) --node[inner sep=2pt,swap]{} (v1);
			\draw [-] (v1) --node[inner sep=2pt,swap]{} (v2);
			\draw [-] (v1) --node[inner sep=2pt,swap]{} (v3);
			\draw [-] (v2) --node[inner sep=2pt,swap]{} (v3);
			\draw [-] (v2) --node[inner sep=2pt,swap]{} (v4);
			\draw [-] (v3) --node[inner sep=2pt,swap]{} (v4);
			\draw [-] (u1) --node[inner sep=2pt,swap]{} (u2);
			\draw [-] (u1) --node[inner sep=2pt,swap]{} (u3);
			\draw [-] (u2) --node[inner sep=2pt,swap]{} (u3);
			\draw [-] (u2) --node[inner sep=2pt,swap]{} (u4);
			\draw [-] (u3) --node[inner sep=2pt,swap]{} (u4);
		\end{tikzpicture}
	\end{subfigure}%
	\vspace{6mm} %
	\begin{subfigure}[b]{1\textwidth}
	\begin{minipage}{.5\textwidth}
		\centering
		\begin{tikzpicture}[auto, vertex/.style={circle,draw=black!100,fill=black!100, thick,inner sep=0pt,minimum size=1mm}]
			\node (x) at (0,0) [vertex] {};
			\node (y) at (0,1) [vertex,label=$v$] {};
			\node (v1) at (1,0) [vertex] {};
			\node (v2) at (1.707,0.707) [vertex] {};
			\node (v3) at (1.707,-0.707) [vertex] {};
			\node (v4) at (2.414,0) [vertex] {};
			\node (u1) at (-1,0) [vertex] {};
			\node (u2) at (-1.707,0.707) [vertex] {};
			\node (u3) at (-1.707,-0.707) [vertex] {};
			\node (u4) at (-2.414,0) [vertex] {};
			
			\draw [-,red,line width=1.5pt] (x) --node[inner sep=2pt,swap]{$e$} (y);
			\draw [-,blue] (x) --node[inner sep=2pt,swap]{} (u1);
			\draw [-,blue] (x) --node[inner sep=2pt,swap]{} (v1);
			\draw [-,red,line width=1.5pt] (v1) --node[inner sep=2pt,swap]{} (v2);
			\draw [-,blue] (v1) --node[inner sep=2pt,swap]{} (v3);
			\draw [-,blue] (v2) --node[inner sep=2pt,swap]{} (v3);
			\draw [-,blue] (v2) --node[inner sep=2pt,swap]{} (v4);
			\draw [-,red,line width=1.5pt] (v3) --node[inner sep=2pt,swap]{} (v4);
			\draw [-,blue] (u1) --node[inner sep=2pt,swap]{} (u2);
			\draw [-,red,line width=1.5pt] (u1) --node[inner sep=2pt,swap]{} (u3);
			\draw [-,blue] (u2) --node[inner sep=2pt,swap]{} (u3);
			\draw [-,red,line width=1.5pt] (u2) --node[inner sep=2pt,swap]{} (u4);
			\draw [-,blue] (u3) --node[inner sep=2pt,swap]{} (u4);
		\end{tikzpicture}
	\end{minipage}%
	\begin{minipage}{.5\textwidth}
		\centering
		\begin{tikzpicture}[auto, vertex/.style={circle,draw=black!100,fill=black!100, thick,inner sep=0pt,minimum size=1mm}]
			\node (x) at (0,0) [vertex] {};
			\node (y) at (0,1) [vertex,label=$v$] {};
			\node (v1) at (1,0) [vertex] {};
			\node (v2) at (1.707,0.707) [vertex] {};
			\node (v3) at (1.707,-0.707) [vertex] {};
			\node (v4) at (2.414,0) [vertex] {};
			\node (u1) at (-1,0) [vertex] {};
			\node (u2) at (-1.707,0.707) [vertex] {};
			\node (u3) at (-1.707,-0.707) [vertex] {};
			\node (u4) at (-2.414,0) [vertex] {};
			
			\draw [-,blue] (x) --node[inner sep=2pt,swap]{$e$} (y);
			\draw [-,blue] (x) --node[inner sep=2pt,swap]{} (u1);
			\draw [-,red,line width=1.5pt] (x) --node[inner sep=2pt,swap]{} (v1);
			\draw [-,blue] (v1) --node[inner sep=2pt,swap]{} (v2);
			\draw [-,blue] (v1) --node[inner sep=2pt,swap]{} (v3);
			\draw [-,red,line width=1.5pt] (v2) --node[inner sep=2pt,swap]{} (v3);
			\draw [-,blue] (v2) --node[inner sep=2pt,swap]{} (v4);
			\draw [-,dashed] (v3) --node[inner sep=2pt,swap]{$e_1$} (v4);
			\draw [-,blue] (u1) --node[inner sep=2pt,swap]{} (u2);
			\draw [-,red,line width=1.5pt] (u1) --node[inner sep=2pt,swap]{} (u3);
			\draw [-,blue] (u2) --node[inner sep=2pt,swap]{} (u3);
			\draw [-,red,line width=1.5pt] (u2) --node[inner sep=2pt,swap]{} (u4);
			\draw [-,blue] (u3) --node[inner sep=2pt,swap]{} (u4);
		\end{tikzpicture}
	\end{minipage}
        \end{subfigure}
	\caption{An \mfor{k} $(G,v)$ for $k \ge 8$ on the top, on the left-hand-side a possible \mdec{k} of it $(F_k,M)$ where $e \in M$, and on the right-hand-side an attempt for a \mdec{k} $(F_k,M)$ where $e \in F_k$ creating an edge $e_1$ with no possible assignment. $F_k$ is coloured blue and $M$ is in bold and coloured red.}
	\label{fig:Mforcer}
\end{figure}

\begin{figure}
\centering
\begin{subfloat}[$k=3$] {
\centering
    \begin{tikzpicture}[auto, vertex/.style={circle,draw=black!100,fill=black!100, thick,inner sep=0pt,minimum size=1mm}]
        \node (x) at (0,0) [vertex] {};
        \node (y) at (0,1) [vertex] {};
        \node (v) at (0,2) [vertex,label=$v$] {};
        \node (v1) at (1,0) [vertex] {};
        \node (v2) at (1.707,0.707) [vertex] {};
        \node (v3) at (1.707,-0.707) [vertex] {};
        \node (u1) at (-1,0) [vertex] {};
        \node (u2) at (-1.707,0.707) [vertex] {};
        \node (u3) at (-1.707,-0.707) [vertex] {};
        \node[text width=0.4cm] at (0,-0.5) {$G$};

        \draw [-,red,line width=1.5pt] (y) --node[inner sep=2pt,swap]{$e$} (v);
        \draw [-,blue] (x) --node[inner sep=2pt]{} (y);
        \draw [-,blue] (x) --node[inner sep=2pt]{} (u1);
        \draw [-,red,line width=1.5pt] (x) --node[inner sep=2pt]{} (v1);
        \draw [-,blue] (v1) --node[inner sep=2pt]{} (v2);
        \draw [-,blue] (v1) --node[inner sep=2pt]{} (v3);
        \draw [-,blue] (u1) --node[inner sep=2pt]{} (u2);
        \draw [-,red,line width=1.5pt] (u1) --node[inner sep=2pt]{} (u3);
    \end{tikzpicture}
}
\end{subfloat}
\hspace{10pt}       
\begin{subfloat}[$k=4,5$] {
\centering
    \begin{tikzpicture}[auto, vertex/.style={circle,draw=black!100,fill=black!100, thick,inner sep=0pt,minimum size=1mm}]
	\node (x) at (0,0) [vertex] {};
	\node (v) at (0,1) [vertex,label=$v$] {};
        \node (xv) at (1,0) [vertex] {};
        \node (yv) at (1.707,0.707) [vertex] {};
        \node (xu) at (-1,0) [vertex] {};
        \node (yu) at (-1.707,0.707) [vertex] {};
        \node (v1) at (1.707,-0.707) [vertex] {};
        \node (v2) at (2.414,-1.414) [vertex] {};
        \node (v3) at (1,-1.414) [vertex] {};
        \node (v4) at (2.414,-2.414) [vertex] {};
        \node (v5) at (1,-2.414) [vertex] {};
        \node (u1) at (-1.707,-0.707) [vertex] {};
        \node (u2) at (-2.414,-1.414) [vertex] {};
        \node (u3) at (-1,-1.414) [vertex] {};
        \node (u4) at (-2.414,-2.414) [vertex] {};
        \node (u5) at (-1,-2.414) [vertex] {};
        \node[text width=0.4cm] at (0,-0.5) {$G$};
			
        \draw [-,red,line width=1.5pt] (x) --node[inner sep=2pt,swap]{$e$} (v);
        \draw [-,blue] (x) --node[inner sep=2pt]{} (xv);
        \draw [-,blue] (xv) --node[inner sep=2pt]{} (yv);
        \draw [-,red,line width=1.5pt] (xv) --node[inner sep=2pt]{} (v1);
        \draw [-,blue] (v1) --node[inner sep=2pt]{} (v2);
        \draw [-,blue] (v1) --node[inner sep=2pt]{} (v3);
        \draw [-,red,line width=1.5pt] (v2) --node[inner sep=2pt]{} (v3);
        \draw [-,blue] (v2) --node[inner sep=2pt]{} (v4);
        \draw [-,blue] (v3) --node[inner sep=2pt]{} (v5);
        \draw [-,red,line width=1.5pt] (v4) --node[inner sep=2pt]{} (v5);
        \draw [-,blue] (x) --node[inner sep=2pt]{} (xu);
        \draw [-,blue] (xu) --node[inner sep=2pt]{} (yu);
        \draw [-,red,line width=1.5pt] (xu) --node[inner sep=2pt]{} (u1);
        \draw [-,blue] (u1) --node[inner sep=2pt]{} (u2);
        \draw [-,blue] (u1) --node[inner sep=2pt]{} (u3);
        \draw [-,red,line width=1.5pt] (u2) --node[inner sep=2pt]{} (u3);
        \draw [-,blue] (u2) --node[inner sep=2pt]{} (u4);
        \draw [-,blue] (u3) --node[inner sep=2pt]{} (u5);
        \draw [-,red,line width=1.5pt] (u4) --node[inner sep=2pt]{} (u5);
    \end{tikzpicture}
}       
\end{subfloat}
\vspace{20pt}%
\begin{subfloat}[$k=6,7$] {
\centering
    \begin{tikzpicture}[auto, vertex/.style={circle,draw=black!100,fill=black!100, thick,inner sep=0pt,minimum size=1mm}]
        \node (x) at (0,0) [vertex] {};
        \node (v) at (0,1) [vertex,label=$v$] {};
        \node (xv) at (1,0) [vertex] {};
        \node (yv) at (1.707,0.707) [vertex] {};
        \node (xu) at (-1,0) [vertex] {};
        \node (yu) at (-1.707,0.707) [vertex] {};
        \node (v1) at (1.707,-0.707) [vertex] {};
        \node (v2) at (2.414,-1.414) [vertex] {};
        \node (v3) at (1,-1.414) [vertex] {};
        \node (v4) at (2.414,-2.414) [vertex] {};
        \node (v5) at (1,-2.414) [vertex] {};
        \node (v6) at (2.414,-3.414) [vertex] {};
        \node (v7) at (1,-3.414) [vertex] {};
        \node (u1) at (-1.707,-0.707) [vertex] {};
        \node (u2) at (-2.414,-1.414) [vertex] {};
        \node (u3) at (-1,-1.414) [vertex] {};
        \node (u4) at (-2.414,-2.414) [vertex] {};
        \node (u5) at (-1,-2.414) [vertex] {};
        \node (u6) at (-2.414,-3.414) [vertex] {};
        \node (u7) at (-1,-3.414) [vertex] {};
        \node[text width=0.4cm] at (0,-0.5) {$G$};
        
        \draw [-,red,line width=1.5pt] (x) --node[inner sep=2pt,swap]{$e$} (v);
        \draw [-,blue] (x) --node[inner sep=2pt]{} (xv);
        \draw [-,blue] (xv) --node[inner sep=2pt]{} (yv);
        \draw [-,red,line width=1.5pt] (xv) --node[inner sep=2pt]{} (v1);
        \draw [-,blue] (v1) --node[inner sep=2pt]{} (v2);
        \draw [-,blue] (v1) --node[inner sep=2pt]{} (v3);
        \draw [-,red,line width=1.5pt] (v2) --node[inner sep=2pt]{} (v3);
        \draw [-,blue] (v2) --node[inner sep=2pt]{} (v4);
        \draw [-,blue] (v3) --node[inner sep=2pt]{} (v5);
        \draw [-,red,line width=1.5pt] (v4) --node[inner sep=2pt]{} (v5);
        \draw [-,blue] (v4) --node[inner sep=2pt]{} (v6);
        \draw [-,blue] (v5) --node[inner sep=2pt]{} (v7);
        \draw [-,red,line width=1.5pt] (v6) --node[inner sep=2pt]{} (v7);
        \draw [-,blue] (x) --node[inner sep=2pt]{} (xu);
        \draw [-,blue] (xu) --node[inner sep=2pt]{} (yu);
        \draw [-,red,line width=1.5pt] (xu) --node[inner sep=2pt]{} (u1);
        \draw [-,blue] (u1) --node[inner sep=2pt]{} (u2);
        \draw [-,blue] (u1) --node[inner sep=2pt]{} (u3);
        \draw [-,red,line width=1.5pt] (u2) --node[inner sep=2pt]{} (u3);
        \draw [-,blue] (u2) --node[inner sep=2pt]{} (u4);
        \draw [-,blue] (u3) --node[inner sep=2pt]{} (u5);
        \draw [-,red,line width=1.5pt] (u4) --node[inner sep=2pt]{} (u5);
        \draw [-,blue] (u4) --node[inner sep=2pt]{} (u6);
        \draw [-,blue] (u5) --node[inner sep=2pt]{} (u7);
        \draw [-,red,line width=1.5pt] (u6) --node[inner sep=2pt]{} (u7);
    \end{tikzpicture}
}       
\end{subfloat}
\caption{\mfor{k}s $(G,v)$ for $3 \le k \le 7$ with a possible \mdec{k} $(F_k,M)$ of each where $e \in M$. $F_k$ is coloured blue and $M$ is in bold and coloured red.}
\label{fig:MforcerSmallk}
\end{figure}
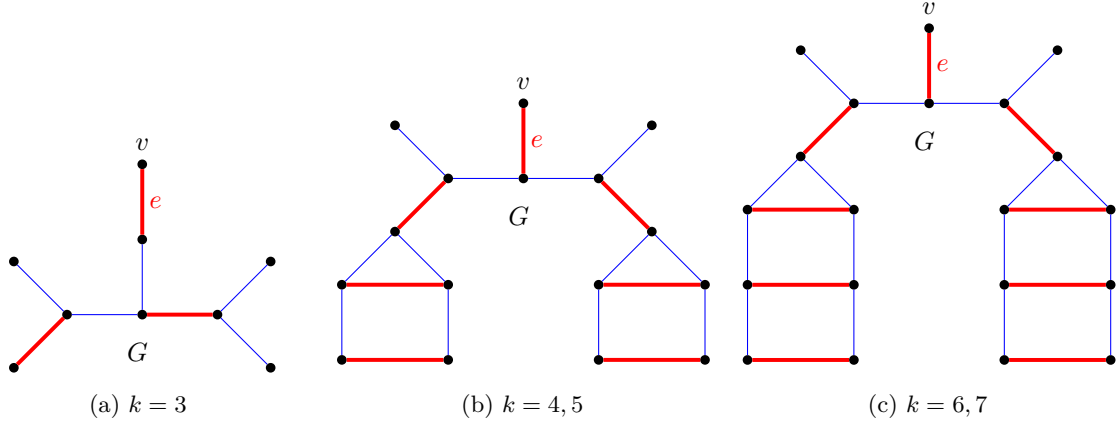

\begin{proposition}
\label{prop:ellFforcer}
    For any integers $k \ge 3$ and $1 \le \ell \le k$ there exists an \ffor{\ell}{k} $(G,v)$.
    Moreover, the statement still holds with the additional restriction of $G$ being a planar graph.
\end{proposition}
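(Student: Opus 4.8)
The plan is to realise the desired forcer as a single path of length $\ell$ to whose vertices we glue copies of the \mfor{k} constructed in \Cref{prop:Mforcer}. Concretely, I would take a path $v = a_0, a_1, \dots, a_\ell$ and, for each $1 \le i \le \ell$, attach a copy of an \mfor{k} $(H_i,t_i)$ by identifying its degree-$1$ terminal $t_i$ with $a_i$; call the resulting graph $G$. The vertex $v = a_0$ has degree $1$ as required by \Cref{def:ellFforcer}, every internal path vertex has degree $3$, and since each $H_i$ is planar and the attachment points lie along a path, $G$ is planar (draw the forcers alternately above and below the path).

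The correctness rests on the standard observation that forcing propagates under gluing. If $(F_k,M)$ is any \mdec{k} of $G$, then its restriction to $E(H_i)$ is a \mdec{k} of $H_i$ (a sub-matching is a matching, and a subforest of a $k$-bounded linear forest is again one), so the defining property of the \mfor{k} guarantees that the terminal edge of $H_i$ lies in $M$. Hence each $a_i$ with $1 \le i \le \ell$ is incident to a matching edge. Since $M$ is a matching, no second edge at $a_i$ can lie in $M$; in particular every path edge $a_{i-1}a_i$, being incident to the saturated vertex $a_i$, must lie in $F_k$. Therefore the whole path $a_0 a_1 \cdots a_\ell$ lies in $F_k$, exhibiting an $\ell$-path through $v$. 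Because the only non-matching edge at $a_\ell$ is $a_{\ell-1}a_\ell$, this path cannot be extended past $a_\ell$, and as $v=a_0$ has degree $1$ it cannot be extended past $v$ either; thus $v$ lies in an $\ell$-path but in no $(\ell+1)$-path. For existence, I would exhibit the decomposition that places the entire path in $F_k$ and decomposes each $H_i$ by its guaranteed \mdec{k} with terminal edge in $M$; these agree at the shared vertices $a_i$ (there the path contributes only $F_k$-edges and the forcer only its matching terminal edge), so $M$ is a matching and $F_k$ is a $k$-bounded linear forest whose path component through $v$ has length $\ell \le k$.

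The only genuinely delicate point is the boundary behaviour at the two ends of the path, and in particular the case $\ell = 1$: here there are no internal path vertices, so it is precisely the forcer glued at $a_1$ that is needed both to push $a_0a_1$ into $F_k$ and to cap the path at $a_1$. Attaching forcers at all of $a_1, \dots, a_\ell$ (rather than only at the internal vertices $a_1, \dots, a_{\ell-1}$) handles every $\ell \ge 1$ uniformly, and I would simply remark that the extra forcer at $a_\ell$ never causes a conflict, since $a_\ell$ then carries exactly one matching edge and one $F_k$-edge. I expect the main obstacle in writing this up cleanly to be phrasing the gluing/restriction argument carefully enough that it visibly applies simultaneously to all $\ell$ copies of the \mfor{k} without over-counting the constraints at the shared vertices $a_i$.
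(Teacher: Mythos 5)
Your proposal is correct and takes essentially the same approach as the paper: a path of length $\ell$ with copies of the \mfor{k} from \Cref{prop:Mforcer} glued along it, whose forced matching edges push every path edge into $F_k$ and simultaneously cap the path, so that $v$ lies in an $\ell$-path but no $(\ell+1)$-path. The only (harmless) difference is that the paper attaches $\ell-1$ forcers at the internal path vertices and leaves the far endpoint bare, treating $\ell=1$ as a separate case, whereas you also attach a forcer at the far endpoint and thereby handle all $\ell \ge 1$ uniformly.
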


\begin{proof}
    Let $k \ge 3$ be an integer and consider first the case where $\ell = 1$.
    Let $(G',v')$ be the \mfor{k} as constructed in the proof of \Cref{prop:Mforcer}.
    Let $G_1$ be the graph obtained by adding a new vertex $v$ and the edge $e = vv'$.
    Then $(G_1,v)$ is a \ffor{1}{k}.
    Indeed, any \mdec{k} $(F_k,M)$ of $G_1$ induces an \mdec{k} $(F'_k,M')$ of $G'$. 
    Since $(G',v')$ is an \mfor{k}, we get that $e' \in M'$, where $e' \in E(G')$ is the edge satisfying $v' \in e'$.  
    It then follows that in $(F_k,M)$ we have $e=vv' \in F_k$, and moreover, that $v$ is not contained in any $2$-path in $F_k$ (see \Cref{fig:ellFforcer}(a)).
    
    Now let $2 \le \ell \le k$ be an integer, and let $(G^{(1)}, v_1), \ldots, (G^{(\ell-1)}, v_{\ell-1})$ be $\ell-1$ disjoint copies of $(G',v')$.
    Define the graph $G_{\ell} = (V,E)$ as follows.
    Let $v, u$ be two vertices not in $V(G^{(i)})$ for all $i=1, \ldots, \ell-1$, and let $P$ be the $\ell$-path $(v, v_1, \ldots, v_{\ell-1}, u)$.
    Set $V \coloneqq V(G^{(1)}) \cup \ldots \cup V(G^{(\ell-1)}) \cup \{v, u \}$ and $E \coloneqq E(G^{(1)}) \cup \ldots \cup E(G^{(\ell-1)}) \cup E(P)$ (see \Cref{fig:ellFforcer}(b)).
    We claim that $(G_{\ell},v)$ is an \ffor{\ell}{k}.
    Indeed, any \mdec{k} $(F_k,M)$ of $G_{\ell}$ induces \mdec{k}s $(F^{(i)}_k,M^{(i)})$ of $G^{(i)}$ for $i = 1, \ldots, \ell-1$.
    Since the $(G^{(i)}, v_i)$'s are \mfor{k}s, we know that we must have $e_i \in M^{(i)}$, where $e_i \in E(G^{(i)})$ is the edge with $v_i \in e_i$, for all $i=1, \ldots, \ell-1$.
    Hence, in $(F_k,M)$ we must have that $v v_1, v_{\ell -1}u \in F_k$ and $ v_iv_{i+1} \in F_k$ for all $i=1,\ldots, \ell-2$, implying that $P \subset F_k$, and moreover, no $(\ell+1)$-path containing $v$ is in $F_k$.
    
    Lastly, the existence of a \mdec{k} of $G_{\ell}$, for any value of $\ell$, is demonstrated in \Cref{fig:ellFforcer} by explicit decompositions.
\end{proof}

\begin{figure}
\centering
\begin{subfloat}[$\ell=1$] {
\centering
    \begin{tikzpicture}[auto, vertex/.style={circle,draw=black!100,fill=black!100, thick,inner sep=0pt,minimum size=1mm}]
        \node (v) at (2,0) [vertex,label=$v$] {};
        \node (v') at (1,0) [vertex,label=$v'$] {};
        \node (G') at (-0.54,0) [circle,draw=red,inner ysep=1.5em] {\color{red}$G'$};
        \node (G) at (1,-0.7) [] {$G_1$};

        \draw [-,blue] (v) --node[inner sep=2pt,swap]{$e$} (v');
        \draw [-,red,line width=1.5pt] (v') --node[inner sep=2pt,swap]{} (G');
    \end{tikzpicture}
}
\end{subfloat}
\hspace{10pt}       
\begin{subfloat}[$\ell \ge 2$] {
\centering
    \begin{tikzpicture}[auto, vertex/.style={circle,draw=black!100,fill=black!100, thick,inner sep=0pt,minimum size=1mm}]
        \node (v0) at (-3.6,0) [vertex,label=$v_0$] {};
        \node (v1) at (-2.3,0) [vertex,label=$v_1$] {};
        \node (v2) at (-1,0) [vertex,label=$v_2$] {};
        \node (vell-2) at (1,0) [vertex,label=$v_{\ell-2}$] {};
        \node (vell-1) at (2.3,0) [vertex,label=$v_{\ell-1}$] {};
        \node (vell) at (3.6,0) [vertex,label=$v_{\ell}$] {};
        \node (G1) at (-2.3,-1.4) [circle,draw=red,inner ysep=0.8em] {\color{red}\small{$G^{(1)}$}};
        \node (G2) at (-1,-1.4) [circle,draw=red,inner ysep=0.8em] {\color{red}\small{$G^{(2)}$}};
        \node (Gell-2) at (1,-1.4) [circle,draw=red,inner ysep=0em] {\color{red}\small{$G^{(\ell-2)}$}};
        \node (Gell-1) at (2.3,-1.4) [circle,draw=red,inner ysep=0em] {\color{red}\small{$G^{(\ell-1)}$}};
        \node (G) at (0,-0.7) [] {$G_{\ell}$};
        
        \draw [-,blue] (v0) --node[inner sep=2pt,swap]{} (v1);
        \draw [-,blue] (v1) --node[inner sep=2pt,swap]{} (v2);
        \draw [-,dotted,blue] (v2) --node[inner sep=2pt,swap]{} (vell-2);
        \draw [-,blue] (vell-2) --node[inner sep=2pt,swap]{} (vell-1);
        \draw [-,blue] (vell-1) --node[inner sep=2pt,swap]{} (vell);
        \draw [-,red,line width=1.5pt] (v1) --node[inner sep=2pt,swap]{} (G1);
        \draw [-,red,line width=1.5pt] (v2) --node[inner sep=2pt,swap]{} (G2);
        \draw [-,red,line width=1.5pt] (vell-2) --node[inner sep=2pt,swap]{} (Gell-2);
        \draw [-,red,line width=1.5pt] (vell-1) --node[inner sep=2pt,swap]{} (Gell-1);
    \end{tikzpicture}
}       
\end{subfloat}
\caption{\ffor{\ell}{k}s $(G,v)$ with a possible \mdec{k} $(F_k,M)$ for each satisfying the conditions. $F_k$ is coloured blue and the \mfor{k}s' graphs $G', G^{(1)}, \ldots, G^{(\ell-1)}$ are coloured red, as well as $M$, which is also in bold.}
\label{fig:ellFforcer}
\end{figure}
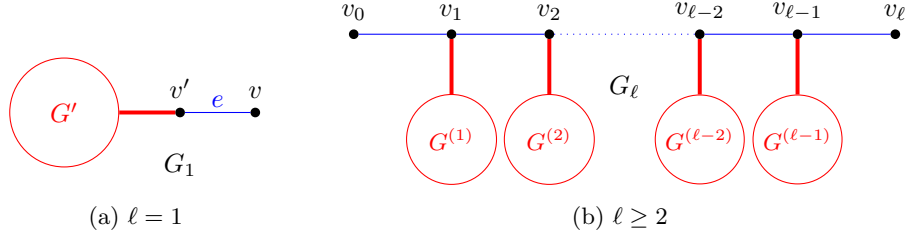

\subsubsection{\kor{k} gadget}

The first gadget we build using a forcer is a \kor{k} gadget. It is a gadget with three designated edges satisfying that for any \dec{k}{l} $(F_k,M)$ of its graph, the third edge belongs to $F_k$ if and only if at least one of the first two edges belongs to $F_k$. As we shall be associating edges belonging to $F_k$ with \true, the gadget serves as a disjunction.

\begin{definition}
\label{def:kOR}
    Let $k \geq 3$.
    Let $O=(V,E)$ be a graph and $p_1,p_2,o \in V$ be three vertices of degree $1$ in $O$.
    Let $e_1, e_2, f \in E$ be the edges containing $p_1, p_2$ and $o$, respectively.
    We say that $(O,p_1,p_2,o)$ is \kor{k}, if the following hold.
\begin{itemize}
    \item For any partition $\{\tilde{F}_k, \tilde{M}\}$ of the set $\{e_1,e_2 \}$ there exists a \mdec{k} $(F_k,M)$ of $O$ such that $\tilde{F}_k \subset F_k$ and $\tilde{M} \subset M$.
    \item For any \mdec{k} $(F_k,M)$ of $O$ in which $e_1, e_2 \in M$ we have $f \in M$.
    \item For any \mdec{k} $(F_k,M)$ of $O$ in which $e_1 \in F_k$ or $e_2 \in F_k$ we have $f \in F_k$.
\end{itemize}
\end{definition}

Such a gadget exists for all $k \geq 3$:

\begin{proposition}
\label{prop:ExistenceORgadget}
For all $k \geq 3$ there exists a \kor{k} $(O, p_1, p_2, o)$. 
Moreover, the statement still holds with the additional restriction of $O$ being a planar graph.
\end{proposition}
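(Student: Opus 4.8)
The plan is to exhibit, for every $k \ge 3$, an explicit planar graph $O$ together with the three interface edges $e_1, e_2, f$, built from a constant-size \emph{core} joining $p_1, p_2, o$ to which we attach copies of the forcer gadgets of \Cref{prop:Mforcer} and \Cref{prop:ellFforcer} at degree-$1$ vertices. The forcers are used to pin down the membership of auxiliary edges in any \mdec{k} $(F_k,M)$, so that the only remaining freedom in the decomposition sits at the few core vertices and encodes precisely the disjunction demanded by \Cref{def:kOR}. As with the forcer constructions, the exact core and the parameter $\ell$ of the attached \ffor{\ell}{k}s will depend on the range of $k$, and correctness is then checked by a short, local case analysis; I would present the construction through an explicit drawing for a generic $k$ (with separate small-$k$ pictures) and verify the three bullet points of \Cref{def:kOR} in turn.

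For the first bullet (realisability) I would, for each of the four partitions $\{\tilde F_k,\tilde M\}$ of $\{e_1,e_2\}$, produce an explicit \mdec{k} of $O$ extending it. Each such decomposition is obtained by gluing the decompositions of the attached forcers --- which exist by \Cref{prop:Mforcer} and \Cref{prop:ellFforcer} --- to a hand-chosen colouring of the core edges, and the only thing that needs verification is that every path created in $F_k$ has length at most $k$.

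For the remaining two bullets I would start from an arbitrary \mdec{k} $(F_k,M)$ of $O$. Restricting it to the edge set of each attached forcer yields a \mdec{k} of that forcer, so the forcing properties in \Cref{def:Mforcer} and \Cref{def:ellFforcer} fix all the auxiliary edges: each $\cM$-forcer edge lies in $M$, and each \ffor{\ell}{k} makes its attaching vertex an endpoint of an $\ell$-path in $F_k$. Having fixed these edges, only the core vertices remain, and their incident edges are governed by the two local rules of a \mdec{k}: at most one incident edge lies in $M$, and at most two lie in $F_k$ with no cycle closing in $F_k$. A short case analysis on which core edges lie in $M$ versus $F_k$ then shows that $e_1,e_2\in M$ forces $f\in M$, while $e_1\in F_k$ or $e_2\in F_k$ forces $f\in F_k$, which are exactly the second and third bullets. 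Planarity is immediate: the core is planar, each attached forcer is planar by the ``moreover'' clauses of \Cref{prop:Mforcer} and \Cref{prop:ellFforcer}, and the attachments are made at degree-$1$ vertices, so the pieces can be placed in disjoint discs of a planar drawing of the core.

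The main obstacle is the $k$-boundedness bookkeeping. The forced forest edges supplied by the \ffor{\ell}{k}s carry long paths into the core, and the delicate point is to route them so that, in the ``both inputs false'' configuration where $f$ is driven into $M$ and in the ``one input true'' configuration where $f$ is driven into $F_k$, no two of these forced paths (together with $f$) ever merge into a single $F_k$-path of length exceeding $k$. This constraint is tightest when $k$ is small --- already for $k=3$ a naive symmetric gadget overflows --- so, exactly as in \Cref{prop:Mforcer}, the construction must be tailored to small values of $k$ separately, and choosing the core together with the forcer parameters $\ell$ to respect the length budget simultaneously in all configurations is the heart of the argument.
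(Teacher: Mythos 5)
Your proposal follows the same general lines as the paper's proof --- a small planar core joining $p_1,p_2,o$, forcers attached at degree-$1$ vertices to pin down auxiliary edges, a local case analysis at the core vertices, and planarity inherited from the planar forcers --- but it stops short of the one thing the proposition actually requires: an explicit gadget. \Cref{prop:ExistenceORgadget} is an existence statement, and its proof \emph{is} the construction; your text repeatedly defers exactly this (``the exact core and the parameter $\ell$ \ldots{} will depend on the range of $k$'', ``choosing the core together with the forcer parameters $\ell$ to respect the length budget \ldots{} is the heart of the argument''). Without a concrete core, concrete attachment points and concrete forcer parameters, none of the three bullets of \Cref{def:kOR} can actually be verified, so as it stands this is a proof plan rather than a proof.

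Moreover, the obstacle you single out --- that the $k$-boundedness bookkeeping forces separate constructions for small $k$, since ``already for $k=3$ a naive symmetric gadget overflows'' --- is avoided in the paper by a single \emph{asymmetric} construction that works uniformly for all $k \ge 3$. The paper's $O$ consists of a core with edges $f=xo$, $xv_1$, $xv_2$, $e_1=p_1v_1$, $e_2=p_2v_2$, and exactly one \ffor{(k-2)}{k} attached at $v_2$, with nothing attached at $v_1$. Since the forcer edge at $v_2$ must always lie in $F_k$, the degree-$3$ constraints at $x$ and $v_2$ (each needs exactly one incident matching edge) leave exactly four possible \mdec{k}s, one for each partition of $\{e_1,e_2\}$, and in each of them $f$ lands where the disjunction demands. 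The longest $F_k$-paths arising are a $k$-path (the forced $(k-2)$-path plus two core edges) and the $3$-path on $e_1, v_1x, f$ --- the latter being the only place $k \ge 3$ is used --- precisely because only one side of the core carries a forced long path. So no case split on $k$ is needed at the level of the OR gadget; all small-$k$ casework is already encapsulated inside \Cref{prop:Mforcer} and \Cref{prop:ellFforcer}, which are used as black boxes.
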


\begin{proof}
Construct the graph $O$ with vertices $p_1, p_2, o$ as in \Cref{fig:or}, where $(G_{k-2},v)$ is a \ffor{(k-2)}{k}.

As in any \mdec{k} $(F_k,M)$ of $O$ each vertex of degree $3$ has to be incident with exactly one edge in $M$, and as $F_k$ does not contain any paths on $k+1$ edges, it is straightforward to check that there are exactly $4$ possible \mdec{k}s of $O$; those are depicted in \Cref{fig:or_pf}.
This shows that $(O,p_1,p_2,o)$ is a \kor{k}.
Moreover, by taking $G_{k-2}$ to be a planar graph, which is possible by \Cref{prop:ellFforcer}, we get that $O$ is planar as well.
\end{proof}

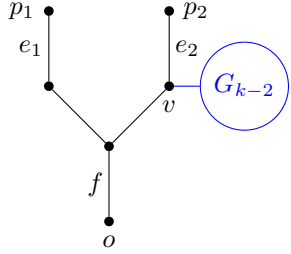
\begin{figure}
        \centering
	\begin{tikzpicture}[auto, vertex/.style={circle,draw=black!100,fill=black!100, thick,inner sep=0pt,minimum size=1mm}]
			\node (x) at (0,0) [vertex] {};
			\node (o) at (0,-1) [vertex,label=below:$o$] {};
			\node (v1) at (-0.8,0.8) [vertex] {};
			\node (v2) at (0.8,0.8) [vertex,label=below:$v$] {};
			\node (p1) at (-0.8,1.8) [vertex,label=left:$p_1$] {};
			\node (p2) at (0.8,1.8) [vertex,label=right:$p_2$] {};
                \node (G) at (1.8,0.8) [circle,draw=blue,inner ysep=0.3em] {\color{blue}$G_{k-2}$};
			
			\draw [-] (x) --node[inner sep=2pt,swap]{$f$} (o);
                \draw [-] (v1) --node[inner sep=2pt]{} (x);
                \draw [-] (v2) --node[inner sep=2pt]{} (x);
                \draw [-] (p1) --node[inner sep=2pt,swap]{$e_1$} (v1);
                \draw [-] (p2) --node[inner sep=2pt]{$e_2$} (v2);
                \draw [-,blue] (v2) --node[inner sep=2pt]{} (G);
		\end{tikzpicture}
	\caption{The graph $O$ with vertices $p_1, p_2, o$ of degree $1$, where $(G_{k-2},v)$ is an $((k-2)F,k)$-forcer.}
	\label{fig:or}
\end{figure}

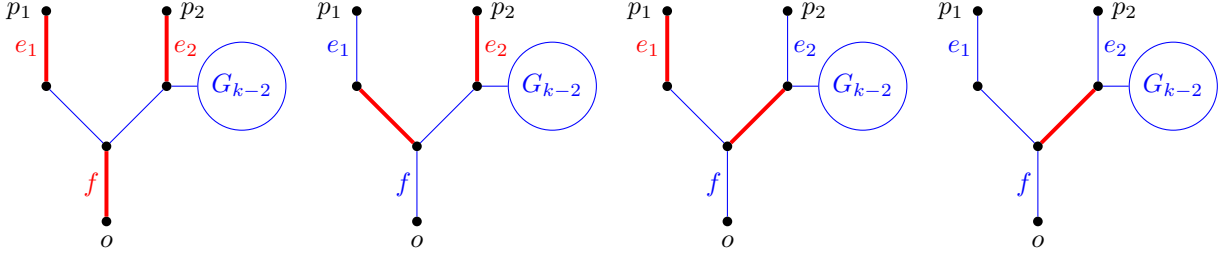
\begin{figure}
    \centering
    \begin{minipage}{.25\textwidth}
	\centering
	\begin{tikzpicture}[auto, vertex/.style={circle,draw=black!100,fill=black!100, thick,inner sep=0pt,minimum size=1mm}]
            \node (x) at (0,0) [vertex] {};
            \node (o) at (0,-1) [vertex,label=below:$o$] {};
            \node (v1) at (-0.8,0.8) [vertex] {};
            \node (v2) at (0.8,0.8) [vertex] {};
            \node (p1) at (-0.8,1.8) [vertex,label=left:$p_1$] {};
            \node (p2) at (0.8,1.8) [vertex,label=right:$p_2$] {};
            \node (G) at (1.8,0.8) [circle,draw=blue,inner ysep=0.3em] {\color{blue}$G_{k-2}$};
            
            \draw [-,red,line width=1.5pt] (x) --node[inner sep=2pt,swap]{$f$} (o);
            \draw [-,blue] (v1) --node[inner sep=2pt]{} (x);
            \draw [-,blue] (v2) --node[inner sep=2pt]{} (x);
            \draw [-,red,line width=1.5pt] (p1) --node[inner sep=2pt,swap]{$e_1$} (v1);
            \draw [-,red,line width=1.5pt] (p2) --node[inner sep=2pt]{$e_2$} (v2);
            \draw [-,blue] (v2) --node[inner sep=2pt]{} (G);
	\end{tikzpicture}
    \end{minipage}%
    \begin{minipage}{.25\textwidth}
	\centering
	\begin{tikzpicture}[auto, vertex/.style={circle,draw=black!100,fill=black!100, thick,inner sep=0pt,minimum size=1mm}]
            \node (x) at (0,0) [vertex] {};
            \node (o) at (0,-1) [vertex,label=below:$o$] {};
            \node (v1) at (-0.8,0.8) [vertex] {};
            \node (v2) at (0.8,0.8) [vertex] {};
            \node (p1) at (-0.8,1.8) [vertex,label=left:$p_1$] {};
            \node (p2) at (0.8,1.8) [vertex,label=right:$p_2$] {};
            \node (G) at (1.8,0.8) [circle,draw=blue,inner ysep=0.3em] {\color{blue}$G_{k-2}$};
            
            \draw [-,blue] (x) --node[inner sep=2pt,swap]{$f$} (o);
            \draw [-,red,line width=1.5pt] (v1) --node[inner sep=2pt]{} (x);
            \draw [-,blue] (v2) --node[inner sep=2pt]{} (x);
            \draw [-,blue] (p1) --node[inner sep=2pt,swap]{$e_1$} (v1);
            \draw [-,red,line width=1.5pt] (p2) --node[inner sep=2pt]{$e_2$} (v2);
            \draw [-,blue] (v2) --node[inner sep=2pt]{} (G);
	\end{tikzpicture}
    \end{minipage}%
    \begin{minipage}{.25\textwidth}
	\centering
	\begin{tikzpicture}[auto, vertex/.style={circle,draw=black!100,fill=black!100, thick,inner sep=0pt,minimum size=1mm}]
            \node (x) at (0,0) [vertex] {};
            \node (o) at (0,-1) [vertex,label=below:$o$] {};
            \node (v1) at (-0.8,0.8) [vertex] {};
            \node (v2) at (0.8,0.8) [vertex] {};
            \node (p1) at (-0.8,1.8) [vertex,label=left:$p_1$] {};
            \node (p2) at (0.8,1.8) [vertex,label=right:$p_2$] {};
            \node (G) at (1.8,0.8) [circle,draw=blue,inner ysep=0.3em] {\color{blue}$G_{k-2}$};
            
            \draw [-,blue] (x) --node[inner sep=2pt,swap]{$f$} (o);
            \draw [-,blue] (v1) --node[inner sep=2pt]{} (x);
            \draw [-,red,line width=1.5pt] (v2) --node[inner sep=2pt]{} (x);
            \draw [-,red,line width=1.5pt] (p1) --node[inner sep=2pt,swap]{$e_1$} (v1);
            \draw [-,blue] (p2) --node[inner sep=2pt]{$e_2$} (v2);
            \draw [-,blue] (v2) --node[inner sep=2pt]{} (G);
	\end{tikzpicture}
    \end{minipage}%
    \begin{minipage}{.25\textwidth}
	\centering
	\begin{tikzpicture}[auto, vertex/.style={circle,draw=black!100,fill=black!100, thick,inner sep=0pt,minimum size=1mm}]
            \node (x) at (0,0) [vertex] {};
            \node (o) at (0,-1) [vertex,label=below:$o$] {};
            \node (v1) at (-0.8,0.8) [vertex] {};
            \node (v2) at (0.8,0.8) [vertex] {};
            \node (p1) at (-0.8,1.8) [vertex,label=left:$p_1$] {};
            \node (p2) at (0.8,1.8) [vertex,label=right:$p_2$] {};
            \node (G) at (1.8,0.8) [circle,draw=blue,inner ysep=0.3em] {\color{blue}$G_{k-2}$};
            
            \draw [-,blue] (x) --node[inner sep=2pt,swap]{$f$} (o);
            \draw [-,blue] (v1) --node[inner sep=2pt]{} (x);
            \draw [-,red,line width=1.5pt] (v2) --node[inner sep=2pt]{} (x);
            \draw [-,blue] (p1) --node[inner sep=2pt,swap]{$e_1$} (v1);
            \draw [-,blue] (p2) --node[inner sep=2pt]{$e_2$} (v2);
            \draw [-,blue] (v2) --node[inner sep=2pt]{} (G);
	\end{tikzpicture}
    \end{minipage}
     \caption{The only four possible \mdec{k}s of $O$.
     The \ffor{(k-2)}{k} $G_{k-2}$ as well as the edges of the $k$-bounded linear forest are in blue, and the edges of the matching are in red and bold.}
    \label{fig:or_pf}
\end{figure}

\subsubsection{\rej{k}}

\begin{definition}
\label{def:kFrejector}
    Let $k \geq 3$.
    Let $R=(V,E)$ be a graph and $n', n \in V$ be vertices of degree $1$ in $R$.
    Let $e', e \in E$ be the edges containing $n', n$, respectively. 
    We say that $(R, n', n)$ is an \emph{\rej{k}}, if the following hold.
\begin{itemize}
    \item There is no \mdec{k} $(F_k,M)$ of $R$ such that $e', e \in F_k$.
    \item For any partition $\{\tilde{F}_k,\tilde{M} \}$ of the set $\{e',e \}$ such that at least one of $e',e$ is in $\tilde{M}$, there exists a \mdec{k} $(F_k,M)$ of $R$ for which $\tilde{F}_k \subset F_k$ and $\tilde{M} \subset M$, and moreover, neither $e'$ nor $e$ is contained in a path of length greater than $1$ in $F_k$.
\end{itemize}
\end{definition}

An important property of a \rej{k} is that it consists of a graph which admits a \mdec{k} if and only if at least one of its input edges is contained in the matching.
This will be used later in the final reduction to verify that the truth value $\true$ is not being assigned to both $x$ and $\bar{x}$.

\begin{proposition}
\label{prop:ExistenceRejGadget}
    For all $k \geq 3$ there exists an \rej{k} $(R, n', n)$.
    Moreover, the statement still holds with the additional restriction of $R$ being a planar graph.
\end{proposition}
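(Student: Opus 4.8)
The plan is to give an explicit planar construction of $R$ assembled from two \ffor{(k-1)}{k}s and a short connecting path, and then to verify the two defining properties of \Cref{def:kFrejector} by tracing the forced edge-assignments. I build the skeleton as follows. Take the two input vertices $n',n$ of degree $1$, and join them by the edges $e'=n'a_1$ and $e=na_2$ to two \emph{buffer} vertices $a_1,a_2$, each of which I keep of degree $2$; write the second edge at $a_i$ as $g_i=a_ic_i$. To each $c_i$ I glue an \ffor{(k-1)}{k} $(G_{k-1},c_i)$, identifying its special vertex with $c_i$ (such a forcer exists and may be taken planar by \Cref{prop:ellFforcer}, as $1\le k-1\le k$), and I add a third edge $h_i=c_id_i$ at $c_i$. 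Finally I connect $d_1$ and $d_2$ through a single middle vertex $y$, i.e. I add the path $d_1yd_2$, so that $d_1,y,d_2$ each have degree $2$. The designated input edges of the rejector are $e'$ and $e$.

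The verification rests on three local facts, each immediate from the forcer property of \Cref{prop:ellFforcer} together with the degree bounds forced by a linear forest (degree $\le 2$ in $F_k$) and a matching (degree $\le 1$ in $M$). First, since the forcer edge at $c_i$ always lies in $F_k$ and contributes a $(k-1)$-path on the forcer side, exactly one of $g_i,h_i$ lies in $F_k$ and the other in $M$; in particular $g_i\in M$ \emph{forces} $h_i\in F_k$, and vice versa. Second, whenever $h_i\in F_k$ the forcer's $(k-1)$-path followed by $h_i$ is already a $k$-path ending at $d_i$, so $d_i$ can carry no further $F_k$-edge and hence $d_iy\in M$. Third, and symmetrically, whenever $g_i\in F_k$ the path $a_ic_i$ together with the forcer is a $k$-path ending at $a_i$, so $a_i$ carries no other $F_k$-edge; consequently if $e'\in F_k$ (resp. $e\in F_k$) then $g_1\in M$ (resp. $g_2\in M$), since $g_1\in F_k$ would produce the path $n'a_1c_1\cdots$ of length $1+1+(k-1)=k+1>k$. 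The choice $\ell=k-1$ is exactly the value making these two opposing length constraints meet: it is large enough that adding $e'$ to a $g_1\in F_k$ path overshoots $k$, yet small enough that a lone $g_i\in F_k$ (with the buffer edge in $M$) is admissible.

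With these facts the first bullet of \Cref{def:kFrejector} is forced. Suppose $e',e\in F_k$. By the third fact $g_1,g_2\in M$, whence by the first fact $h_1,h_2\in F_k$, whence by the second fact $d_1y\in M$ and $yd_2\in M$; but then $y$ meets two matching edges, a contradiction, so no such \mdec{k} exists. For the second bullet I exhibit, for each of the three partitions of $\{e',e\}$ with at least one edge in $\tilde M$, a valid \mdec{k} in which the $F_k$-input is isolated. For $\tilde M=\{e',e\}$ I set $e',e\in M$, forcing $g_1,g_2\in F_k$ (via the buffers), then $h_1,h_2\in M$, then $d_1y,yd_2\in F_k$, giving the length-$2$ path $d_1yd_2$; all $F_k$-paths have length $k$ or $2$, so this is legal and the isolation condition is vacuous. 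For $\tilde M=\{e\}$, $\tilde F_k=\{e'\}$ I put $e'\in F_k$ and choose $g_1\in M$ (legal at the degree-$2$ vertex $a_1$), making $e'$ an isolated edge; this forces $h_1\in F_k$, $d_1y\in M$, $yd_2\in F_k$, while $e\in M$ forces $g_2\in F_k$, $h_2\in M$, consistent with $yd_2\in F_k$, and one checks every $F_k$-path has length $\le k$. The case $\tilde M=\{e'\}$ is symmetric. Existence of the underlying \mdec{k}s of the glued forcers is guaranteed by \Cref{prop:ellFforcer}, and planarity of the whole is clear since the forcers are planar and attach at single vertices along a path-like skeleton.

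I expect the main obstacle to be pinning down precisely this forcer length and connector: the natural first attempts (attaching $e',e$ to a single forced path, or routing $g_1,g_2$ into one constraint vertex) either destroy the isolation requirement or collapse the intended ``at least one of $e',e$ in $M$'' behaviour into an exclusive-or or an ``and''. The delicate point is that the two forcers act as inverters turning ``$g_i\in M$'' into a \emph{rigid} length-$k$ signal $h_i\in F_k$, and that the detour $d_1yd_2$ (rather than a direct edge $c_1c_2$ or a shared vertex) is exactly what lets both signals coexist in $M$ (for the $\{e',e\}\subset\tilde M$ case) while making both coexisting in $F_k$ collide at $y$.
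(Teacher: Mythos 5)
Your construction is correct, and it takes a genuinely different route from the paper's. The paper's \rej{k} chains $\khalf$ copies of the \ffor{(k-2)}{k} along a caterpillar $n',u_1,w_1,u_2,\ldots,w_{\khalf},u_{\khalf+1},n$, and rules out $e',e\in F_k$ by a cascading argument: either every pendant edge $v_jw_j$ lies in $M$, so that $(n',u_1,w_1,\ldots,u_{\khalf+1},n)$ is a path of length $2\khalf+2\ge k+1$ in $F_k$, or one takes the minimal $j_0$ with $v_{j_0}w_{j_0}\in F_k$ and propagates forced assignments along the chain until a path of length greater than $k$ appears at the far end. You instead use only two forcers, of the longer length $k-1$, as inverters, and replace this iteration by a fixed chain of local implications, $e',e\in F_k \Rightarrow g_1,g_2\in M \Rightarrow h_1,h_2\in F_k \Rightarrow d_1y,\,yd_2\in M$, which collides at the degree-two vertex $y$, where two matching edges cannot meet. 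Both arguments rest on exactly the same ingredients --- \Cref{prop:ellFforcer} together with the degree constraints of a \mdec{k} --- and both verify the second bullet of \Cref{def:kFrejector} by the same three-case analysis, with any input edge placed in $F_k$ automatically isolated because its neighbouring buffer edge goes into $M$. What your version buys: a rejector built from $O(1)$ forcers, hence of size $O(k)$ rather than the paper's $\Theta(k^2)$, and an impossibility proof with no iteration along the gadget; your choice $\ell=k-1$ is admissible since $1\le k-1\le k$ for all $k\ge 3$. What the paper's version buys is mainly conceptual economy: it reuses the same \ffor{(k-2)}{k} that already appears in its \kor{k} gadget, whereas you need the forcer at a second parameter value.
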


\begin{proof}
Let $k\geq 3$.
We give an explicit construction of an \rej{k} $(R, n', n)$.
Let $\left\{(G_{k-2}^{(j)},v_j)\right\}_{j \in \khalf}$ be a collection of disjoint copies of the \ffor{(k-2)}{k} from the proof of \Cref{prop:ellFforcer}, and let $w_1, \ldots, w_{\khalf}, u_1, \ldots, u_{\khalf + 1}, n', n$ be $2\khalf +3$ distinct vertices, disjoint from all $G_{k-2}^{(j)}$'s.
Define the graph $R = (V,E)$ as follows (see \Cref{fig:ffrej}).
\begin{align*}
    V &\coloneqq \bigcup_{j=1}^{\khalf} V\left(G_{k-2}^{(j)} \right) \cup \{w_1, \ldots, w_{\khalf} \} \cup \{u_1, \ldots, u_{\khalf +1} \} \cup \{n', n \} \\
    E &\coloneqq \bigcup_{j=1}^{\khalf} \left(E\left(G_{k-2}^{(j)} \right)\cup \{v_j w_j, w_j u_j, w_j u_{j+1} \} \right) \cup \{e', e \}, 
\end{align*}

where $e' = n'u_1$ and $e = n u_{\khalf +1}$.
We claim that $(R,n',n)$ is an \rej{k}, by considering each item in \Cref{def:kFrejector}.

First, we show that there is no \mdec{k} $(F_k,M)$ of $R$ such that $e',e \in F_k$. 
Suppose otherwise and let $(F_k,M)$ be a \mdec{k} of $R$ such that $e',e \in F_k$.
We claim that there exists $1 \le j \le \khalf$ such that $v_j w_j \in F_k$.
Indeed, otherwise, for all $1 \le j \le \khalf$ we have $v_j w_j \in M$, implying that $u_jw_j, w_ju_{j+1} \in F_k$.
But then $(n',u_1,w_1,u_2,w_2, \ldots, w_{\khalf},u_{\khalf+1},n)$ would be a path of length $2\khalf +2 \ge k+1$ in $F_k$, which is a contradiction.

Let $j_0$ be the minimal $j$ for which $v_j w_j \in F_k$.
Then, for all $j < j_0$ we have $v_j w_j \in M$ and therefore $u_jw_j, w_ju_{j+1} \in F_k$.
Note that this implies $u_{j_0}w_{j_0} \in M$, as otherwise $(n',u_1,w_1,u_2, \ldots, w_{j_0}, v_{j_0})$ would be a path of length at least $3$ in $F_k$, connecting to a path of length $k-2$ in $G^{(j_0)}_{k-2} \cap F_k$ via $v_{j_0}$, forming a path of length greater than $k$ in $F_k$, a contradiction.
Since $u_{j_0}w_{j_0} \in M$, we have $w_{j_0}u_{j_0+1} \in F_k$.
Hence, since $v_{j_0}w_{j_0} \in F_k$ by the choice of $j_0$, we get that $(v_{j_0},w_{j_0},u_{j_0+1})$ is a path of length $2$ in $F_k$, connecting to a path of length $k-2$ in $G^{(j_0)}_{k-2} \cap F_k$ via $v_{j_0}$, forcing $u_{j_0+1}w_{j_0+1} \in M$.
Iterating this argument, we obtain that for all $j_0 \le j \le \khalf$ we have $u_jw_j \in M$ and $w_jv_j, w_ju_{j+1} \in F_k$.
But then $(n,u_{\khalf+1},w_{\khalf},v_{\khalf})$ is a path of length $3$ in $F_k$ connecting to a path of length $k-2$ in $G^{(\lfloor k/2 \rfloor)}_{k-2} \cap F_k$ via $v_{\khalf}$, forming a path of length greater than $k$ in $F_k$, a contradiction.

\vspace{10pt}

Secondly, given a partition $\{\tilde{F}_k, \tilde{M} \}$ of the set $\{e', e \}$ such that at least one of $e',e$ is in $\tilde{M}$, we give a \mdec{k} $(F_k,M)$ of $R$ satisfying $\tilde{F}_k \subset F_k$ and $\tilde{M} \subset M$.
For each $1 \le j \le \khalf$ we fix a \mdec{k} $(F_k^{(j)}, M^{(j)})$ of $G^{(j)}_{k-2}$.
By \Cref{prop:ellFforcer} and \Cref{def:ellFforcer} we know that $v_j$ is the end-vertex of a $(k-2)$-path in $F_{k}^{(j)}$ for every $1 \le j \le \khalf$.
We now consider all three possibilities for $\{\tilde{F}_k, \tilde{M} \}$ and we decompose the remaining edges to get $(F_k,M)$ as follows (see \Cref{fig:FFrejAssignments}).
\begin{enumerate}
    \item[(a)] $\tilde{F}_k = \{e'\}$; $\tilde{M} = \{e \}$.
    For all $j$, we let $u_jw_j \in M$ and $v_jw_j, w_ju_{j+1} \in F_k$.
    \item[(b)] $\tilde{F}_k = \{e\}$; $\tilde{M} = \{e' \}$.
    For all $j$, we let $w_ju_{j+1} \in M$ and $v_jw_j, u_jw_j \in F_k$.
    \item[(c)] $\tilde{F}_k = \emptyset$; $\tilde{M} = \{e',e \}$.
    For all $j$ we let $v_jw_j \in M$ and $u_jw_j, w_ju_{j+1} \in F_k$.
\end{enumerate}

It is straightforward to check that in all three cases $(F_k,M)$ is a valid \mdec{k} of $R$, noting that in the third case the path $u_1w_1u_2 \ldots w_{\khalf}u_{\khalf +1} \in F_k$ has length $2 \khalf \leq k$.
Note also that in all three cases neither $e'$ nor $e$ is contained in a path of length greater than $1$ in $F_k$.
Hence, $(R, n', n)$ is an \rej{k}.
Moreover, by taking all $G_{k-2}^{(j)}$'s to be planar graphs, which is possible by \Cref{prop:ellFforcer}, we get a graph $R$ which is planar as well.
\end{proof}

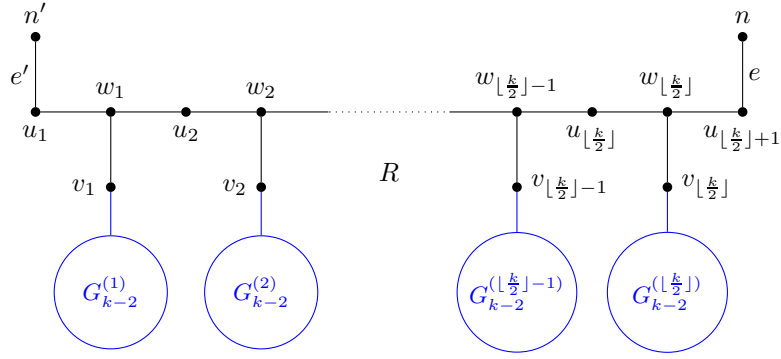
\begin{figure}
    \centering
	\begin{tikzpicture}[auto, vertex/.style={circle,draw=black!100,fill=black!100, thick,inner sep=0pt,minimum size=1mm}, novertex/.style={circle,draw=black!100,fill=black!100,inner sep=0pt,minimum size=0mm}]
			\node (n') at (-4.7,1) [vertex,label=$n'$] {};
                \node (n) at (4.7,1) [vertex,label=$n$] {};
   
                \node (x) at (-0.8,0) [novertex] {};
                \node (y) at (0.8,0) [novertex] {};
                \node (w2) at (-1.7,0) [vertex,label=$w_2$] {};
			\node (u2) at (-2.7,0) [vertex,label=below:$u_2$] {};
			\node (w1) at (-3.7,0) [vertex,label=$w_1$] {};
			\node (u1) at (-4.7,0) [vertex,label=below:$u_1$] {};
			\node (w3) at (1.7,0) [vertex,label=$w_{\khalf-1}$] {};
			\node (u4) at (2.7,0) [vertex,label=below:$u_{\khalf}$] {};
			\node (w4) at (3.7,0) [vertex,label=$w_{\khalf}$] {};
			\node (u5) at (4.7,0) [vertex,label=below:$u_{\khalf +1}$] {};

                \node (v1) at (-3.7,-1) [vertex,label=left:$v_1$] {};
                \node (v2) at (-1.7,-1) [vertex,label=left:$v_2$] {};
                \node (v3) at (1.7,-1) [vertex,label=right:$v_{\khalf -1}$] {};
                \node (v4) at (3.7,-1) [vertex,label=right:$v_{\khalf}$] {};
                
                \node (G1) at (-3.7,-2.4) [circle,draw=blue,inner ysep=1em] {\color{blue}\small{$G^{(1)}_{k-2}$}};
                \node (G2) at (-1.7,-2.4) [circle,draw=blue,inner ysep=1em] {\color{blue}\small{$G^{(2)}_{k-2}$}};
                \node (G3) at (1.7,-2.4) [circle,draw=blue,inner ysep=0em] {\color{blue}\small{$G^{(\khalf-1)}_{k-2}$}};
                \node (G4) at (3.7,-2.4) [circle,draw=blue,inner ysep=0.8em] {\color{blue}\small{$G^{(\khalf)}_{k-2}$}};
                \node (R) at (0,-0.8) [] {$R$};
			
			\draw [-,blue] (v1) --node[inner sep=2pt]{} (G1);
			\draw [-] (v1) --node[inner sep=2pt]{} (w1);
                \draw [-] (w1) --node[inner sep=2pt]{} (u1);
			\draw [-] (w1) --node[inner sep=2pt]{} (u2);
                
                \draw [-,blue] (v2) --node[inner sep=2pt]{} (G2);
			\draw [-] (v2) --node[inner sep=2pt]{} (w2);
                \draw [-] (w2) --node[inner sep=2pt]{} (u2);
			\draw [-] (w2) --node[inner sep=2pt]{} (x);
   
			\draw [-,dotted] (x) --node[inner sep=2pt,swap]{} (y);
   
                \draw [-,blue] (v3) --node[inner sep=2pt]{} (G3);
			\draw [-] (v3) --node[inner sep=2pt]{} (w3);
                \draw [-] (w3) --node[inner sep=2pt]{} (y);
			\draw [-] (w3) --node[inner sep=2pt]{} (u4);
                
                \draw [-,blue] (v4) --node[inner sep=2pt]{} (G4);
			\draw [-] (v4) --node[inner sep=2pt]{} (w4);
                \draw [-] (w4) --node[inner sep=2pt]{} (u4);
			\draw [-] (w4) --node[inner sep=2pt]{} (u5);

                \draw [-] (n') --node[inner sep=2pt,swap]{$e'$} (u1);
                \draw [-] (n) --node[inner sep=2pt]{$e$} (u5);
	\end{tikzpicture}
    \caption{An \rej{k} $(R, n', n)$ consisting of a graph $R$ with vertices $n', n$ of degree $1$.}
    \label{fig:ffrej}
\end{figure}

\begin{figure}
\centering
\begin{subfloat}[$\tilde{F}_k = \{e\}$; $\tilde{M} = \{e' \}$] {
\centering
    \begin{tikzpicture}[auto, vertex/.style={circle,draw=black!100,fill=black!100, thick,inner sep=0pt,minimum size=0.7mm}, novertex/.style={circle,draw=black!100,fill=black!100,inner sep=0pt,minimum size=0mm}]
			\node (n') at (-2.35,0.7) [vertex,label=$n'$] {};
                \node (n) at (2.35,0.7) [vertex,label=$n$] {};
   
                \node (x) at (-0.4,0) [novertex] {};
                \node (y) at (0.4,0) [novertex] {};
                \node (w2) at (-0.85,0) [vertex] {};
			\node (u2) at (-1.35,0) [vertex] {};
			\node (w1) at (-1.85,0) [vertex] {};
			\node (u1) at (-2.35,0) [vertex] {};
			\node (w3) at (0.85,0) [vertex] {};
			\node (u4) at (1.35,0) [vertex] {};
			\node (w4) at (1.85,0) [vertex] {};
			\node (u5) at (2.35,0) [vertex] {};

                \node (v1) at (-1.85,-0.5) [vertex] {};
                \node (v2) at (-0.85,-0.5) [vertex] {};
                \node (v3) at (0.85,-0.5) [vertex] {};
                \node (v4) at (1.85,-0.5) [vertex] {};
                
                \node (G1) at (-1.85,-1.2) [circle,draw=blue,inner ysep=1em] {};
                \node (G2) at (-0.85,-1.2) [circle,draw=blue,inner ysep=1em] {};
                \node (G3) at (0.85,-1.2) [circle,draw=blue,inner ysep=1em] {};
                \node (G4) at (1.85,-1.2) [circle,draw=blue,inner ysep=1em] {};
			
			\draw [-,blue] (v1) --node[inner sep=2pt]{} (G1);
			\draw [-,blue] (v1) --node[inner sep=2pt]{} (w1);
                \draw [-,blue] (w1) --node[inner sep=2pt]{} (u1);
			\draw [-,red,line width=1.5pt] (w1) --node[inner sep=2pt]{} (u2);
                
                \draw [-,blue] (v2) --node[inner sep=2pt]{} (G2);
			\draw [-,blue] (v2) --node[inner sep=2pt]{} (w2);
                \draw [-,blue] (w2) --node[inner sep=2pt]{} (u2);
			\draw [-,red,line width=1.5pt] (w2) --node[inner sep=2pt]{} (x);
   
			\draw [-,dotted] (x) --node[inner sep=2pt,swap]{} (y);
   
                \draw [-,blue] (v3) --node[inner sep=2pt]{} (G3);
			\draw [-,blue] (v3) --node[inner sep=2pt]{} (w3);
                \draw [-,blue] (w3) --node[inner sep=2pt]{} (y);
			\draw [-,red,line width=1.5pt] (w3) --node[inner sep=2pt]{} (u4);
                
                \draw [-,blue] (v4) --node[inner sep=2pt]{} (G4);
			\draw [-,blue] (v4) --node[inner sep=2pt]{} (w4);
                \draw [-,blue] (w4) --node[inner sep=2pt]{} (u4);
			\draw [-,red,line width=1.5pt] (w4) --node[inner sep=2pt]{} (u5);

                \draw [-,red,line width=1.5pt] (n') --node[inner sep=2pt,swap]{$e'$} (u1);
                \draw [-,blue] (n) --node[inner sep=2pt]{$e$} (u5);
	\end{tikzpicture}
}
\end{subfloat}
\hspace{10pt}       
\begin{subfloat}[$\tilde{F}_k = \{e'\}$; $\tilde{M} = \{e \}$] {
\centering
    \begin{tikzpicture}[auto, vertex/.style={circle,draw=black!100,fill=black!100, thick,inner sep=0pt,minimum size=0.7mm}, novertex/.style={circle,draw=black!100,fill=black!100,inner sep=0pt,minimum size=0mm}]
			\node (n') at (-2.35,0.7) [vertex,label=$n'$] {};
                \node (n) at (2.35,0.7) [vertex,label=$n$] {};
   
                \node (x) at (-0.4,0) [novertex] {};
                \node (y) at (0.4,0) [novertex] {};
                \node (w2) at (-0.85,0) [vertex] {};
			\node (u2) at (-1.35,0) [vertex] {};
			\node (w1) at (-1.85,0) [vertex] {};
			\node (u1) at (-2.35,0) [vertex] {};
			\node (w3) at (0.85,0) [vertex] {};
			\node (u4) at (1.35,0) [vertex] {};
			\node (w4) at (1.85,0) [vertex] {};
			\node (u5) at (2.35,0) [vertex] {};

                \node (v1) at (-1.85,-0.5) [vertex] {};
                \node (v2) at (-0.85,-0.5) [vertex] {};
                \node (v3) at (0.85,-0.5) [vertex] {};
                \node (v4) at (1.85,-0.5) [vertex] {};
                
                \node (G1) at (-1.85,-1.2) [circle,draw=blue,inner ysep=1em] {};
                \node (G2) at (-0.85,-1.2) [circle,draw=blue,inner ysep=1em] {};
                \node (G3) at (0.85,-1.2) [circle,draw=blue,inner ysep=1em] {};
                \node (G4) at (1.85,-1.2) [circle,draw=blue,inner ysep=1em] {};

			\draw [-,blue] (v1) --node[inner sep=2pt]{} (G1);
			\draw [-,blue] (v1) --node[inner sep=2pt]{} (w1);
                \draw [-,red,line width=1.5pt] (w1) --node[inner sep=2pt]{} (u1);
			\draw [-,blue] (w1) --node[inner sep=2pt]{} (u2);
                
                \draw [-,blue] (v2) --node[inner sep=2pt]{} (G2);
			\draw [-,blue] (v2) --node[inner sep=2pt]{} (w2);
                \draw [-,red,line width=1.5pt] (w2) --node[inner sep=2pt]{} (u2);
			\draw [-,blue] (w2) --node[inner sep=2pt]{} (x);
   
			\draw [-,dotted] (x) --node[inner sep=2pt,swap]{} (y);
   
                \draw [-,blue] (v3) --node[inner sep=2pt]{} (G3);
			\draw [-,blue] (v3) --node[inner sep=2pt]{} (w3);
                \draw [-,red,line width=1.5pt] (w3) --node[inner sep=2pt]{} (y);
			\draw [-,blue] (w3) --node[inner sep=2pt]{} (u4);
                
                \draw [-,blue] (v4) --node[inner sep=2pt]{} (G4);
			\draw [-,blue] (v4) --node[inner sep=2pt]{} (w4);
                \draw [-,red,line width=1.5pt] (w4) --node[inner sep=2pt]{} (u4);
			\draw [-,blue] (w4) --node[inner sep=2pt]{} (u5);

                \draw [-,blue] (n') --node[inner sep=2pt,swap]{$e'$} (u1);
                \draw [-,red,line width=1.5pt] (n) --node[inner sep=2pt]{$e$} (u5);
	\end{tikzpicture}
}       
\end{subfloat}
\vspace{20pt}%
\begin{subfloat}[$\tilde{F}_k = \emptyset$; $\tilde{M} = \{e',e \}$] {
\centering
    \begin{tikzpicture}[auto, vertex/.style={circle,draw=black!100,fill=black!100, thick,inner sep=0pt,minimum size=0.7mm}, novertex/.style={circle,draw=black!100,fill=black!100,inner sep=0pt,minimum size=0mm}]
			\node (n') at (-2.35,0.7) [vertex,label=$n'$] {};
                \node (n) at (2.35,0.7) [vertex,label=$n$] {};
   
                \node (x) at (-0.4,0) [novertex] {};
                \node (y) at (0.4,0) [novertex] {};
                \node (w2) at (-0.85,0) [vertex] {};
			\node (u2) at (-1.35,0) [vertex] {};
			\node (w1) at (-1.85,0) [vertex] {};
			\node (u1) at (-2.35,0) [vertex] {};
			\node (w3) at (0.85,0) [vertex] {};
			\node (u4) at (1.35,0) [vertex] {};
			\node (w4) at (1.85,0) [vertex] {};
			\node (u5) at (2.35,0) [vertex] {};

                \node (v1) at (-1.85,-0.5) [vertex] {};
                \node (v2) at (-0.85,-0.5) [vertex] {};
                \node (v3) at (0.85,-0.5) [vertex] {};
                \node (v4) at (1.85,-0.5) [vertex] {};
                
                \node (G1) at (-1.85,-1.2) [circle,draw=blue,inner ysep=1em] {};
                \node (G2) at (-0.85,-1.2) [circle,draw=blue,inner ysep=1em] {};
                \node (G3) at (0.85,-1.2) [circle,draw=blue,inner ysep=1em] {};
                \node (G4) at (1.85,-1.2) [circle,draw=blue,inner ysep=1em] {};
                \node (space) at (0,1.7) [novertex] {};
			
			\draw [-,blue] (v1) --node[inner sep=2pt]{} (G1);
			\draw [-,red,line width=1.5pt] (v1) --node[inner sep=2pt]{} (w1);
                \draw [-,blue] (w1) --node[inner sep=2pt]{} (u1);
			\draw [-,blue] (w1) --node[inner sep=2pt]{} (u2);
                
                \draw [-,blue] (v2) --node[inner sep=2pt]{} (G2);
			\draw [-,red,line width=1.5pt] (v2) --node[inner sep=2pt]{} (w2);
                \draw [-,blue] (w2) --node[inner sep=2pt]{} (u2);
			\draw [-,blue] (w2) --node[inner sep=2pt]{} (x);
   
			\draw [-,dotted] (x) --node[inner sep=2pt,swap]{} (y);
   
                \draw [-,blue] (v3) --node[inner sep=2pt]{} (G3);
			\draw [-,red,line width=1.5pt] (v3) --node[inner sep=2pt]{} (w3);
                \draw [-,blue] (w3) --node[inner sep=2pt]{} (y);
			\draw [-,blue] (w3) --node[inner sep=2pt]{} (u4);
                
                \draw [-,blue] (v4) --node[inner sep=2pt]{} (G4);
			\draw [-,red,line width=1.5pt] (v4) --node[inner sep=2pt]{} (w4);
                \draw [-,blue] (w4) --node[inner sep=2pt]{} (u4);
			\draw [-,blue] (w4) --node[inner sep=2pt]{} (u5);

                \draw [-,red,line width=1.5pt] (n') --node[inner sep=2pt,swap]{$e'$} (u1);
                \draw [-,red,line width=1.5pt] (n) --node[inner sep=2pt]{$e$} (u5);
	\end{tikzpicture}
}       
\end{subfloat}
\caption{Possible \mdec{k}s $(F_k,M)$ of $R$ such that $\tilde{F}_k \subset F_k$ and $\tilde{M} \subset M$. Copies of a \ffor{(k-2)}{k} as well as $F_k$ are coloured blue and $M$ is coloured red and in bold.}
\label{fig:FFrejAssignments}
\end{figure}
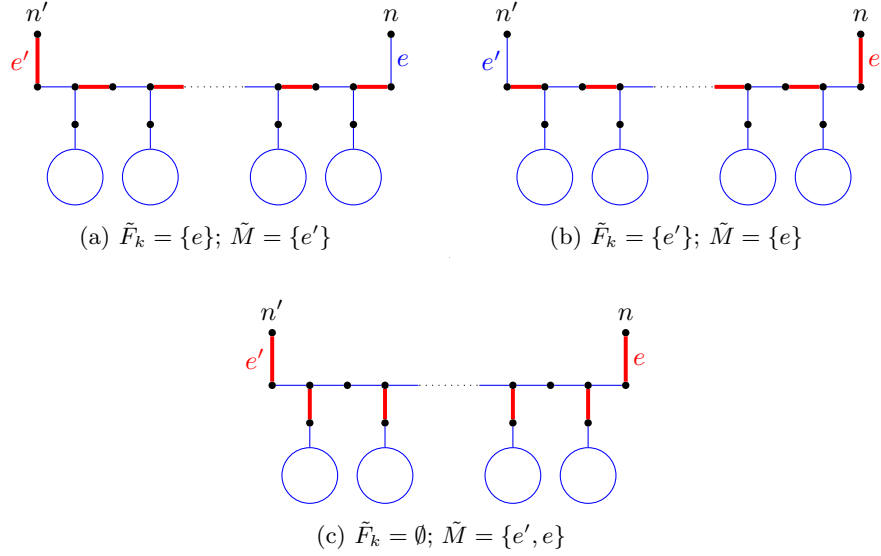

\subsection{Constructing a variable gadget}
\label{sec:VarGadget}

We construct the variable gadget that we shall later use in the final reduction.

\begin{construction}
\label{constr:VariableGadgetReduction}
Let $k \ge 3$.
We construct a graph $H$ as follows.
Let $(R,n',n)$ be an \rej{k} such that $R$ is a planar graph, let $e'$ and $e$ be the edges in $R$ containing $n'$ and $n$, respectively, and let $o$ be the vertex for which $e' = n'o$.
Let $(O,p_1,p_2,o)$ be a \kor{k}, such that $e'$ is the edge in $O$ containing $o$, and denote by $e_1$ and $e_2$ the edges in $O$ containing $p_1$ and $p_2$, respectively.
Note that we have $V(O) \cap V(R) = \{n',o\}$ and $E(O) \cap E(R) = \{e' \}$.
Let $H \coloneqq O \cup R$ (see \Cref{fig:gadget}).
We say that $(H,p_1,p_2,r)$ is a \emph{variable gadget}, $e_1, e_2$ and $p_1, p_2$ are its \emph{positive input edges} and \emph{positive input vertices}, respectively, and $e, r$ are its \emph{negative input edge} and its \emph{negative input vertex}, respectively.
\end{construction}

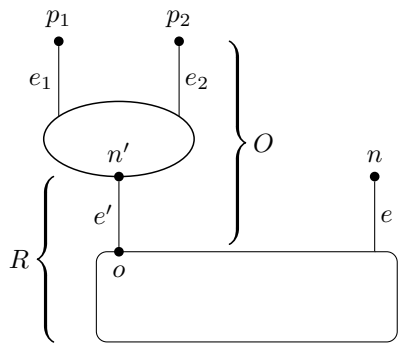
\begin{figure}
    \centering
	\begin{tikzpicture}[auto, vertex/.style={circle,draw=black!100,fill=black!100, thick,inner sep=0pt,minimum size=1mm}, novertex/.style={circle,draw=black!100,fill=black!100,inner sep=0pt,minimum size=0mm},decoration={calligraphic brace, mirror, amplitude=6pt,raise=2pt}]

            \draw [line width=0.2mm] (-1.2,2) ellipse (1cm and 0.5cm);
            \node (n') at (-1.2,1.5) [vertex,label=above:$n'$] {};
            \node (o) at (-1.2,0.5) [vertex,label=below:$o$] {};
            \node (x1) at (-2,2.3) [novertex] {};
            \node (x2) at (-0.4,2.3) [novertex] {};
            \node (p1) at (-2,3.3) [vertex,label=above:$p_1$] {};
            \node (p2) at (-0.4,3.3) [vertex,label=above:$p_2$] {};
            
            \draw [-] (n') --node[inner sep=2pt,swap]{$e'$} (o);
            \draw [-] (p1) --node[inner sep=2pt,swap]{$e_1$} (x1);
            \draw [-] (p2) --node[inner sep=2pt]{$e_2$} (x2);

            \draw [decorate,very thick]
            (0.2,0.6) -- (0.2,3.3) node [black,midway,xshift=0.8cm] {
            $O$};

            \draw[rounded corners] (-1.5,-0.7) rectangle (2.5,0.5) {};
            \node (o1) at (2.2,0.5) [novertex] {};
            \node (n) at (2.2,1.5) [vertex,label=$n$] {};

            \draw [-] (n) --node[inner sep=2pt]{$e$} (o1);

            \draw [decorate,very thick]
            (-2,1.5) -- (-2,-0.7) node [black,midway,xshift=-0.8cm] {$R$};
            
	\end{tikzpicture}
     \caption{A variable gadget $(H,p_1,p_2,n)$, consisting of a \kor{k} $(O,p_1,p_2,o)$ and an \rej{k} $(R,n',n)$, such that $O$ and $R$ share the edge $e'=n'o$.}
    \label{fig:gadget}
\end{figure}

The following lemma captures the most important property of a variable gadget.

\begin{lemma}
\label{lem:FillingVariableGadget}
Let $k \ge 3$, and let $(H,p_1,p_2,n)$ be a copy of the variable gadget described in \Cref{constr:VariableGadgetReduction}, with $e_1, e_2$ as its positive input edges and $e$ as its negative input edge.
Let $\{\tilde{F}_k, \tilde{M} \}$ be a partition of the set $\{e_1, e_2, e \}$, such that neither $\{e_1, e \}$ nor $\{e_2, e \}$ is contained in $\tilde{F}_k$.
Then there exists a \mdec{k} $(F_k,M)$ of $H$ such that $\tilde{F}_k \subset F_k$, $\tilde{M} \subset M$.

Conversely, given an \mdec{k} $(F_k,M)$ of $H$, we have that neither of $\{e_1, e \}, \{e_2, e \}$ is contained in $F_k$.
\end{lemma}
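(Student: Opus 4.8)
The plan is to exploit the fact that $H$ is the union of the \kor{k} graph $O$ and the \rej{k} graph $R$ glued along the single shared edge $e'=n'o$, with $V(O)\cap V(R)=\{n',o\}$ and $E(O)\cap E(R)=\{e'\}$, where $o$ has degree $1$ in $O$ and $n'$ has degree $1$ in $R$. This almost-disjointness is what lets me both assemble decompositions of $H$ from decompositions of the two pieces and restrict decompositions of $H$ to each piece. Throughout I write $f=e'$ for the edge of $O$ containing $o$ from \Cref{def:kOR}.

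\textbf{Forward direction.} Given the partition $\{\tilde F_k,\tilde M\}$ of $\{e_1,e_2,e\}$, I would first feed its restriction to $\{e_1,e_2\}$ into the first property of the \kor{k} to obtain an \mdec{k} $(F_k^O,M^O)$ of $O$ extending it. The second and third \kor{k} properties then pin down $e'=f$: it lies in $F_k^O$ precisely when at least one of $e_1,e_2$ is in $\tilde F_k$, and in $M^O$ otherwise. The crucial observation is that the hypothesis — neither $\{e_1,e\}$ nor $\{e_2,e\}$ lies in $\tilde F_k$ — forces at least one of $e',e$ onto the matching side: if $e\in\tilde F_k$ then $e_1,e_2\in\tilde M$, whence $e'\in M^O$, while if $e\in\tilde M$ there is nothing to check. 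Thus $\{e',e\}$, with these assignments, is a legal input to the second property of the \rej{k}, which yields an \mdec{k} $(F_k^R,M^R)$ of $R$ realising it in which neither $e'$ nor $e$ lies on a path of length greater than $1$ in $F_k^R$.

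It then remains to glue: set $F_k:=F_k^O\cup F_k^R$ and $M:=M^O\cup M^R$, noting the two decompositions agree on $e'$ by construction. The matching condition is immediate, since a conflict could only arise at $n'$ or $o$, each of which has degree $1$ on one of the two sides. For the forest condition I would argue that every component of $F_k$ lies wholly inside $O$ or wholly inside $R$: when $e'\in M$ the forests $F_k^O,F_k^R$ share no edge and each of $n',o$ carries forest edges from only one side, so this is clear; when $e'\in F_k$, the rejector guarantee makes $e'$ an isolated edge of $F_k^R$, so $o$ carries no further forest edge and the component of $e'$ in $F_k$ coincides with its component in $F_k^O$, of length at most $k$. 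Since $\tilde F_k\subset F_k$ and $\tilde M\subset M$ by construction, this settles the first claim.

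\textbf{Converse.} Here I would restrict a given \mdec{k} $(F_k,M)$ of $H$ to $O$ and to $R$; since $F_k^O:=F_k\cap E(O)$ and $M^O:=M\cap E(O)$ (and likewise for $R$) partition the respective edge sets and inherit the $k$-bounded linear forest and matching properties, they are \mdec{k}s of $O$ and of $R$. Suppose for contradiction that $e_1,e\in F_k$ (the $e_2$ case is symmetric). Then $e_1\in F_k^O$, so the third \kor{k} property gives $e'=f\in F_k^O\subseteq F_k$, hence $e'\in F_k^R$; together with $e\in F_k^R$ this contradicts the first \rej{k} property, which forbids an \mdec{k} of $R$ with $e',e\in F_k$. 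The one genuinely delicate point is the forest condition in the gluing when $e'\in F_k$, where a priori the $O$-path through $e'$ could be prolonged through $o$ and exceed length $k$; this is precisely what the ``no path of length greater than $1$'' clause of \Cref{def:kFrejector} rules out, and ensuring that clause is applicable (at least one of $e',e$ in the matching) is exactly why the lemma is phrased in terms of the condition that neither $\{e_1,e\}$ nor $\{e_2,e\}$ lies in $\tilde F_k$.
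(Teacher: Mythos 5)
Your proof is correct and takes essentially the same route as the paper's: use the \kor{k} properties to decompose $O$ and pin down $e'$, observe that the hypothesis forces at least one of $e',e$ into the matching so the second \rej{k} property applies to $R$, glue along $e'$, and prove the converse by restricting a decomposition of $H$ to the two pieces and combining the third \kor{k} property with the first \rej{k} property. Your explicit verification that the glued pair $(F_k,M)$ is a valid \mdec{k} of $H$ (using that $n'$ and $o$ each have degree $1$ on one side, and the ``no path of length greater than $1$'' clause when $e'\in F_k$) is carried out in more detail than in the paper, which treats this step as immediate.
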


\begin{proof}
    This lemma is a corollary of \Cref{prop:ExistenceORgadget,prop:ExistenceRejGadget}.
    Let $(\tilde{F}_k,\tilde{M})$ be a partition of the set $\{e_1, e_2, e\}$ satisfying the assumption in the statement, and let $(\tilde{F}_k^{O}, \tilde{M}^O)$ the induced partition on $\{e_1, e_2\}$.
    By \Cref{prop:ExistenceORgadget} there exists a \mdec{k} $(F_k^O, M^O)$ of $O$ for which $\tilde{F}_k^O \subset F_k^O$ and $\tilde{M}^O \subset M^O$.
    Let $\{\tilde{F}_k^R, \tilde{M}^R \}$ the partition of the set $\{e', e\}$ induced by $\{\tilde{F}_k\cup F_k^O, \tilde{M}\cup M^O \}$.
    By \Cref{def:kFrejector} there exists a \mdec{k} $(F_k^R, M^R)$ for which $\tilde{F}_k^R \subset F_k^R$ and $\tilde{M}^R \subset M^R$ if and only if at least one of $e',e$ is in $\tilde{M}^R$.
    However, since $(O,p_1,p_2,o)$ is a \kor{k}, this is equivalent to having neither $\{e_1, e \}$ nor $\{e_2, e \}$ contained in $\tilde{F}_k^R$, which is guaranteed by our assumptions.
    Let $(F_k^R, M^R)$ be such a \mdec{k}, and set $(F_k, M) \coloneqq (F_k^O\cup F_k^R, M^O\cup M^R)$.
    We then have $\tilde{F}_k \subset \tilde{F}_k^O\cup \tilde{F}_k^R \subset F_k^O\cup F_k^R = F_k$ and $\tilde{M} \subset \tilde{M}^O\cup \tilde{M}^R \subset M^O\cup M^R = M$, as required.

    Conversely, let $(F_k,M)$ be a \mdec{k} of $H$.
    If $e \in F_k$ then since $(R,n',n)$ is an \rej{k} we must have $e' \in M$.
    But then, as $(O,p_1,p_2,o)$ is a \kor{k}, we get that $e_1,e_2 \in M$, proving the statement.
\end{proof}

\subsection{The reduction}
\label{sec:Reduction}

We now finally describe the graph that we use to establish the reduction.

\begin{construction}
\label{constr:GraphReductionNP}
    Let $(X,\cC)$ be an instance of Planar $(\le 3,3)$–SAT and $k \ge 3$.
    We construct the planar graph $G_{(X,\cC)}$ as follows.
    For every clause $C \in \cC$ let $v_C$ be a vertex.
    For every variable $x \in X$ such that $x$ appears in $C_1, C_2$ and $\bar{x}$ appears in $C_3$, for some $C_1, C_2, C_3 \in \cC$, let $p^x_{C_1}$ and $p^x_{C_2}$ be positive input vertices, and $n^x_{C_3}$ be a negative input vertex.
    Then, let $(H^x,p^x_{C_1},p^x_{C_2},n^x_{C_3})$ be a copy of the variable gadget constructed in \ref{constr:VariableGadgetReduction}.
    Furthermore, for every clause $C \in \cC$ with $|C|=2$, attach to $v_C$ a \ffor{1}{k} $(G_C,v_C)$.
    Then $G_{(X,\cC)}$ is defined to be the union of all defined above, with the edges connecting the input vertices of each variable gadget to the relevant clause vertices, namely,
    \[\{v_{C_1} p^x_{C_1}, v_{C_2} p^x_{C_2}, v_{C_3} n^x_{C_3} : x \in X, \, C_1, C_2, C_3 \in \cC \text{ s.t. } x \in C_1,C_2;\, \bar{x}\in C_3 \}.\]
    See \Cref{fig:GXC} for an illustration.
\end{construction}

\begin{figure}
        \centering	\begin{tikzpicture}[auto, vertex/.style={circle,draw=black!100,fill=black!100, thick,inner sep=0pt,minimum size=1mm}, novertex/.style={circle,draw=black!100,fill=black!100,inner sep=0pt,minimum size=0mm}]]

            \draw [line width=0.25mm] (-6.25,0) ellipse (1cm and 0.5cm);
		\node [label={[shift={(0,-0.35)}]$x_1$}] at (-6.25,0) {};
            \node (p11) at (-7,1) [vertex] {};
            \node (p12) at (-6.5,1) [vertex] {};
            \node (n1) at (-5.5,1) [vertex] {};
            \node (v11) at (-7,0.33072) [novertex] {};
            \node (v12) at (-6.5,0.48412) [novertex] {};
            \node (u1) at (-5.5,0.33072) [novertex] {};

            \draw [-,line width=1.5pt] (p11) --node[inner sep=2pt]{} (v11);
            \draw [-,line width=1.5pt] (p12) --node[inner sep=2pt]{} (v12);
            \draw [-,dashed] (n1) --node[inner sep=2pt]{} (u1);

            \draw [line width=0.25mm] (-3.75,0) ellipse (1cm and 0.5cm);
		\node [label={[shift={(0,-0.35)}]$x_2$}] at (-3.75,0) {};
            \node (p21) at (-4.5,1) [vertex] {};
            \node (p22) at (-4,1) [vertex] {};
            \node (n2) at (-3,1) [vertex] {};
            \node (v21) at (-4.5,0.33072) [novertex] {};
            \node (v22) at (-4,0.48412) [novertex] {};
            \node (u2) at (-3,0.33072) [novertex] {};

            \draw [-,line width=1.5pt] (p21) --node[inner sep=2pt]{} (v21);
            \draw [-,line width=1.5pt] (p22) --node[inner sep=2pt]{} (v22);
            \draw [-,dashed] (n2) --node[inner sep=2pt]{} (u2);

            \draw [line width=0.25mm] (-1.25,0) ellipse (1cm and 0.5cm);
		\node [label={[shift={(0,-0.35)}]$x_3$}] at (-1.25,0) {};
            \node (p31) at (-2,1) [vertex] {};
            \node (p32) at (-1.5,1) [vertex] {};
            \node (n3) at (-0.5,1) [vertex] {};
            \node (v31) at (-2,0.33072) [novertex] {};
            \node (v32) at (-1.5,0.48412) [novertex] {};
            \node (u3) at (-0.5,0.33072) [novertex] {};

            \draw [-,line width=1.5pt] (p31) --node[inner sep=2pt]{} (v31);
            \draw [-,line width=1.5pt] (p32) --node[inner sep=2pt]{} (v32);
            \draw [-,dashed] (n3) --node[inner sep=2pt]{} (u3);

            \draw [line width=0.25mm] (1.25,0) ellipse (1cm and 0.5cm);
		\node [label={[shift={(0,-0.35)}]$x_4$}] at (1.25,0) {};
            \node (p41) at (0.5,1) [vertex] {};
            \node (p42) at (1,1) [vertex] {};
            \node (n4) at (2,1) [vertex] {};
            \node (v41) at (0.5,0.33072) [novertex] {};
            \node (v42) at (1,0.48412) [novertex] {};
            \node (u4) at (2,0.33072) [novertex] {};

            \draw [-,line width=1.5pt] (p41) --node[inner sep=2pt]{} (v41);
            \draw [-,line width=1.5pt] (p42) --node[inner sep=2pt]{} (v42);
            \draw [-,dashed] (n4) --node[inner sep=2pt]{} (u4);

            \draw [line width=0.25mm] (3.75,0) ellipse (1cm and 0.5cm);
		\node [label={[shift={(0,-0.35)}]$x_5$}] at (3.75,0) {};
            \node (p51) at (3,1) [vertex] {};
            \node (p52) at (3.5,1) [vertex] {};
            \node (n5) at (4.5,1) [vertex] {};
            \node (v51) at (3,0.33072) [novertex] {};
            \node (v52) at (3.5,0.48412) [novertex] {};
            \node (u5) at (4.5,0.33072) [novertex] {};

            \draw [-,line width=1.5pt] (p51) --node[inner sep=2pt]{} (v51);
            \draw [-,line width=1.5pt] (p52) --node[inner sep=2pt]{} (v52);
            \draw [-,dashed] (n5) --node[inner sep=2pt]{} (u5);

            \draw [line width=0.25mm] (6.25,0) ellipse (1cm and 0.5cm);
		\node [label={[shift={(0,-0.35)}]$x_6$}] at (6.25,0) {};
            \node (p61) at (5.5,1) [vertex] {};
            \node (p62) at (6,1) [vertex] {};
            \node (n6) at (7,1) [vertex] {};
            \node (v61) at (5.5,0.33072) [novertex] {};
            \node (v62) at (6,0.48412) [novertex] {};
            \node (u6) at (7,0.33072) [novertex] {};

            \draw [-,line width=1.5pt] (p61) --node[inner sep=2pt]{} (v61);
            \draw [-,line width=1.5pt] (p62) --node[inner sep=2pt]{} (v62);
            \draw [-,dashed] (n6) --node[inner sep=2pt]{} (u6);

            \node (c1) at (-7.5,3.75) [vertex,label=right:$C_1$] {};
            \draw [-] (c1) --node[inner sep=2pt]{} (n1);
            \draw [-] (c1) --node[inner sep=2pt]{} (p21);
            \node (F1) at (-7.5,4.5) [circle,draw=blue,inner ysep=0.8em] {};
            \draw [-,blue] (c1) --node[inner sep=2pt]{} (F1);

            \node (c2) at (-5,3.75) [vertex,label=right:$C_2$] {};
            \draw [-] (c2) --node[inner sep=2pt]{} (p11);
            \draw [-] (c2) --node[inner sep=2pt]{} (p32);
            \node (F2) at (-5,4.5) [circle,draw=blue,inner ysep=0.8em] {};
            \draw [-,blue] (c2) --node[inner sep=2pt]{} (F2);
            
            \node (c3) at (-2.5,3.75) [vertex,label=right:$C_3$] {};
            \draw [-] (c3) --node[inner sep=2pt]{} (p12);
            \draw [-] (c3) --node[inner sep=2pt]{} (n2);
            \draw [-] (c3) --node[inner sep=2pt]{} (p31);
            
            \node (c4) at (0,3.75) [vertex,label=right:$C_4$] {};
            \draw [-] (c4) --node[inner sep=2pt]{} (p22);
            \draw [-] (c4) --node[inner sep=2pt]{} (n4);
            \draw [-] (c4) --node[inner sep=2pt]{} (p51);
            
            \node (c5) at (2.5,3.75) [vertex,label=right:$C_5$] {};
            \draw [-] (c5) --node[inner sep=2pt]{} (n3);
            \draw [-] (c5) --node[inner sep=2pt]{} (p41);
            \draw [-] (c5) --node[inner sep=2pt]{} (p62);
            
            \node (c6) at (5,3.75) [vertex,label=right:$C_6$] {};
            \draw [-] (c6) --node[inner sep=2pt]{} (p42);
            \draw [-] (c6) --node[inner sep=2pt]{} (n5);
            \draw [-] (c6) --node[inner sep=2pt]{} (p61);
            
            \node (c7) at (7.5,3.75) [vertex,label=right:$C_7$] {};
            \draw [-] (c7) --node[inner sep=2pt]{} (p52);
            \draw [-] (c7) --node[inner sep=2pt]{} (n6);
            \node (F7) at (7.5,4.5) [circle,draw=blue,inner ysep=0.8em] {};
            \draw [-,blue] (c7) --node[inner sep=2pt]{} (F7);
	\end{tikzpicture}
	\caption{An example of the graph $G_{(X,\cC)}$ for $X = \{x_i \}_{i=1}^6$ and $\cC = \{C_i \}_{i=1}^7$ where $C_1 = \bar{x}_1 \vee x_2$; $C_2 = x_1 \vee x_3$; $C_3 = x_1 \vee \bar{x}_2 \vee x_3$; $C_4 = x_2 \vee \bar{x}_4 \vee x_5$; $C_5 = \bar{x}_3 \vee x_4 \vee x_6$; $C_6 = x_4 \vee \bar{x}_5 \vee x_6$; $C_7 = x_5 \vee \bar{x}_6$. The graph consist of $6$ variable gadgets, $7$ clause vertices, and $3$ \ffor{1}{k} for the $3$ clauses of size $2$. One can check that the graph is planar.}
    \label{fig:GXC}
\end{figure}
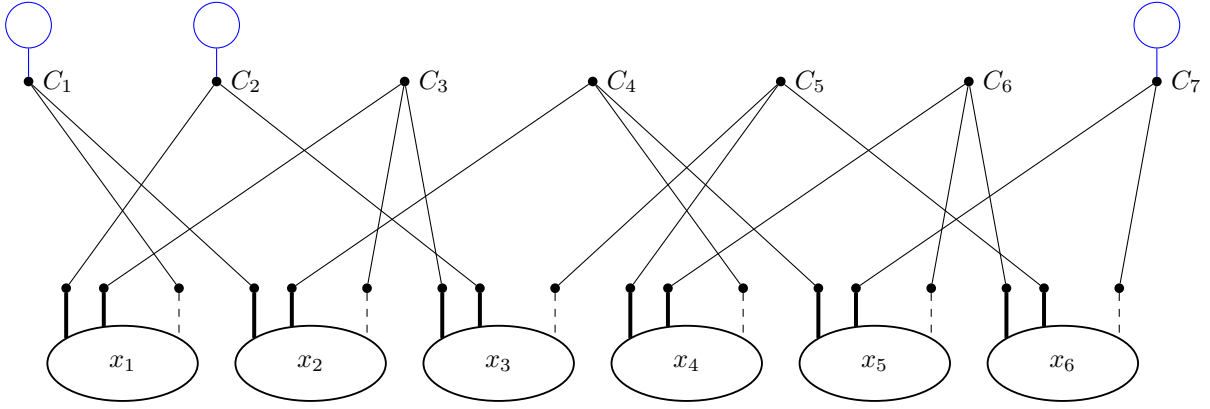

\begin{lemma}
\label{lem:3-SATtoMatchingForest}
     Let $k \ge 3$.
     Let $(X,\cC)$ be an instance of Planar $(\le 3,3)$–SAT, and let the graph $G_{(X,\cC)}$ be as in \Cref{constr:GraphReductionNP}.
     If $(X,\cC)$ is a yes-instance of Planar $(\le 3,3)$–SAT, then $G_{(X,\cC)}$ is a yes-instance of Planar \blfd{k}{1}.
\end{lemma}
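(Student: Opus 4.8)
The plan is to exhibit an explicit \mdec{k} $(F_k,M)$ of $G_{(X,\cC)}$ directly from a satisfying assignment $\phi$ of $(X,\cC)$, following the heuristic sketched in the proof outline: a \emph{connecting edge} (an edge $v_C p^x_C$ or $v_C n^x_C$ joining an input vertex of a variable gadget to a clause vertex) will be placed in $M$ exactly when the corresponding literal is the one used to satisfy $C$. Concretely, for each clause $C\in\cC$ I would first fix one literal of $C$ that is true under $\phi$ — which exists precisely because $\phi$ is satisfying — and place the single connecting edge at $v_C$ corresponding to that literal into $M$, sending all other connecting edges incident to $v_C$ into $F_k$. Since at most one polarity of each variable is true, and we only ever select true literals, no variable can have both a positive connecting edge and its negative connecting edge placed in $M$. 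This consistency is exactly what a satisfying assignment buys us.

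Next I would propagate this choice into the gadgets by imposing, at every input vertex (which has degree $1$ inside its gadget and degree $2$ in $G_{(X,\cC)}$), the rule that \emph{exactly one} of its two incident edges lies in $M$; equivalently, the internal input edge lies in $M$ if and only if the attached connecting edge lies in $F_k$. This determines a partition $\{\tilde F_k,\tilde M\}$ of the three internal input edges $\{e_1^x,e_2^x,e^x\}$ of each gadget $H^x$. I would then check this partition is admissible for \Cref{lem:FillingVariableGadget}, i.e.\ that neither $\{e_1^x,e^x\}$ nor $\{e_2^x,e^x\}$ lies entirely in $\tilde F_k$: indeed $e_i^x\in\tilde F_k$ forces the positive connecting edge into $M$, hence the clause satisfied $x$ positively, so $\phi(x)=\true$; while $e^x\in\tilde F_k$ similarly forces $\phi(x)=\false$; these cannot both hold. \Cref{lem:FillingVariableGadget} then extends each $\{\tilde F_k,\tilde M\}$ to a full \mdec{k} of $H^x$. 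For each clause $C$ with $|C|=2$ I would additionally fix the decomposition of its attached \ffor{1}{k} from \Cref{prop:ellFforcer}, so the forcer edge at $v_C$ lies in $F_k$ and $v_C$ is an endpoint of a $1$-path inside the forcer. Taking the union of all these pieces yields a candidate pair $(F_k,M)$ covering every edge exactly once.

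The remaining and main task is to verify globally that $M$ is a matching and that $F_k$ is a $k$-bounded linear forest; controlling the path lengths is where the real work lies. That $M$ is a matching follows vertex by vertex: the gadget and forcer sub-decompositions are internally matchings, each clause vertex carries exactly one matching connecting edge, and the input-vertex rule allows at most one matching edge at each input vertex. For $F_k$, the key observation is that by the input-vertex rule every input vertex has $F_k$-degree exactly $1$ and is therefore an endpoint of any $F_k$-path through it. Consequently no $F_k$-path can cross an input vertex from a gadget edge onto a connecting edge, so $F_k$ splits into two independent families: paths lying entirely inside a single gadget $H^x$, which are $k$-bounded because the decomposition from \Cref{lem:FillingVariableGadget} is a genuine \mdec{k} of $H^x$; and paths lying entirely among connecting and forcer edges, each of which passes through a single clause vertex and has length at most $2$ (a clause vertex has $F_k$-degree at most $2$, and its $F_k$-neighbours — input vertices, or the forcer vertex of a size-$2$ clause — are all $F_k$-endpoints). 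Since $2\le k$, every component of $F_k$ has length at most $k$. I expect the most delicate bookkeeping to be at the size-$2$ clause vertices, where the forced forcer edge must be reconciled with the selected connecting edge so that $v_C$ has $F_k$-degree exactly $2$ and lies on a maximal path of length $2$; this is precisely what the endpoint property of the \ffor{1}{k} supplied by \Cref{def:ellFforcer} guarantees.
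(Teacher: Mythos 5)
Your proposal is correct and follows essentially the same route as the paper's proof: the same assignment of connecting edges (the edge of the chosen true literal into $M$, all others into $F_k$), the same induced partition of the gadgets' input edges, the same appeal to \Cref{lem:FillingVariableGadget} and to the \ffor{1}{k} property to extend to the whole graph, and the same final verification that problems in $F_k$ can only occur at clause vertices, where paths have length at most $2$. The only cosmetic difference is that you verify the $k$-boundedness directly (input vertices have $F_k$-degree $1$, so components split into within-gadget paths and short paths around clause vertices), whereas the paper phrases the same observation as a proof by contradiction.
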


\begin{proof}
    Let $\phi : X \rightarrow \left\{\true, \false \right\}$ be a truth assignment such that every clause of $\cC$ contains at least one true literal.
    We construct a valid \mdec{k} $(F_k,M)$ of the graph $G_{(X,\cC)}$ as follows.

    Let $x \in X$ be a variable and let $C_1, C_2, C_3 \in \cC$ be clauses such that $x$ appears in $C_1, C_2$ and $\bar{x}$ appears in $C_3$.
    Let $(H^x,p^x_{C_1},p^x_{C_2},n^x_{C_3})$ be the variable gadget of $x$ constructed as in \Cref{constr:VariableGadgetReduction}.
    Denote by $e^x_{C_1}, e^x_{C_2}$ the positive input edges in $H^x$ adjacent to $p^x_{C_1}, p^x_{C_2}$, respectively, and by $e^x_{C_3}$ the negative input edge in $H^x$ adjacent to $n^x_{C_3}$.
    For any clause $C\in \cC$ and any variable $x\in X$ such that either $x\in C$ or $\bar{x} \in C$, denote by $v^x_C$ the input vertex of $H^x$ which is adjacent to $v_C$ (that is, $v^x_C = p^x_C$ if $x\in C$ and $v^x_C = n^x_C$ if $\bar{x}\in C$).
    
    We start by partitioning into $\{\tilde{F}, \tilde{M} \}$ the edges adjacent to input vertices of variable gadgets, or to vertices in $\{v_C : C \in \cC \}$.
    We then show that this partition can be extended to a valid \mdec{k} $(F_k,M)$ of $G_{(X,\cC)}$ such that $\tilde{F}_k \subset F_k$ and $\tilde{M} \subset M$.
    For each clause $C \in \cC$ we pick one arbitrary variable $x_C$ such that either $x_C$ appears in $C$ and $\phi(x_C)=\true$ or $\bar{x_C}$ appears in $C$ and $\phi(x_C)=\false$ (i.e., an arbitrary variable that makes $C$ evaluate $\true$).
    Then, for each clause $C \in \cC$, we assign $v_Cv^{x_C}_C \in \tilde{M}$ and, consequently, $e^{x_C}_C \in \tilde{F}_k$.
    For any other variable $y \neq x_C$ such that either $y$ or $\bar{y}$ appears in $C$, we assign $v_Cv^y_C \in \tilde{F}_k$ and $e^y_C \in \tilde{M}$.
    For each clause $C \in \cC$ with $|C|=2$ consider the \ffor{1}{k} $(G_C,v_C)$ adjacent to it.
    Denote by $e_C \in E(G_C)$ the edge containing $v_C$, and assign $e_C \in \tilde{F}_k$.

    Note that so far the set of edges which have been partitioned into $\{\tilde{F}_k, \tilde{M} \}$ is precisely the set of all edges adjacent to $v_C$, for all $C \in \cC$, and all input edges (both positive and negative) of all variable gadgets.
    Hence, in order to extend $\{\tilde{F}_k, \tilde{M} \}$ to a \mdec{k} of $G_{(X,\cC)}$, we extend it to a \mdec{k} of each variable gadget and each \ffor{1}{k} attached to a clause of size $2$, separately.

    Starting with the latter, for each \ffor{1}{k} $(G_C,v_C)$ where $C \in \cC$ has $|C|=2$, we fix a \mdec{k} $(F_k^C, M^C)$ of $G_C$ such that $e_C \in F_k^C$.
    By \Cref{def:ellFforcer} and \Cref{prop:ellFforcer} there exists a \mdec{k} satisfies the above.
    
    The set of edges which we need to decompose at this point is precisely
    \begin{align*}
        \bigcup_{x\in X} \left(E(H^x) \setminus \{e^x_{C_1}, e^x_{C_2}, e^x_{C_3} ~:~ x\in C_1, C_2,\, \bar{x} \in C_3 \} \right),
    \end{align*}
    and we do this by extending $\{\tilde{F}_k, \tilde{M})$ to a \mdec{k} on each variable gadget separately.

    Let $x \in X$ and let $C_1, C_2, C_3 \in \cC$ be such that $x$ appears in $C_1, C_2$ and $\bar{x}$ appears in $C_3$.
    Recall that by our choice of $\tilde{F}_k$, having $e^x_{C_3} \in \tilde{F}_k$ implies $\phi(x) = \false$, and having either $e^x_{C_1} \in \tilde{F}_k$ or $e^x_{C_2} \in \tilde{F}_k$ implies $\phi(x) = \true$.
    Hence, neither $\{e^x_{C_1}, e^x_{C_3} \}$ nor $\{e^x_{C_2}, e^x_{C_3} \}$ is contained in $\tilde{F}_k^x$.
    Hence, by \Cref{lem:FillingVariableGadget}, there exists a \mdec{k} $(F_k^x, M^x)$ of $H^x$ for which $\tilde{F}_k^x \subset F_k^x$ and $\tilde{M}^x \subset M^x$, where $\{\tilde{F}_k^x, \tilde{M}^x \}$ is the partition of $\{e^x_{C_1}, e^x_{C_2}, e^x_{C_3} \}$ induced by $\{\tilde{F}_k, \tilde{M} \}$.

    Define $(F_k,M)$ by
    \begin{align*}
        F_k &\coloneqq \left(\bigcup_{x \in X} F_k^x \right) \cup \left(\bigcup_{|C|=2} F_k^C \right) \\
        M &\coloneqq \left(\bigcup_{x \in X} M^x \right) \cup \left(\bigcup_{|C|=2} M^C \right).
    \end{align*}

    It is left to show that the decomposition $(F_k,M)$ is a valid \mdec{k} of $G_{(X,\cC)}$, meaning that $M$ is indeed a matching and $F_k$ is indeed a $k$-bounded linear forest.
    Note that, by our construction, for each $x \in X$ we have that $M^x$ and $F_k^x$ are indeed a matching and a $k$-bounded linear forest, respectively.
    The same holds for $(F_k^C,M^C)$ for each $C \in \cC$ with $|C|=2$.
    Since each input vertex (either positive or negative) in each variable gadget $H^x$ is contained in exactly one edge in $M$, it follows that $M$ is indeed a matching in $G_{(X, \cC)}$.
    
    Suppose for the sake of contradiction that there exists a subgraph $Q$ of the graph spanned by the edges in $F_k$ which is either a path of length $k+1$ or a cycle, or contains a vertex of degree $3$.
    Note that by \Cref{lem:FillingVariableGadget} and \Cref{def:ellFforcer}, $Q$ cannot be fully contained in any variable gadget or in any \ffor{1}{k}.
    Hence, there exists a clause $C \in \cC$ such that $v_C \in V(Q)$.
    As by our construction we have $v_Cv^{x_C}_C \in M$ and for any other variable $y \neq x_C$ appearing in $C$, we have $e^y_C \in M$, it follows that $Q$ has size at most $2$, which is a contradiction.
    Hence $F_k$ is a $k$-bounded linear forest and this finishes the proof.
\end{proof}

\begin{lemma}
\label{lem:MatchingForestto3-SAT}
    Let $k \ge 3$.
    Let $(X,\cC)$ be an instance of Planar $(\le 3,3)$–SAT, and let the graph $G_{(X,\cC)}$ be as in \Cref{constr:GraphReductionNP}.
    If $G_{(X,\cC)}$ is a yes-instance of Planar \blfd{k}{1}, then $(X,\cC)$ is a yes-instance of Planar $(\le 3,3)$–SAT.
\end{lemma}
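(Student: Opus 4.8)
The plan is to extract a satisfying assignment $\phi$ directly from a given \mdec{k} $(F_k,M)$ of $G_{(X,\cC)}$, reversing the heuristic from the proof outline: a connecting edge $v_C v^x_C$ lying in the matching $M$ will be read as a witness that $x$ satisfies $C$, and $\phi$ will be chosen so as to make the corresponding literal true. The two things to check are that this reading is consistent (so that $\phi$ is well defined) and that it makes every clause true.

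First I would analyse the clause vertices. Fix a clause $C$ and consider the edges incident to $v_C$. If $|C|=3$, then $v_C$ has degree $3$, all three incident edges being connecting edges to variable gadgets; since $M$ is a matching at most one of them lies in $M$, while if all three lay in $F_k$ then $v_C$ would have degree $3$ in the linear forest $F_k$, which is impossible, so exactly one connecting edge at $v_C$ is in $M$. If $|C|=2$, then $v_C$ again has degree $3$, its incident edges being the two connecting edges together with the edge $e_C$ of the attached \ffor{1}{k} $(G_C,v_C)$. Restricting $(F_k,M)$ to $G_C$ yields an \mdec{k} of $G_C$, so by \Cref{def:ellFforcer} we have $e_C \in F_k$, and the same degree argument applied to the two remaining connecting edges shows exactly one of them lies in $M$. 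In both cases, let $x_C$ denote the variable whose gadget is joined to $v_C$ by this unique matching edge, and $v^{x_C}_C$ the corresponding input vertex.

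Next I would record the local effect of a matching connecting edge. Each input vertex has degree $2$ in $G_{(X,\cC)}$, one incident edge being its input edge (inside the gadget) and the other its connecting edge; hence if a connecting edge at an input vertex lies in $M$, then, since $M$ is a matching, the input edge at that vertex must lie in $F_k$. Combining this with the converse statement of \Cref{lem:FillingVariableGadget}, which forbids a positive input edge and the negative input edge of the same gadget from both lying in $F_k$, I can show that for every variable $x$ the event ``some positive connecting edge of $H^x$ is in $M$'' and the event ``the negative connecting edge of $H^x$ is in $M$'' are mutually exclusive: were both to occur, the corresponding positive and negative input edges would both be forced into $F_k$, contradicting \Cref{lem:FillingVariableGadget}. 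This mutual exclusion is the crux of the argument; it is exactly the ``no clashing edges'' property for which the rejector was built, and I expect it to be the only nontrivial step.

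With consistency in hand, define $\phi(x)=\true$ if some positive connecting edge of $H^x$ lies in $M$, $\phi(x)=\false$ if the negative connecting edge of $H^x$ lies in $M$, and $\phi(x)$ arbitrarily otherwise; the mutual exclusion above guarantees this is well defined. Finally, for each clause $C$ the unique matching edge at $v_C$ runs to the input vertex $v^{x_C}_C$: if it is a positive input vertex then $x_C$ appears in $C$ and $\phi(x_C)=\true$ by construction, so $C$ is satisfied; if it is a negative input vertex then $\bar{x_C}$ appears in $C$ and $\phi(x_C)=\false$, so again $C$ is satisfied. Hence $\phi$ is satisfying and $(X,\cC)$ is a yes-instance of Planar $(\le 3,3)$--SAT. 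Everything beyond the mutual-exclusion step is routine degree bookkeeping at the clause and input vertices.
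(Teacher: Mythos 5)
Your proof is correct and takes essentially the same approach as the paper's: both extract $\phi$ from the decomposition using the degree-$3$ argument at each clause vertex (with the \ffor{1}{k} forcing its edge into $F_k$ when $|C|=2$) to find a matching connecting edge, and both invoke the converse direction of \Cref{lem:FillingVariableGadget} to rule out clashing assignments. The only cosmetic difference is that you key $\phi$ off connecting edges lying in $M$, while the paper keys it off input edges lying in $F_k$; these are equivalent via the matching condition at the degree-$2$ input vertices.
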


\begin{proof}
Let $(F_k,M)$ be a decomposition of $G_{(X,\cC)}$ such that $F_k$ is a $k$-bounded linear forest and $M$ is a matching.
We construct a truth assignment $\phi : X \rightarrow \left\{\true, \false \right\}$.
Let $x$ be a variable in $X$, and suppose that $x$ appears in clauses $C_1$ and $C_2$ and that $\bar{x}$ appears in $C'$.
By \Cref{lem:FillingVariableGadget}, we have neither $e^x_{C_1}, e^x_{C'} \in F_k$ nor $e^x_{C_2}, e^x_{C'} \in F_k$.
We define 
\begin{align*}
    \phi(x) = \begin{cases}
    \true & \text{ if } e^x_{C_1} \in F_k \text{ or } e^x_{C_2} \in F_k\\
    \false & \text{ otherwise}.
    \end{cases}
\end{align*}
We claim that every clause $C \in \cC$ has at least one true literal.
Indeed, $v_C$ has degree 3 in $G_{(X,\cC)}$ so there exists $x \in X$ such that $v_Cv^x_C \in M$, hence $e^x_{C} \in F_k$.
If $v^x_C = p^x_C$, then $x$ appears in $C$ and $\phi(x)=\true$ by our construction.
If $v^x_C = n^x_C$, then $\bar{x}$ appears in $C$ and then $\phi(x)=\false$.
Hence, $\phi$ is a satisfying assignment for $(X,\cC)$ and consequently $(X,\cC)$ is a yes-instance of Planar $(\le 3,3)$–SAT.
\end{proof}

\Cref{thm:main} then follows from \Cref{thm:NPcomplete3bounded3SAT}, \Cref{lem:3-SATtoMatchingForest} and \Cref{lem:MatchingForestto3-SAT}.

\section{Decomposing a graph into a matching and a (bounded) star forest}\label{sec:StarForestinP}

As mentioned in the introduction, by generalisation of the proof in~\cite{campbell2023decompositions} of \blfd{2}{1} being polynomial time solvable, one can prove \Cref{thm:DecompositionMatchingStarForest}. More precisely, we do so using a polynomial reduction to the \emph{small gap general factor problem}, which is solvable in polynomial time by a result of Cornuéjols~\cite{cornuejols1988general}. 

We say that a set $A \subset \mathbb{N}$ is a \emph{small gap set} if for every integer $i \in \mathbb{N}$ satisfying $\inf{A} \leq i \leq \sup{A}$, we have either $i \in A$ or $i+1 \in A$.
Given a graph $G = (V,E)$, a subset $S \subset E$ of its edges, and a vertex $v \in V$, we denote by $d_S(v)$ the degree of $v$ in the subgraph of $G$ spanned by the edges in $S$.
The small gap general factor problem is defined as follows.

\subsubsection*{Small gap general factor problem (SGGF)}
\textbf{Input:} 
A graph $H$, a collection of small gap sets $\{A_v ~:~ v \in V(H) \}$. \\
\textbf{Question:} Is there a set $S \subseteq E(H)$ such that $d_S(v) \in A_v$ for every $v \in V(H)$? \\

As mentioned earlier, Cornuéjols~\cite{cornuejols1988general} proved the following.

\begin{theorem}
\label{thm:SmallGapisP}
    SGGF is solvable in polynomial time.
\end{theorem}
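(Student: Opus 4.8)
The plan is to solve SGGF in polynomial time by reducing it to the maximum matching problem, which is solvable in polynomial time by Edmonds' blossom algorithm. Given an instance consisting of a graph $H$ and small-gap sets $\{A_v : v \in V(H)\}$, I would construct an auxiliary graph $H'$ together with a target number $N$ so that $H'$ has a matching of size $N$ if and only if there is a set $S \subseteq E(H)$ with $d_S(v) \in A_v$ for all $v$. The construction uses two types of gadgets. An \emph{edge gadget} replaces each edge $e = uv$ of $H$ by two new vertices $e_u, e_v$ joined by an edge, where $e_u$ is fed into a gadget attached to $u$ and $e_v$ into a gadget attached to $v$; the intended meaning is that $e \in S$ corresponds to the edge $e_u e_v$ being used by the matching, while $e \notin S$ forces $e_u$ and $e_v$ to be matched inside the gadgets of $u$ and of $v$. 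A \emph{degree gadget} $\Gamma_v$ is then placed on the set of ports $P_v = \{e_v : v \in e\}$, with $|P_v| = d_H(v)$, and is responsible for enforcing the constraint $A_v$.

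The core of the proof is the construction of $\Gamma_v$, which is where the small-gap hypothesis is used in an essential way. I would require that $\Gamma_v$ be symmetric in its ports and \emph{realize} $A_v$ in the sense that, over all matchings of $H'$ of size $N$, the number of ports of $v$ left exposed by $\Gamma_v$ (equal to $d_S(v)$) ranges over exactly the values in $A_v$. For a single interval $A_v = \{a, a+1, \dots, b\}$ this can be achieved by a Tutte-style gadget: internal vertices joined completely to the ports absorb the ``unselected'' ports, while auxiliary internal structure lets the number of absorbed ports vary across the interval. A general small-gap set is precisely one whose consecutive elements differ by $1$ or $2$, so I would realize it by assembling interval blocks and using the length-one gaps to encode the admissible jumps; the key lemma is that the resulting $\Gamma_v$ has size polynomial in $\sup A_v \le d_H(v)$ and realizes exactly $A_v$. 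This is Lov\'asz's degree gadget, and its existence fails as soon as some $A_v$ has a gap of length at least two -- indeed SGGF is NP-hard without the gap condition -- which explains structurally why the hypothesis is needed.

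Correctness would then be verified in both directions. From a feasible $S$, I would use the edges $e_u e_v$ for $e \in S$ and, at each vertex, invoke realizability with the exposed-port set of size $d_S(v) \in A_v$ to extend to a matching of $H'$ of size $N$. Conversely, from a matching of $H'$ of size $N$ I would set $S = \{e : e_u e_v \text{ is matched}\}$ and use the gadget property to deduce $d_S(v) \in A_v$ at every $v$. Since $|V(H')|$ and $N$ are polynomial in the size of $H$ and in $\sum_v \sup A_v$, and maximum matching is polynomial, this establishes \Cref{thm:SmallGapisP}.

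The step I expect to be the main obstacle is the degree-gadget lemma, and specifically the parity bookkeeping it requires: a single fixed gadget cannot admit a perfect matching exposing port-sets of both parities, so the reduction must be phrased in terms of maximum matchings of a prescribed size in which a controlled number of internal gadget vertices may be left uncovered, and one must check that a maximum matching leaves the gadget in a feasible state exactly when $d_S(v) \in A_v$. Verifying that the gap-one structure, and nothing weaker, permits such a polynomial-size gadget is the heart of the matter. Should a clean global reduction prove too delicate, the fallback is to follow Cornu\'ejols' original approach and run a blossom-type augmenting-path algorithm directly on $H$ relative to the current subgraph, using the small-gap condition to ensure that each augmentation preserves the degree constraints; this route additionally yields the accompanying min-max theorem but demands careful maintenance of alternating structures.
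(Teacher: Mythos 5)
The first thing to note is that the paper does not prove \Cref{thm:SmallGapisP} at all: it is imported as a black box, with a citation to Cornu\'ejols' theorem on the general factor problem. So the question is whether your sketch would stand as a self-contained proof, and it would not, because its central ``degree-gadget lemma'' is false in the form you state it. In the perfect-matching (exact-correspondence) formulation, a fixed gadget $\Gamma_v$ can only realize exposed-port counts of one parity: if $\Gamma_v - T$ has a perfect matching on the internal vertices and unexposed ports, then $|T| \equiv |V(\Gamma_v)| \pmod 2$, so no single gadget can realize a set containing two consecutive integers. Such sets are not a corner case here --- the sets $\{0,1\}$, $\{1,2\}$ and $\{0,1,2\}$ all occur in \Cref{constr:ReductionStarForestSmallGap}, the very application this theorem serves. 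The classical constructions you gesture at do not rescue this: Tutte-style gadgets handle a single prescribed degree or single-parity sets, and interval constraints are classically handled by a \emph{global} doubling reduction to $f$-factors rather than by a symmetric per-vertex gadget; for genuinely gapped small-gap sets such as $\{1,2,4\}$ no gadget reduction to plain matching is known. Indeed, if your key lemma were true, Cornu\'ejols' algorithm would be superfluous, since SGGF would reduce outright to maximum matching. The attribution to Lov\'asz is also off: Lov\'asz's work on general factors supplies the structural theory and the observation that a gap of length at least two yields NP-hardness, not a gadget of the kind your reduction needs.

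Your own patch --- working with matchings of a prescribed size $N$ and allowing a controlled number of internal gadget vertices to be exposed --- does not close the gap either, because matching size only controls the \emph{total} deficiency: exposed vertices can migrate between the gadgets of different vertices, so the local equivalence ``the gadget of $v$ is in a feasible state if and only if $d_S(v) \in A_v$'' no longer follows from the size of the matching, and plain maximum matching gives you no per-gadget exposure bound. You flag this yourself as the main obstacle, which is honest, but flagging it does not discharge it. The fallback you offer --- run Cornu\'ejols' blossom-type augmenting-path algorithm directly --- is exactly the route the paper takes by citation, but in your write-up it is a one-line pointer: nothing about the alternating structures, the augmentation step, or why augmentations preserve the small-gap degree constraints is developed. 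As it stands, the proposal is a plan whose central lemma is unproven (and, in the perfect-matching form stated, provably false by parity), so it does not establish \Cref{thm:SmallGapisP}; the honest options are to prove the prescribed-size gadget correspondence with per-gadget locality restored, or to actually reproduce Cornu\'ejols' argument rather than point to it.
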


\subsection{The reduction}

We start by an observation that was already noticed in \cite{campbell2023decompositions} for the case $k=2$.

\begin{observation}
Let $G$ be a graph containing a vertex of degree at least $k+2$.
Then $G$ is a no-instance of $k$-MBSFD.
\end{observation}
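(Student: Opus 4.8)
The plan is to argue by contradiction through a single degree count at the high-degree vertex, so no gadgetry is needed here. We may assume $k$ is finite, since for $k = \infty$ the hypothesis is vacuous. Suppose $G$ contains a vertex $v$ with $\deg_G(v) \ge k+2$, and suppose towards a contradiction that $G$ is a yes-instance of $k$-MBSFD; that is, $E(G)$ admits a decomposition into a matching $M$ and a $k$-bounded star forest $F$. The strategy is simply to show that these two structures cannot together cover more than $k+1$ edges at $v$.

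First I would bound $d_M(v)$: since $M$ is a matching, at most one of its edges can be incident to $v$, so $d_M(v) \le 1$. The key step is to bound $d_F(v)$. Because $F$ is a star forest, every connected component of $F$ is a star, and $v$ lies in exactly one such component, in which it is either the centre or a leaf. If $v$ is a leaf, only the single edge joining it to the centre lies in $F$, so $d_F(v) \le 1$. If $v$ is the centre, then all edges of $F$ meeting $v$ belong to that one star, whose number of edges is at most $k$ by the $k$-boundedness of $F$, whence $d_F(v) \le k$. In either case $d_F(v) \le k$.

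Combining these bounds gives $\deg_G(v) = d_M(v) + d_F(v) \le 1 + k < k+2$, contradicting the choice of $v$. Hence no decomposition of $E(G)$ into a matching and a $k$-bounded star forest exists, and $G$ is a no-instance, as claimed. I do not expect any real obstacle in this argument; the only point requiring a moment's care is the interpretation of ``centre'' for a single-edge star (where either endpoint may serve) and for an isolated vertex, but in all of these degenerate cases the star-forest structure still forces $d_F(v) \le k$, which is the heart of the bound.
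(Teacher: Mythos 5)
Your proof is correct: the paper states this as an observation without any proof, and your degree count at the high-degree vertex ($d_M(v)\le 1$ for the matching, $d_F(v)\le k$ for the $k$-bounded star forest, hence $\deg_G(v)\le k+1<k+2$) is exactly the routine argument it leaves implicit. Your attention to the degenerate cases ($k=\infty$, and $v$ being a leaf versus a centre of its star) is also sound and does not change the bound.
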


We now present the reduction, which is a simple generalisation of the reduction used for the case $k=2$ proved in~\cite{campbell2023decompositions}.

\begin{construction} 
\label{constr:ReductionStarForestSmallGap}
Let $k \in \{2, \ldots \} \cup \{\infty \}$ and let $G$ be an instance of $k$-MBSFD.
We construct the following instance $\left(H_G, \{ A_v : v \in V(H_G) \} \right)$ of SGGF.
For an integer $d$ let $X_{\ge d}, X_d \subset V(G)$ be the sets of vertices which have degree at least $d$ and exactly $d$ in $G$, respectively.
Let $\cP$ be the family of all paths in $G$ with both endpoints in $X_{\ge 3} \cup X_1$, and with all interior vertices in $X_2$, together with all cycles in $G$ containing at most one vertex in $X_{\ge 3}$.
For an integer $\ell \ge 1$ let $\cP_{\ell} \subset \cP$ be the subfamily of all such paths and cycles of length $\ell$.
Let $H=H_G$ be a bipartite graph on parts $X=X_{\ge 3}$ and $Y= \left\{ y_P : P \in \cP\right\}$, and with $E(H) = \{ xy_P : x \text{ is an endpoint of } P \text{ in } G \}$.

We define a small gap set $A_v$ for each vertex $v \in V(H)$.
\begin{itemize}
    \item For every $x \in X_{\ge k+1}$, we set $A_x=\{1\}$.
    \item For every $x \in X \setminus X_{\ge k+1}$, we set $A_x=\{0,1\}$.
    \item For every $P \in \cP_1$, we set $A_{y_P}=\{ 2 \}$.
    \item For every $P \in \cP_2$, we set $A_{y_P}= \{ 1 \}$.
    \item For every $P \in \cP_3$, we set $A_{y_P}= \{ 0,2 \}$.
    \item For every $P \in \cP_4$, we set $A_{y_P}= \{ 1,2 \}$.
    \item For every $P \in \cP \setminus (\cP_1 \cup \cP_2 \cup \cP_3 \cup \cP_4)$, we set $A_{y_P}= \{0,1,2 \}$.
\end{itemize}
\end{construction}

\begin{remark}
Note that for $k=\infty$ \Cref{constr:ReductionStarForestSmallGap} still holds, but since $X_{\ge k+1} = \emptyset$ the definition of small gap sets in the first item is null.
In addition, for $k=2$ \Cref{constr:ReductionStarForestSmallGap} is identical to the one in~\cite{campbell2023decompositions}.
\end{remark}

The following two lemmas together imply that this reduction is valid.
Their proofs are essentially the same as the proofs of their equivalent lemma for the case $k=2$ in~\cite{campbell2023decompositions}, following the exact same casework, except for one difference.
Hence, we do not include here the full proofs.
However, for the sake of clarity, we give the general idea of their proofs with the details of the parts which we generalise.

\begin{lemma}
\label{lem:SmallGapToMBSFD}
Let $k\ge 2$ be an integer and $G$ be a graph. Let $\left(H_G, \{ A_v : v \in V(H_G) \} \right)$ be as in \Cref{constr:ReductionStarForestSmallGap}. If $\left(H_G, \{ A_v : v \in V(H_G) \} \right)$ is a yes-instance of SGGF, then $G$ is a yes-instance of $k$-MBSFD.
\end{lemma}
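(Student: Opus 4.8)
The plan is to read off, from a feasible solution $S \subseteq E(H_G)$ of SGGF, an explicit decomposition $E(G) = M \sqcup F$ into a matching $M$ and a $k$-bounded star forest $F$, thereby witnessing that $G$ is a yes-instance of \sdec{k}. The starting observation is that the family $\cP$ is an \emph{edge-decomposition} of $G$: every edge of $G$ lies in exactly one member of $\cP$, and two distinct members can meet only at vertices of $X_{\ge 3}$, occurring there as endpoints. Thus it suffices to two-colour the edges of each path and cycle $P \in \cP$ in a globally consistent way. We may also assume $\Delta(G) \le k+1$, since a vertex of degree at least $k+2$ makes $G$ a no-instance by the Observation (the reduction being applied only after this degree check), so that $X_{\ge k+1} = X_{k+1}$.

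First I would use $S$ to fix the behaviour at the branch vertices. For $x \in X_{\ge 3}$ and a path $P \in \cP$ having $x$ as an endpoint, declare the end-edge of $P$ at $x$ to lie in $M$ precisely when $xy_P \in S$, and to lie in $F$ otherwise. Since $A_x \subseteq \{0,1\}$, we have $d_S(x) \le 1$, so at most one edge incident to $x$ is placed in $M$; this is what will make $M$ a matching at every branch vertex. Dually, the $F$-edges incident to $x$ form the star centred at $x$, of size $\deg_G(x) - d_S(x)$. Here lies the one point where the argument genuinely differs from the $k=2$ case of~\cite{campbell2023decompositions}: when $\deg_G(x) = k+1$ we have $x \in X_{\ge k+1}$, so $A_x = \{1\}$ forces $d_S(x)=1$ and the star at $x$ has exactly $k$ edges; when $3 \le \deg_G(x) \le k$ the set $A_x=\{0,1\}$ allows the star up to $\deg_G(x) \le k$ edges. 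Either way the star at $x$ respects the bound $k$.

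Next I would extend this boundary data to a full two-colouring of each $P \in \cP$ by a local case analysis on the pair $(\ell, d_S(y_P))$, where $\ell$ is the length of $P$. The value $d_S(y_P) \in \{0,1,2\}$ records how many end-edges of $P$ were sent to $M$, and the small gap sets of \Cref{constr:ReductionStarForestSmallGap} are engineered so that only the \emph{locally realisable} values can occur: for a length-$2$ path the two end-edges can be neither both in $M$ (they share the middle vertex) nor both in $F$ (each would force the middle edge into $M$), so exactly one must be, matching $A_{y_P}=\{1\}$; for lengths $3$ and $4$ an adjacency obstruction, in which two end-edges in $F$ would force two consecutive edges into $M$, rules out $d=1$ and $d=0$ respectively, matching $\{0,2\}$ and $\{1,2\}$; for $\ell \ge 5$ every value is realisable; and a single edge between two branch vertices is forced into $M$, matching $\{2\}$. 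In each admissible case one colours the interior edges so that $M$ has no two consecutive edges and $F$ no three consecutive edges, using the convention that whenever an end-edge is left in $F$ (so its branch endpoint is a star centre) the next edge along $P$ is placed in $M$, making the adjacent degree-$2$ vertex a leaf; one checks this convention is consistent with the assignment already forced at the branch vertices. Interior degree-$2$ vertices then lie in stars of size at most $2 \le k$.

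Finally I would assemble $M$ and $F$ as the unions of the per-path colourings and verify the two global properties. That $M$ is a matching follows from $d_S(x) \le 1$ at branch vertices together with the no-two-consecutive condition inside each path. That $F$ is a $k$-bounded star forest follows because every $F$-edge has an endpoint of $F$-degree $1$ (at interior vertices this is the content of the ``next edge into $M$'' convention, and at a branch centre the leaf is its degree-$2$ neighbour), no component contains a path of length $3$ or a cycle, and the star-size bound was checked above. The main obstacle is precisely this local case analysis: for each short length $\ell \in \{1,2,3,4\}$ and each value in the corresponding gap set one must verify both that a valid colouring exists and that the excluded values are genuinely unrealisable, and one must ensure the centre/leaf roles handed to a shared branch vertex by its several incident paths are mutually consistent (which is exactly why end-edges left in $F$ force their successors into $M$). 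Cycles, and the boundary role of degree-$1$ endpoints, are handled as further, entirely analogous, cases.
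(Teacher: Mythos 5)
Your proposal is correct and follows essentially the same route as the paper: the paper's (sketched) proof likewise reads the matching off at the branch vertices of $X_{\ge 3}$ from an SGGF solution and extends it along each member of $\cP$ via the length-based casework inherited from the $k=2$ argument of Campbell, H\"orsch and Moore, the single new point being exactly the one you isolate, namely the split between $X_{\ge k+1}$ (where $A_x=\{1\}$ forces a star of exactly $k$ edges at $x$) and $X_{\ge 3}\setminus X_{\ge k+1}$. Your explicit standing assumption $\Delta(G)\le k+1$ is indeed needed for the star-size bound at vertices of $X_{\ge k+1}$ (and is genuinely necessary, since the SGGF instance can be feasible for graphs with a vertex of degree $k+2$); the paper leaves this implicit through the Observation preceding the construction.
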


\begin{lemma}
\label{lem:MBSFDToSmallGap}
Let $k\ge 2$ be an integer and $G$ be a graph. Let $\left(H_G, \{ A_v : v \in V(H_G) \} \right)$ be as constructed in \Cref{constr:ReductionStarForestSmallGap}.
If $G$ is a yes-instance of $k$-MBSFD, then $\left(H_G, \{ A_v : v \in V(H_G) \} \right)$ is a yes-instance of SGGF.
\end{lemma}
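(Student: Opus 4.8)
The plan is to start from a decomposition of $E(G)$ into a matching $M$ and a $k$-bounded star forest $F$ and to read off a selection $S\subseteq E(H_G)$ directly from it. Since such a decomposition exists, the observation preceding \Cref{constr:ReductionStarForestSmallGap} gives $\Delta(G)\le k+1$. For each $x\in X_{\ge 3}$ and each $P\in\cP$ having $x$ as an endpoint, the edge of $P$ incident to $x$ is a well-defined edge of $G$; I put $xy_P\in S$ exactly when this edge lies in $M$. With this choice $d_S(x)$ equals the number of matching edges at $x$, that is $d_S(x)=d_M(x)$, while $d_S(y_P)$ equals the number of endpoints of $P$ in $X_{\ge 3}$ whose incident edge on $P$ belongs to $M$. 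It then remains to check that these two families of degrees lie in the prescribed small gap sets.

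The constraints at the vertices $x\in X_{\ge 3}$ are the one place where the argument departs from the case $k=2$, so I would verify them first. As $M$ is a matching, $d_M(x)\le 1$, so $d_S(x)\in\{0,1\}$ in every case. If $x\in X_{\ge k+1}$ then $d_G(x)=k+1$ (using $\Delta(G)\le k+1$), and since $F$ is $k$-bounded we have $d_F(x)\le k$; hence $d_M(x)=d_G(x)-d_F(x)\ge 1$, forcing $d_M(x)=1\in\{1\}=A_x$. If instead $x\in X_{\ge 3}\setminus X_{\ge k+1}$ then simply $d_S(x)=d_M(x)\in\{0,1\}=A_x$. When $k=2$ every vertex of $X_{\ge 3}$ has degree exactly $3=k+1$, so this dichotomy collapses to $A_x=\{1\}$; the genuinely new feature is the presence of branch vertices of degree strictly between $3$ and $k$, whose matching edge is optional.

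For the vertices $y_P$ I would use the following local fact about branch endpoints: if $x\in X_{\ge 3}$ is an endpoint of $P$ and the edge of $P$ at $x$ lies in $F$, then $x$ must be the \emph{centre} of the star of $F$ carrying that edge. Indeed, a leaf of a star has $F$-degree $1$, which would leave at least $d_G(x)-1\ge 2$ edges at $x$ to be covered by $M$, impossible for a matching. Consequently the incident $P$-edge at a branch endpoint is either in $M$, or it is in $F$ and then its neighbour along $P$ is a leaf, forcing the next edge of $P$ into $M$. Propagating this through the interior of $P$, all of whose vertices have degree $2$ in $G$, the admissible colourings are exactly those avoiding two consecutive matching edges and any three consecutive star edges, precisely as for $k=2$. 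Running the casework on the length $\ell$ of $P$ then yields feasible values of $d_S(y_P)$ equal to $\{2\}$ for $\ell=1$, $\{1\}$ for $\ell=2$, $\{0,2\}$ for $\ell=3$, $\{1,2\}$ for $\ell=4$, and all of $\{0,1,2\}$ for $\ell\ge 5$, matching each $A_{y_P}$; paths with an endpoint in $X_1$ and cycles containing at most one vertex of $X_{\ge 3}$ fall under the same analysis. This casework coincides with that of \cite{campbell2023decompositions}, because along a path the star-forest constraint never involves a star larger than $K_{1,2}$ and is therefore insensitive to $k$.

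The main obstacle is exactly the interface between the two halves above: one has to be sure that reserving a single matching edge at each degree-$(k+1)$ vertex, and allowing an optional one at each vertex of degree $3,\dots,k$, is always simultaneously realisable together with the bounded star structure — in other words that the $k$-boundedness of $F$ at a branch vertex is captured exactly by the set $A_x$ rather than merely bounded by it. Once the centre/leaf dichotomy at branch vertices is established, the remainder reduces to the same bounded casework as in the $k=2$ reduction, so I would present the branch-vertex computation in full and defer the path- and cycle-length bookkeeping to \cite{campbell2023decompositions}.
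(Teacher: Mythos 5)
Your overall strategy coincides with the paper's: read $S$ off the decomposition by placing $xy_P\in S$ exactly when the edge of $P$ at $x$ lies in $M$, verify the branch-vertex constraints directly (your treatment of these, via $\Delta(G)\le k+1$, $d_F(x)\le k$ and the centre/leaf dichotomy, is correct and is the only part the paper itself details), and defer the path casework to \cite{campbell2023decompositions}. However, there is a genuine gap at the sentence asserting that ``paths with an endpoint in $X_1$ and cycles containing at most one vertex of $X_{\ge 3}$ fall under the same analysis.'' They do not. If $P$ is a pendant edge $xz$ with $x\in X_{\ge 3}$ and $z\in X_1$, then $P\in\cP_1$ and so $A_{y_P}=\{2\}$; but $H_G$ is bipartite with one side equal to $X_{\ge 3}$, so $z$ contributes no edge of $H_G$ and $d_{H_G}(y_P)=1$. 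Hence your $S$ (indeed, any $S$) satisfies $d_S(y_P)\le 1\neq 2$, and the verification fails. The casework you import from \cite{campbell2023decompositions} establishes the claimed feasible sets only for paths whose two endpoints both lie in $X_{\ge 3}$ (for $\ell=1,2$ it is exactly the ``leaf endpoint has $\ge 2$ further edges'' argument that forces the values $\{2\}$ and $\{1\}$, and that argument is unavailable when an endpoint is a leaf of $G$); for leaf-ended paths the derivable values are $\{0,1\}$ regardless of length, since the pendant edge may lie in $F$ as part of the star centred at $x$. A smaller issue of the same kind: for a cycle $P$ through a unique $x\in X_{\ge 3}$, ``the edge of $P$ incident to $x$'' is not well defined (there are two), so your definition of $S$ needs a convention there.

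This gap is not repairable within your write-up, because the statement read literally over \Cref{constr:ReductionStarForestSmallGap} is false. Take $G=K_{1,3}$: it is a yes-instance of $k$-MBSFD for every $k\ge 2$ (two edges in the star forest, one in the matching), yet its SGGF instance contains three vertices $y_P$ with $A_{y_P}=\{2\}$, each of degree $1$ in $H_G$, so it is infeasible; a $C_4$ component gives a similar counterexample with an isolated $y_P$ and $A_{y_P}=\{1,2\}$. In fairness, the paper's own ``proof'' is only a sketch deferring the same casework to \cite{campbell2023decompositions} and glosses over this point as well; evidently the construction must treat elements of $\cP$ with an endpoint in $X_1$, and components containing no vertex of $X_{\ge 3}$, by separate (weaker) degree sets or by preprocessing, as the source reduction for $k=2$ must do. A complete proof of the lemma therefore has to either amend the construction or handle these degenerate members of $\cP$ explicitly; detecting that mismatch, rather than waving those cases through, is precisely what carrying out the casework is for.
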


\Cref{thm:DecompositionMatchingStarForest} then follows from \Cref{lem:SmallGapToMBSFD}, \Cref{lem:MBSFDToSmallGap} and \Cref{thm:SmallGapisP}.

\begin{proof}[Proof ideas for \Cref{lem:SmallGapToMBSFD} and \Cref{lem:MBSFDToSmallGap}]
    Let $G$ be a graph and $\left(H_G, \{A_v : v\in V(H_G) \} \right)$ be an instance of SGGF as in \Cref{constr:ReductionStarForestSmallGap}.
    The basic idea of the reduction relies on an observation that a valid decomposition of $G$ into a matching $M$ and a $k$-bounded star forest $S_k$ is equivalent to a decomposition of the edges adjacent to vertices of degree at least $3$, in which the edges in $M$ are a solution to $\left(H_G, \{A_v : v\in V(H_G)\} \right)$.
    In other words, choosing the matching edges amongst the edges of the stars with centres at $X_{\ge 3}$, such that it satisfies the conditions for the elements in $\cP$ and in $X_{\ge 3}$ as given by $\{A_v : v\in V(H_G)\}$, is sufficient and necessary for constructing a decomposition of $G$ into a matching and a $k$-bounded star-forest.
    Hence, $G$ being a yes-instance of $k$-MBSFD implies $\left(H_G, \{A_v : v\in V(H_G) \} \right)$ being a yes-instance of SGGF, and vice-versa.

    By looking at the choice of $\{A_v : v \in V(H_G) \}$ one can see that vertices $x \in X_{\ge k+1}$ of $G$ must be adjacent to a matching edge, and vertices $x \in X_{\ge 3} \setminus X_{\ge k+1}$ in $G$ must be adjacent to at most one matching edge.
    If $k=2$ the latter is clearly an emptyset, which demonstrate the only difference in our construction.
    The proof of the reduction in~\cite{campbell2023decompositions} follow a casework which shows the above correspondence between a solution $S$ to the SGGF problem, and a set of matching edges chosen from the edges adjacent to vertices in $X_{\ge 3}$.
    Separating the case of vertices in $X_{\ge 3}$ into two cases, as in \Cref{constr:ReductionStarForestSmallGap}, $X_{\ge k+1}$ and $X_{\ge 3} \setminus X_{\ge k+1}$, adds one more case to check, which is straightforward.
    This, in fact, is the only difference between the proof of our reduction and the one in~\cite{campbell2023decompositions}.
\end{proof}

\section{Concluding remarks and open problems}\label{sec:ConcludingRemarks}

While our constructions in Section \ref{sec:NPComplete} require $k$ to be finite, we remark that it is possible to alter Campbell, H{\"o}rsch and Moore's \cite{campbell2023decompositions} construction into proving that Planar \blfd{\infty}{1} is NP-complete.
The only two differences would be to reduce from Planar $(\leq 3, 3)$-SAT rather than from $(3,B2)$-SAT, and to use simple graph \mfor{8} and \ffor{1}{8} from our paper instead of the multigraph forcers from \cite{campbell2023decompositions}.

As the complexity of \blfd{k}{l} has now been fully characterized, we would like to direct the reader to recent open questions concerning the existence of certain decompositions of graphs into linear forests and a matching.
The first comes from the paper of Campbell, H{\"o}rsch and Moore \cite{campbell2023decompositions} and still concerns subcubic planar graphs:

\begin{problem}[Problem $4$ in \cite{campbell2023decompositions}]
What is the minimum integer $\alpha$ such that every planar subcubic graph of girth at least $\alpha$ can be decomposed into a linear forest and a matching?
\end{problem}

So far it is known only that $6 \leq \alpha \leq 9$. As for a subcubic graph, a decomposition into a matching and a linear forest is stronger than a $3$-edge colouring, then note that improving $\alpha$ to something smaller than $9$ would imply results on $3$-edge colourings (see, for instance \cite{bonduelle22edgecolourings,kronk1974line}).

The other open problem is related to the Linear Arboricity Conjecture. Bonamy, Czy{\.z}ewska, Kowalik and Pilipczuk~\cite{bonamy2023partitioning} studied the cases when $\Delta(G)$ when $G$ is a planar graph of odd maximum degree, searching for even stronger conditions on the $\frac{\Delta(G)+1}{2}$ linear forests $G$ decomposes into. They conjectured that any planar graph $G$ of odd maximum degree $\Delta \ge 7$ can be decomposed into $\frac{\Delta-1}{2}$ many linear forests and a matching. They proved their conjecture for odd $\Delta \geq 9$, leaving only one case open:

\begin{conjecture}[Conjecture $2$ in \cite{bonamy2023partitioning}]
Any planar graph $G$ of maximum degree $7$ can be decomposed into $3$ linear forests and one matching.
\end{conjecture}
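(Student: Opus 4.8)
The plan is to follow the discharging framework of Bonamy, Czy\.zewska, Kowalik and Pilipczuk~\cite{bonamy2023partitioning} that settled the odd cases $\Delta \ge 9$, and to push it down to $\Delta = 7$. Note first that the statement strengthens a known fact: planar graphs of maximum degree $7$ have linear arboricity at most $\lceil (\Delta+1)/2 \rceil = 4$, and the point is to force one of these four linear forests to be a \emph{matching}. I would argue by contradiction, letting $G$ be a counterexample minimising $|V(G)| + |E(G)|$ among planar graphs of maximum degree $7$ with no decomposition into three linear forests and a matching. Standard reductions let me assume $G$ is connected (and, after dealing with cut-vertices, essentially $2$-connected), so it carries a fixed planar embedding. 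The key arithmetic observation is that the target decomposition is \emph{tight} at the top degree: a vertex of degree $7$ can receive at most two edges into each of the three linear forests and at most one into the matching, totalling exactly $7$, so every degree-$7$ vertex is \emph{saturated} and uses its one matching edge and precisely two edges in each forest. This rigidity, which is absent once a fourth linear forest is available as in the $\Delta \ge 9$ case, is what makes $\Delta = 7$ delicate and must be exploited rather than fought.

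First I would assemble a family of \emph{reducible configurations}, local substructures that cannot occur in $G$. To show that a configuration is reducible one deletes it (or a few incident edges and vertices), invokes minimality to decompose the smaller graph, and then extends the decomposition across the deleted part. The extension is the technical heart of the argument: along each linear forest the boundary edges can be re-routed by alternating-path (Kempe-type) swaps, while the single matching edge at each saturated vertex must be placed so that no degree-$7$ vertex is left without its matching edge, and so that no linear forest acquires a vertex of degree $3$ or a cycle. Because there is only one matching, its placement is the scarce resource, and the bulk of the work is a careful case analysis verifying that low-degree vertices, pairs of adjacent low-degree vertices, and short faces can always be reduced under this single-matching bottleneck.

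Next I would run a discharging argument on the embedding. Assigning to each vertex $v$ the initial charge $d(v) - 4$ and to each face $f$ the charge $\ell(f) - 4$, Euler's formula gives a total charge of $-8$. One then designs rules that move charge from degree-$7$ (and other high-degree) vertices and from long faces towards the low-degree vertices and short faces that would otherwise be negatively charged. Using the absence of the reducible configurations established above, the goal is to show that after redistribution every vertex and every face carries non-negative charge, contradicting the total of $-8$ and hence ruling out the existence of $G$.

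The main obstacle I anticipate is precisely the reducibility step under the tight degree-$7$ constraint. With only three linear forests, the alternating-path swaps needed to create room for a matching edge have very little slack, so a configuration that is comfortably reducible for $\Delta \ge 9$ may fail here; one likely needs a substantially larger and more intricate family of reducible configurations together with matched discharging rules to cover every case. A tempting shortcut is to start from an edge-colouring, since planar graphs of maximum degree $7$ are Class~$1$ and thus $G$ has a proper $7$-edge-colouring into matchings $M_1,\dots,M_7$; it would then suffice to pair six of these colour classes into three unions that are each cycle-free (hence linear forests), leaving the seventh as the matching. The difficulty with this route is exactly the two-coloured alternating cycles: guaranteeing a cycle-free pairing, or destroying all such cycles by Kempe swaps without creating new ones, does not obviously succeed and appears to need the same planar structural input as the discharging approach. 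I therefore expect the honest route to be the discharging argument, with the reducible-configuration analysis as the decisive hurdle.
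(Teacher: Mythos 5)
This statement is not proved in the paper at all: it is Conjecture~2 of Bonamy, Czy\.zewska, Kowalik and Pilipczuk, quoted in the concluding-remarks section precisely because it is \emph{open} (those authors settled the odd cases $\Delta \ge 9$, leaving $\Delta = 7$ unresolved). So there is no proof of the paper's to compare yours against, and the relevant question is only whether your argument stands on its own.

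It does not, and you essentially say so yourself. What you have written is a research plan, not a proof: the family of reducible configurations is never exhibited, the discharging rules are never stated, and the verification that every configuration can be reduced under the tight degree-$7$ constraint --- which you correctly identify as the decisive hurdle --- is not carried out. The one concrete mathematical content, that a degree-$7$ vertex is saturated (at most $2$ edges per linear forest plus $1$ matching edge, totalling exactly $7$), is a correct but elementary observation that by itself yields nothing; the same is true of the remark that the Class~$1$ shortcut founders on two-coloured alternating cycles. Minimality-plus-discharging is indeed the natural framework here, but until the reducible configurations and discharging rules are written down and every case is checked, there is no proof, only an (entirely reasonable) outline of where one might look for it. As it stands, the proposal leaves the conjecture exactly as open as the paper does.
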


\section*{Acknowledgement}

The first, third and fifth authors would like to thank IMPA for their hospitality during a visit in which this work started.
The authors are thankful to Professor B\'{e}la Bollob\'{a}s and Professor Rob Morris, who form the set of the authors' supervisors, for their valuable input. 

The second author is partially supported by the Coordenação de Aperfeiçoamento de Pessoal de Nível Superior, Brasil (CAPES).
The third author is supported by Trinity College, Cambridge.
The last two authors are supported by EPSRC (Engineering and Physical Sciences Research Council).
The fourth author is also supported by the Department of Pure Mathematics and Mathematical Statistics of the University of Cambridge, and the fifth author is also supported by the Cambridge Commonwealth, European and International Trust.

\bibliographystyle{abbrvnat}  
\renewcommand{\bibname}{Bibliography}
\bibliography{main}

\end{document}